\newtheorem{theorem}{Theorem}[section]
\newtheorem{proposition}[theorem]{Proposition}
\newtheorem{lemma}[theorem]{Lemma}
\newtheorem{corollary}[theorem]{Corollary}
\theoremstyle{remark}
\newtheorem{remark}[theorem]{Remark}
\theoremstyle{plain}
\newcommand\wt\widetilde
\newcommand\sheaf\mathcal
\newcommand\complex\mathscr
\newcommand{\exterior}[1]{\ensuremath{{\textstyle\wedge}^{\! #1 }}}
\newcommand\lto\longrightarrow
\newcommand{\Sp}[1]{\ensuremath{\text{Splittings} #1  }}
\newcommand\Hom{\ensuremath{\text{Hom}}}
\newcommand\Ker{\ensuremath{\text{Ker}}}
\newcommand\Pic{\ensuremath{\text{Pic}}}
\newcommand\rank{\ensuremath{\text{rank}}}
\newcommand\Gr{\ensuremath{\text{Gr}}}
\newcommand\Id{\ensuremath{\text{Id}}}
\newcommand\Tot{\ensuremath{\text{Tot}}}
\newcommand\Aut{\ensuremath{\text{Aut}}}
\newcommand\Der{\ensuremath{\text{Der}}}
\newcommand\Isom{\ensuremath{\text{Isom}}}
\newcommand\PP{\ensuremath{\mathbb P}}
\newcommand\CC{\ensuremath{\mathbb C}}
\newcommand\ZZ{\ensuremath{\mathbb Z}}
\renewcommand\AA{\ensuremath{\mathbb A}}
\newcommand\M{{\ensuremath{\mathcal{M}}}}
\newcommand\D{{\ensuremath{\mathcal{D}}}}
\newcommand\sM{{\ensuremath{\mathcal{SM}}}}
\newcommand\SM{{\ensuremath{\mathfrak{M}}}}
\newcommand\A{{\ensuremath{\mathcal{A}}}}
\renewcommand\S{{\ensuremath{   {\mathcal{S}}}}}
\newcommand\OO{{\ensuremath{\mathcal{O}}}}
\def\hat{\widehat}
\def\O{{\mathcal O}}
\def\Z{\ZZ}
\def\T{{\mathcal T}}
\def\L{{\mathcal L}}
\def\bar{\overline}
\def\red{{\mathrm{red}}}
\def\C{{\Bbb C}}
\def\tilde{\widetilde}
\begin{document}
\bibliographystyle{hep}

\parindent=0cm
\parskip=0.5cm

\title[Supermoduli Space is Not Projected]{Supermoduli Space is Not Projected}
\author{Ron Donagi}
\address{Department of Mathematics, University of Pennsylvania, Philadelphia, PA 19104}
\email{donagi@math.upenn.edu}
\author{Edward Witten}
\address{Institute for Avanced Study, Princeton NJ  08540}
\email{witten@ias.edu}

\date{\today}

\begin{abstract}
We prove that for genus $g\geq 5$, the moduli space of super Riemann surfaces is not projected (and in particular is not split): it cannot be holomorphically projected to its
underlying reduced manifold.  Physically, this means that certain approaches to superstring perturbation theory that are very
powerful in low orders have no close analog in higher orders.  Mathematically, it means that the moduli space of super Riemann
surfaces  cannot be constructed in an elementary way starting with the moduli space of ordinary Riemann surfaces. It has a life of its own.
\end{abstract}
\maketitle
\newpage
\tableofcontents
\thispagestyle{empty}
\newpage

\section{Introduction}
Ordinary geometry has a generalization in $\ZZ_2$-graded supergeometry, which is the arena for supersymmetric
theories of physics.  In this generalization, ordinary manifolds are replaced by supermanifolds, which are endowed 
with $\ZZ_2$-graded rings of functions.  
In addition to a vast physics literature, supermanifolds have also been much studied mathematically; for example,
see  \cite{Green, Berezin, Manin, Vaintrob, Rothstein, Onishchik}.   A basic example of a supermanifold is a super Riemann surface, 
which for our purposes  is a complex supermanifold
of dimension $(1|1)$ with a superconformal structure, a notion that we explain in section 3.  See for example \cite{BS1, BS2, F, M2,RSV,SRSnotes} (and see \cite{DRS} for
the generalization to super Riemann surfaces of dimension $(1|n)$, $n>1$).

Mathematically, the theory of super Riemann surfaces and their moduli spaces generalizes the theory of ordinary Riemann
surfaces and their moduli spaces in a strikingly rich way.
Physically, the main importance of super Riemann surfaces is their role in superstring perturbation theory.
Perturbative calculations in superstring theory are carried out by integration over the moduli space $\SM_g$ of super Riemann
surfaces, and its analogs for super Riemann surfaces with punctures.  See for example \cite{DPh,Revisited}.   We will write $\SM_{g,n}$ for
the moduli space of genus $g$ super Riemann surfaces with $n$ marked points or Neveu-Schwarz punctures (we do not consider in this
paper the more general moduli spaces of super Riemann surfaces with Ramond punctures).

Both physically and mathematically, one of the most basic questions about $\SM_g$ is whether it can be projected holomorphically
to its reduced space, which is the moduli space $\sM_g$ that parametrizes ordinary Riemann surfaces with a spin structure.  (By such a projection,
one means a holomorphic map that is left inverse to the natural inclusion of $\sM_g$ in $\SM_g$.)
A complex supermanifold that can be projected holomorphically to its reduced space is said to be projected; if the fibers of the projection
are linear in a certain sense that will be described later, the supermanifold is said to be split.  

Mathematically, if $\SM_g$ is split, this means that it can be reconstructed from the purely bosonic moduli space $\sM_g$ in an elementary fashion and, in a sense,
need not be studied independently.  Physically, if $\SM_g$ is split -- or at least projected -- then a possible strategy in 
superstring perturbation theory is to integrate over $\SM_g$ by first integrating over the fibers of its projection to $\sM_g$.
Indeed, practical calculations in superstring perturbation theory -- such as the $g=2$ calculations that are surveyed in \cite{DPhgold} --
are usually done in this way.  

However, there has been no evidence that $\SM_g$ is split, or even projected,
in general.  The validity of superstring perturbation theory does not depend on a projection, so the existence of string theory gives no hint that
$\SM_g$ is projected. The existence of holomorphic projections for
small $g$ follows from the cohomological nature of the obstructions to splitting and the nature of the reduced spaces $\sM_g$ for
small $g$, and gives little indication of what happens for larger $g$.
The goal of the present paper is to show that actually $\SM_g$ is not projected or split in general.

In fact, we show the following:
\begin{theorem}\label{Main1}
The supermanifold $\SM_g$ is non-projected, and in particular non-split, for $g \geq 5$.
\end{theorem}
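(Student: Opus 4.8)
**

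The plan is to prove non-projectedness by exhibiting a single non-vanishing obstruction class, which suffices since "projected" is strictly weaker than "split": if $\SM_g$ is projected then all higher obstruction classes can be arranged to vanish, but the \emph{first} obstruction is intrinsic, so it is enough to show that first obstruction is nonzero. Concretely, for a complex supermanifold $\mathfrak{M}$ with reduced space $M$ and with a rank-$r$ odd "normal bundle" $\mathcal{F}$ (here $M=\sM_g$ and $\mathcal{F}$ is the vector bundle on $\sM_g$ whose fibers are the odd tangent directions, which one identifies via superconformal deformation theory with $H^0(\Sigma, T_\Sigma^{1/2})$-type data, i.e.\ a bundle built from the spin structure), the obstruction to projectedness lives in $H^1(M,\, T_M \otimes \exterior{2}\mathcal{F}^\vee)$. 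So the first step is to recall/derive this: set up the filtration of $\OO_{\mathfrak M}$ by powers of the nilpotent ideal $\mathcal J$, write $\mathcal J/\mathcal J^2 \cong \mathcal F^\vee$, and identify the primary obstruction $\omega \in H^1(M, T_M \otimes \exterior 2 \mathcal F^\vee)$ as the class of the extension $0 \to \mathcal J^2/\mathcal J^3 \to \OO_{\mathfrak M}/\mathcal J^3 \to \OO_{\mathfrak M}/\mathcal J^2 \to 0$ compared against the split model. I would state this as the reduction: it suffices to show $\omega(\SM_g) \neq 0$ for $g\ge 5$.

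The second step is to make $\omega(\SM_g)$ computable. The idea is \textbf{restriction and specialization}: the obstruction class is natural, so restricting $\SM_g$ to any sub-locus $Z \hookrightarrow \sM_g$ pulls $\omega$ back to the obstruction class of the restricted family; if the pullback is nonzero, so is $\omega$. The natural choice of $Z$ is a boundary-type or small-genus locus where everything is computable — e.g.\ take $Z$ to be (a neighborhood in $\sM_g$ of) a locus of curves obtained by attaching a fixed configuration to a varying genus-$2$ super Riemann surface, or more cleanly, use the fact that $\SM_{g}$ near the boundary looks like a product/gluing of lower pieces, so that $\omega$ on a suitable boundary stratum is expressed through $\omega$ on $\SM_{0,n}$'s and $\SM_{1,1}$'s, where genus zero with punctures is the arena in which the cohomology $H^1$ and the relevant bundles ($T$, $\mathcal F$ built from half-canonical $\OO(-1)$-type sheaves on $\PP^1$) are small and explicit. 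Here one needs the deformation theory of super Riemann surfaces developed in section 3: the identification $T_{[\Sigma]}\SM_g \cong H^1(\Sigma, \mathcal{S}^\vee)$ (super tangent) together with the even/odd splitting $H^1(\Sigma, \mathcal S^\vee)_{\mathrm{ev}} \cong H^1(C, T_C(-D))$-type and $H^1(\Sigma,\mathcal S^\vee)_{\mathrm{odd}} \cong H^0(C, K_C^{3/2}(\cdots))^\vee$, so that $\mathcal F$ is (dual to) the bundle of odd deformations, a bundle of $\vartheta$-characteristics data over $\sM_g$.

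The third and genuinely hard step is the \textbf{non-vanishing}: one must produce an explicit curve/spin-structure configuration and an explicit class in $H^1$ of the relevant sheaf on it and show $\omega$ evaluates nontrivially. I expect this to hinge on (a) a careful description of the super period/gluing map near a boundary divisor of $\SM_g$ — i.e.\ how a Neveu–Schwarz degeneration contributes a factor and couples the odd modulus of one component to the even (super)modulus of the plumbing, producing exactly a term in $T_M \otimes \exterior 2 \mathcal F^\vee$; and (b) a cohomological computation on the normalization (a curve of genus $g-2$ or a nodal $\PP^1$-configuration) showing the resulting Čech cocycle is not a coboundary, which is where the bound $g \ge 5$ must enter — presumably because one needs enough genus (equivalently, a large enough-dimensional $H^0$ of a theta-characteristic twist, or enough boundary strata) for the relevant $H^1$ to be nonzero and for the specific cocycle to survive. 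The main obstacle is thus precisely this: pinning down the obstruction cocycle concretely enough (via an explicit atlas of superconformal coordinate patches on a degenerating family) to evaluate it, and verifying it is not a coboundary; I would organize the remaining sections around (i) the general obstruction formalism, (ii) the deformation theory and the structure of $\mathcal F$ over $\sM_g$, (iii) the behavior of $\omega$ under boundary restriction, and (iv) the explicit genus-$5$ (and general $g\ge5$) computation, likely reduced to a computation on $\SM_{0,n}$ for $n$ around $5$ or $6$ where the $\exterior 2 \mathcal F^\vee$-valued cohomology can be made fully explicit.
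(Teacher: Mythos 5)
Your opening reduction is fine and matches the paper: it suffices to show the first obstruction $\omega_2(\SM_g)\in H^1(\sM_g,\Hom(\exterior{2}T_-,T_+))$ is nonzero, and the way to do that is to restrict to a concrete, computable family of spin curves. But the family you propose to restrict to is where the argument breaks. A ``locus of curves obtained by attaching'' pieces, plumbing/gluing data, and strata where $\omega$ is ``expressed through $\omega$ on $\SM_{0,n}$'s and $\SM_{1,1}$'s'' all live in the \emph{boundary} of the Deligne--Mumford compactification, not in $\sM_g$ itself. The class $\omega_2(\SM_g)$ is a class on the open moduli space; a nonzero computation on a boundary stratum would at best show that the \emph{compactified} supermoduli space is non-projected, which is a strictly weaker statement (the restriction $H^1(\overline{\sM}_g,\cdot)\to H^1(\sM_g,\cdot)$ has no reason to be injective), and the paper is explicit that its theorem is about the uncompactified space, with no appeal to the boundary. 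Moreover, even for a legitimate compact subvariety $Z\subset\sM_g$, your statement that the pullback of $\omega$ ``is'' the obstruction of the restricted family, so that nonvanishing of the latter gives nonvanishing of the former, conflates two classes living in different groups. The correct statement is the Compatibility Lemma $j(\omega_2(S'))=\iota(\omega_2(S))$, and to conclude $\omega_2(\SM_g)\not=0$ from $\omega_2$ of a subfamily you must also know that $j$ is injective; in the paper this is exactly where the real work is, supplied by the splitting of the normal bundle sequence of the Hurwitz-type locus inside $\M_g$ (proved by a Galois-closure/isotypic-decomposition argument), and, in the punctured case, by the generic vanishing $H^0(C,K_C^{1/2})=0$ for even spin structures. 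This ingredient is absent from your plan.

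Concretely, the paper's route is: (i) prove $\omega_2\not=0$ for $\SM_{g,1}$ with even spin structure by exhibiting an explicit non-split $(1|2)$-dimensional family $X_\eta\to\CC^{0|1}$ (the universal curve pulled back along an odd tangent vector $\eta$, whose obstruction class \emph{is} $\eta$), together with the injectivity of $j$ noted above; (ii) build, inside $\SM_g$, a compact family: the locus of degree-$d$ covers of a genus-$2$ curve with a single branch point of local ramification degree $3$, lifted to super Riemann surfaces via a branched-cover-plus-blowup construction (possible precisely because the local degree is odd); (iii) show the normal bundle sequence of this locus in $\M_g$ splits, and then apply the compatibility/splitting corollary to push the nonvanishing from $\wt{\SM}_{2,1}$ into $\SM_g$. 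In particular the bound $g\ge 5$ has nothing to do with a cohomological dimension count on boundary strata, as you guessed; it comes from $g=d(g_0-1)+2$ with $g_0=2$ and $d\ge 3$, i.e.\ from the smallest degree for which a cover with a single (odd) branch point exists. Your step (iii) as written is a program, not a proof, and the specific mechanism you propose for it would need to be replaced along these lines.
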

(We suspect that this result may hold for $g\geq 3$.)

Our second main result is:
\begin{theorem}\label{Main2}
The supermanifold $\SM_{g,1}$ is non-projected for $g \geq 2$, for the case of an even spin structure.
\end{theorem}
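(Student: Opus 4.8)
The plan is to prove Theorem~\ref{Main2} by computing the primary obstruction to projectedness of $\SM_{g,1}$ and showing that it does not vanish for any $g\geq 2$. As recalled in the preliminary sections, a complex supermanifold $\SM$ with reduced space $S$ and odd tangent bundle $\mathcal{F}$ (a holomorphic bundle on $S$ whose rank is the odd dimension) is automatically split to first order in the odd coordinates, and the first obstruction --- which already obstructs the weaker property of being \emph{projected}, not merely split --- is a cohomology class $\omega\in H^{1}(S,\,T_{S}\otimes\wedge^{2}\mathcal{F}^{*})$. So it suffices to show that, for an even spin structure and $g\geq 2$, the class $\omega$ is nonzero in $H^{1}(\sM_{g,1},\,T_{\sM_{g,1}}\otimes\wedge^{2}\mathcal{F}^{*})$.

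First I would fix the geometry. For an even spin structure the reduced space $S=\sM_{g,1}$ parametrizes triples $(C,\mathcal{L},p)$ with $\mathcal{L}^{\otimes 2}\cong K_{C}$, with $p\in C$, and with $h^{0}(C,\mathcal{L})=0$ on a dense open set; forgetting $p$ identifies $S$, up to the usual stack subtleties, with the universal spin curve $\pi\colon\mathcal{C}_{g}\to\sM_{g}$ carrying the relative spin bundle $\mathcal{L}$, whose square is the relative canonical bundle of $\pi$. The odd tangent bundle then sits in an exact sequence $0\to\mathcal{L}\to\mathcal{F}\to\pi^{*}R^{1}\pi_{*}\mathcal{L}^{-1}\to 0$, in which the line sub-bundle records the single odd modulus of the Neveu--Schwarz puncture --- the fibre of the spin line at $p$ --- and the rank-$(2g-2)$ quotient records the ``interior'' odd moduli inherited from $\SM_{g}$, namely the fibrewise $H^{1}(C,\mathcal{L}^{-1})$. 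The hypothesis that the spin structure is even is used exactly here: it gives $\pi_{*}\mathcal{L}=0$, so $R^{1}\pi_{*}\mathcal{L}^{-1}$ is locally free of the expected rank $2g-2$ and no spurious sections enter the transition data. Note that the odd dimension is $2g-1$ and the interior quotient is nonzero precisely when $g\geq 2$; for $g\leq 1$ the odd dimension is at most $1$ and $\SM_{g,1}$ is automatically split, which accounts for the stated range.

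The core of the argument is an explicit description of $\omega$. A genus-$g$ super Riemann surface with a Neveu--Schwarz puncture is assembled by gluing standard local charts $(z|\theta)$ across annuli by superconformal transition functions, while carrying along a marked point $(z_{0}|\theta_{0})$; differentiating the gluing data with respect to the even and odd moduli and expanding to second order in the odd variables produces a $\overline{\partial}$-closed $(0,1)$-form valued in $T_{\sM_{g,1}}\otimes\wedge^{2}\mathcal{F}^{*}$ whose class is $\omega$. Carrying this out, $\omega$ is a fixed nonzero universal multiple of a tautological class assembled from the spin bundle $\mathcal{L}$. To prove non-vanishing for all $g\geq 2$ I would restrict $\omega$ to a single fibre $C$ of $\pi$ (a fixed spin curve with the marked point allowed to vary), where the ambient group $H^{1}\!\big(C,(T_{\sM_{g,1}}\otimes\wedge^{2}\mathcal{F}^{*})|_{C}\big)$ is finite-dimensional and the restricted cocycle can be written down outright; its non-vanishing then comes down to the non-triviality of a natural extension of bundles on $C$ involving the spin line $\mathcal{L}|_{C}$ --- equivalently, to the facts that the interior odd moduli space $H^{1}(C,\mathcal{L}^{-1})$ is nonzero and the spin line has nonzero degree $g-1$ --- both of which hold exactly when $g\geq 2$.

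I expect the main obstacle to be carrying out this computation in full. First, disentangling the second-order superconformal gluing law, with the puncture and the superconformal constraint tracked at once, carefully enough to pin down an honest \v{C}ech or Dolbeault representative of $\omega$. Second, recognizing the resulting cocycle, after restriction to a fibre, as a genuinely nonzero class rather than a coboundary --- that is, matching it with an extension class already known to be non-split, and so pinning the non-vanishing to the elementary inequalities above. A subsidiary point, already contained in the general obstruction theory recalled earlier, is that a nonzero obstruction at this lowest (order-two) level cannot be absorbed by any automorphism of the split model that one might use to adjust a candidate retraction, so $\omega\neq 0$ genuinely precludes a holomorphic projection and not merely a splitting.
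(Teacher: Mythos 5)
Your proposal is a plan rather than a proof: the two steps you yourself flag as ``the main obstacle'' --- extracting an honest \v{C}ech/Dolbeault representative of $\omega$ from the second-order superconformal gluing data, and showing that its restriction to a test family is not a coboundary --- are exactly the mathematical content of the theorem, and they are only asserted (``$\omega$ is a fixed nonzero universal multiple of a tautological class''), never carried out. Worse, the criterion you offer for non-vanishing after restriction --- that $H^1(C,\mathcal{L}^{-1})=H^1(C,T_C^{1/2})\neq 0$ and $\deg\mathcal{L}=g-1\neq 0$ --- cannot be the right one, because it is blind to the parity of the spin structure, whereas the fibre-restriction test demonstrably gives nothing for odd spin structures: in the paper's version of this argument the pullback of $\omega$ to the relevant test family is compared, via the Compatibility Lemma (Corollary \ref{mila}), with the nonzero obstruction class of that family through a map $j:H^1(C,\Hom(T_C^{1/2},T_C))\to H^1(C,\Hom(T_C^{1/2},T_+))$, and $j$ is identically zero in the odd case, so the pulled-back class is not detected there at all. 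What makes the even case work is not a degree or dimension count but the generic vanishing $H^0(C,K_C^{1/2})=0$ for even spin curves, which is precisely what forces $j$ to be injective. You never identify this as the decisive input; instead you locate the use of evenness in the local freeness of $R^1\pi_*\mathcal{L}^{-1}$, which is a misdiagnosis: $h^1(C,T_C^{1/2})=h^0(C,K_C^{3/2})=2g-2$ for \emph{every} spin structure when $g\geq 2$, so that sheaf is locally free of rank $2g-2$ regardless of parity (it is $h^0(C,K_C^{1/2})$ that is parity-sensitive, and it enters at the injectivity step, not in the structure of the odd tangent bundle).

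A second, more structural point: restricting to a literal fibre of $\pi$ --- the fixed spin curve with only the marked point varying --- is a $(1|1)$ family and is automatically split (Corollary \ref{m|1}), so by itself it carries no obstruction; the test family must couple the puncture's odd modulus to an odd tangent direction $\eta\in H^1(C,T_C^{1/2})$ of $\SM_g$. This is exactly the role of the paper's $(1|2)$-dimensional family $X_\eta=\pi^{-1}(\CC^{0|1})$, whose obstruction class is computed to be $\eta$ itself (Lemma \ref{Id}); the theorem then follows from the Compatibility Lemma together with the injectivity of $j$, proved from the cohomology sequence of $0\to T_C\to T_+|_C\to W\otimes\OO_C\to 0$ and $H^0(C,K_C^{1/2})=0$ for a generic even spin curve. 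If you wish to complete your route --- a global cocycle from superconformal gluing identified with a tautological extension class (in spirit closer to the sequel \cite{DWtwo}) --- you must actually perform that computation and, at the non-vanishing step, replace your parity-independent criteria by the genuine mechanism above.
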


Once this is established, a simple argument gives
\begin{theorem}\label{Main3}
The supermanifold $\SM_{g,n}$ is non-projected for $g \geq 2$ and $g-1 \geq n \geq 1$.  This holds for both even and odd spin structures
if $g$ is odd and for  even spin structure if $g$ is even.
\end{theorem}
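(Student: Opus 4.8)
The plan is to deduce the general statement from Theorem \ref{Main2} by a forgetful-map argument. Recall that Theorem \ref{Main2} handles the case $n=1$ with an even spin structure, so the work is to propagate non-projectedness both to larger $n$ and to the parity bookkeeping claimed in the statement. First I would consider the forgetful morphism $\pi\colon \SM_{g,n}\to\SM_{g,1}$ that remembers only the first Neveu--Schwarz puncture (and stabilizes if necessary). The key observation is that non-projectedness is \emph{inherited} along such a map in the following sense: if $\SM_{g,n}$ admitted a holomorphic projection to its reduced space $\sM_{g,n}$, then by composing with $\pi$ and comparing with the reduced forgetful map $\sM_{g,n}\to\sM_{g,1}$, one would manufacture a projection of $\SM_{g,1}$ — provided $\pi$ has a section, or more precisely provided we can exhibit a sub-supermanifold of $\SM_{g,n}$ isomorphic to $\SM_{g,1}$ compatibly with the reductions. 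Adding marked points that are constrained to vary along a fixed curve, or pulling back along an embedding $\SM_{g,1}\hookrightarrow\SM_{g,n}$ obtained by placing the extra $n-1$ punctures at auxiliary fixed points, supplies such a section. The restriction $g-1\ge n$ is exactly what guarantees one has enough room on a genus-$g$ curve (with one puncture already fixed) to place $n-1$ further distinct NS punctures in a way that deforms independently, i.e. that the relevant normal-bundle/obstruction class restricts nontrivially.

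Concretely, the argument I would give runs as follows. Step one: reduce to showing that the obstruction class to projecting $\SM_{g,n}$ — an element of $H^1(\sM_{g,n}, T_{\sM_{g,n}}\otimes \wedge^2 \mathcal{F}^\vee)$ where $\mathcal{F}$ is the rank-$(0|n)$ ``odd'' part of the structure sheaf — restricts, under the closed embedding $\iota\colon\sM_{g,1}\hookrightarrow\sM_{g,n}$ covering the section above, to the corresponding obstruction class for $\SM_{g,1}$. This is a naturality/base-change statement for the first obstruction to splitting (in the sense of Green), and it is formal once one checks that $\iota$ lifts to a map of supermanifolds $\SM_{g,1}\to\SM_{g,n}$ with the expected effect on the associated graded. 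Step two: invoke Theorem \ref{Main2} to conclude that this restricted class is nonzero for even spin structure, hence the class upstairs is nonzero, hence $\SM_{g,n}$ is non-projected. Step three: handle the parity statement. For $g$ odd, I would use that the odd spin structures on a genus-$g$ curve can be reached from even ones by the appropriate component-hopping (or directly rerun the computation of Theorem \ref{Main2} with an odd theta characteristic, noting that the cohomological input that made the even case work — nonvanishing of a certain cohomology group on $\sM_{g,1}$ — is insensitive to parity when $g$ is odd, essentially because $h^0(C,\text{spin})$ has the same parity behavior needed). For $g$ even one only claims the even case, matching the hypothesis of Theorem \ref{Main2}.

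The main obstacle I anticipate is Step one: making precise that the embedding of moduli spaces lifts to the super level and that the first obstruction class behaves functorially under this lift. One must be careful that ``adding a puncture at a fixed point'' is a well-defined operation on \emph{super} Riemann surfaces — the puncture location is itself a point of the $(1|1)$-dimensional curve, so ``fixed'' must mean fixed in the reduced curve with the odd modulus left free, and one has to verify that this is compatible with the superconformal structure and does not accidentally kill the odd deformation one is trying to detect. A secondary subtlety is the stabilization issue when $n$ is close to $g-1$ and the auxiliary configuration is not automatically stable; here the bound $n\le g-1$ (rather than a weaker bound) is what I would expect to be doing real work, and I would check it against the dimension count for the space of NS punctures that deform freely. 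Once the functoriality of the obstruction class is in hand, the rest is bookkeeping, and the parity dichotomy drops out of the parity of $\dim H^0$ of the spin bundle exactly as in the $n=0,1$ cases.
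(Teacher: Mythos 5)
Your plan has three genuine gaps, and the middle one is fatal even on its own terms. First, the embedding you need does not exist: the forgetful map $\SM_{g,n}\to\SM_{g,1}$ has no holomorphic section, because there is no way to choose $n-1$ extra distinct points on a varying $1$-pointed genus-$g$ curve uniformly over moduli ("auxiliary fixed points" is not a construction). The paper sidesteps this by producing the extra points canonically: it takes $(C,p)$ of genus $2$ together with an unramified cyclic $(g-1)$-sheeted cover $\pi:\wt C\to C$, and marks the fiber $\pi^{-1}(p)$ on $\wt C$; this gives an embedding of the covering space $\wt{\SM}_{2,1}$ (not of $\SM_{g,1}$ itself) into $\SM_{g,g-1}$, with lower $n$ handled by forgetting points. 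Second, even granting an embedding, Corollary \ref{mila} only yields $j(\omega_2(S'))=\iota(\omega_2(S))$; to conclude $\omega_2(\SM_{g,n})\neq 0$ from $\omega_2$ of the submanifold you must prove $j$ is injective (equivalently, find a splitting of the normal bundle sequence, Corollary \ref{submanifold splits}). Your "the class restricts nontrivially" is exactly the unproved step. In the paper this is where all the work goes: for $n=g-1$ the locus is a component of the fixed locus of the $\ZZ/n$-action permuting the marked points, so the normal sequence splits into eigenbundles; for $1\le n<g-1$ one composes with the point-forgetting fibration and checks by a cohomological diagram that the composite is still an immersion, whence the splitting descends. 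This also shows your reading of the bound $n\le g-1$ ("room to place punctures") is not its actual role: the bound comes from $g=n+1$ in the covering construction plus forgetting points, and the argument genuinely fails at $n=0$ because the map to $\M_g$ is no longer an immersion.

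Third, the parity bookkeeping cannot be done the way you propose. The paper's remark after the proof of Theorem \ref{Main2} shows that argument fails identically for odd spin structures: there the relevant map $j$ is zero because $h^0(C,K_C^{1/2})$ is generically $1$-dimensional, and this has nothing to do with the parity of $g$, so "rerunning the computation with an odd theta characteristic" does not work for any $g$. In the paper, the parity dichotomy in Theorem \ref{Main3} is an output of the covering construction: the spin structure carried by $\wt C$ is the pullback $\pi^*L$ of an \emph{even} spin structure on the genus-$2$ base, and its parity is governed by the degree $n=g-1$ of the unramified cyclic cover (both parities occur when $n$ is even, i.e. $g$ odd; only even parity when $n$ is odd, i.e. $g$ even). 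So the theorem's parity statement is not obtained by proving an odd-spin analogue of Theorem \ref{Main2}, but by tracking how even spin structures downstairs produce spin structures of either parity upstairs.
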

We do not resolve the question of projectedness of $\SM_{g,n}$ for the case of even $g$, odd spin structure, and $n\geq 1$.

\remark{Though $\SM_g$ and $\SM_{g,n}$, like their bosonic counterparts, have natural Deligne-Mumford compactifications, we do not
consider the compactifications in the present paper.  Our assertion is that $\SM_g$ and $\SM_{g,n}$ are non-projected without
any consideration of the compactification. Asking for the compactification to be projected would only be a stronger condition.}\label{secondremark}

\remark{Although the upper bound on $n$ in Theorem \ref{Main3} is probably not optimal, one should expect to require some sort of upper bound precisely
because we do not consider the compactification of $\SM_{g,n}$ and thus we require the $n$ marked points to be distinct.  For $n>>g$, requiring the $n$ points
to be distinct very likely kills the obstructions to splitness, though we would expect the compactification of $\SM_{g,n}$ to be non-split.}\label{newremark}

This paper is organized as follows.
In section 2, we review the basic notions of supermanifold theory. We start with several constructions of new supermanifolds from old: as submanifolds, coverings, branched coverings,  blowups and blowdowns.  We explain the concepts of projection and splitting
and the cohomology classes $\omega_i$ that obstruct a splitting.   A useful result is the Compatibility Lemma \ref{mila}, comparing the leading obstruction
$\omega_2$ for a supermanifold $S$ and a submanifold $S'$. In fact, the part of this discussion that involves $\omega_2$  is considerably simpler than the full story, and is the only part that will be used in the rest of the paper. The reader interested  only in the main results of this work may safely skip our analysis of the higher obstructions.

In section 3, we introduce our main objects of study, namely super Riemann surfaces and their moduli spaces.  In sections 3.1 and 3.2 we review some of the basics of super Riemann surfaces and examine in particular the effects on them of the branched covering, blowup and blowdown constructions considered previously for supermanifolds. In section 3.3, we exhibit an explicit and basic example of a non-split supermanifold $X_{\eta}$. This has dimension $(1|2)$, and can be thought of as a family of super Riemann surfaces parametrized by a single odd parameter.

The  main theorems are proved in sections 4 and 5.
Let us outline some of the ideas of the proofs.

The  above-mentioned non-split supermanifold $X_{\eta}$ embeds naturally into $\SM_{g,1}$. Using the Compatibility Lemma and some standard algebraic geometry, we use this embedding to prove that  the leading obstruction to a projection of $\SM_{g,1,{\text{even}}}$ is non-vanishing, proving Theorem \ref{Main2}. The argument fails for odd spin structure.  However, by considering unramified covers of $X_\eta$, it is then possible to
deduce Theorem \ref{Main3}.

To prove Theorem \ref{Main1}, we follow a similar path. We describe a covering space $\wt{\SM}_{{g_0},1} \to \SM_{{g_0},1}$ parametrizing super Riemann sufaces of genus $g_0$ together with a particular type of branched covering, and we find 
an explicit embedding of supermoduli spaces, mapping $\wt{\SM}_{{g_0},1}$ into $\SM_g$.  In fact, we can find such an embedding for every $g \geq 5$. The normal bundle sequence for this
embedding splits, so non-splitness of $\SM_g$ follows from that of 
$\wt{\SM}_{g_0,1,{\text{even}}}$. In fact, we see that $\wt{\SM}_{g_0,1,{\text{even}}}$ is itself reducible. Conveniently, its two components go to the two components of $\SM_g$, so we are able to deduce non-splitness for both of the latter.

In the appendix we discuss in more detail the simplest instance of our family of branched coverings, namely the case $g_0=2, ~ g = 5$. We give an elementary construction of the families involved, check that both parities on $\sM_5$ arise from even parity on $\sM_{2,1}$, and analyze the parameter spaces in $\sM_5$ and in the ordinary (non-spin) moduli space $\M_5$. Somewhat surprisingly, they both turn out to be curves of genus 0.

In a sequel to this work \cite{DWtwo}, we 
will provide some further interpretations of  the first non-trivial class  $\omega := \omega_2$.   We show that this class is, in a certain sense, a superanalog of what in ordinary algebraic
geometry is the Atiyah class of the tangent bundle.  In the case of the moduli space of super Riemann surfaces, we give a concrete description of $\omega$ in terms
of sheaves on $C\times C$, where $C$ is an ordinary Riemann surface, and use this to give an alternative proof of Theorem \ref{Main2}.

\remark{As explained in section \ref{stacks}, the phrase ``moduli space'' is oversimplified in the present context.
The object  of study,  $\SM_g$, is really the moduli {\it stack} of super Riemann surfaces.
The first obstruction to splitting or projection of this stack is a cohomology class $\omega_2$ that can be restricted to any substack $S'$.
We consider a particular substack $S'$ of dimension $1|0$.    It has a finite cover $\pi:S\to S'$ that is a smooth compact curve that parametrizes a
family $X\to S$ of smooth  compact split super Riemann surfaces. We show that $\omega_2(\SM_g)$ is
nonzero by showing that $\pi^*(\omega_2(\SM_g)|_{S'})\not=0$. Passing from $S'$ to $S$ eliminates
the automorphisms of the objects parametrized by $S'$.


\section{Supermanifolds} \label{supermanifolds}

A {\em  supermanifold}, like an ordinary manifold, is a locally ringed space which is  locally isomorphic to a certain local model. 
A locally ringed space is a pair $(M,\OO)$ consisting of a topological space $M$ and a sheaf of algebras $\OO=\OO_{M}$ on it, whose stalk $\OO_x$ at each point $x \in M$ is a {\em local ring}. 
One type of example is affine space $\AA^m$, for which $M$ is an $m$-dimensional vector space, while $\OO$ can be the sheaf of functions on $M$ that are continuous, differentiable,  or analytic. (Here and elsewhere, we will always work in characteristic 0, in fact over the real or complex numbers.)
A manifold is then a commutative locally ringed space which is  locally isomorphic to one of these local models. In the complex case, we  also have the option of taking the functions in $\OO$ to be algebraic,
but we must then allow a larger collection of local models, namely all non-singular affine varieties, i.e. closed non-singular algebraic subvarieties of affine space.\footnote {In each of these cases, the stalk $\OO_x$ is the ring of germs of functions, i.e. functions defined on any neighborhood of $x$, where two functions are identified if their restrictions to some open subset coincide. This is indeed a local ring: the germs of functions that vanish at $x$ form the unique maximal ideal.}

A $\ZZ/2$-graded sheaf of algebras $A=A_0 \oplus A_1$ is {\em  supercommutative} if it is commutative up to the usual sign rule:
for $f \in A_i,   g \in A_j$, the rule is $fg = (-1)^{ij} gf$. 
Given a manifold $M$ and a vector bundle $V$ over it, we define the supercommutative locally ringed space 
$S :=S(M,V)$ to be the pair $(M,\OO_{S})$, where $ \OO_{S}$   is the sheaf of $\OO_{M}$-valued sections of the exterior algebra $\exterior{\bullet} V^{\vee} $ on the dual bundle $V^{\vee}. $  (If we interpret $V$ as a locally free sheaf of 
$\OO_M$-modules,  then
$ \OO_{S} $ is simply $\exterior{\bullet} V^{\vee}. $)  This sheaf $\OO_S$ is $\ZZ/2$-graded and  supercommutative, and its stalks are local rings.  The simplest example is {\em affine superspace} 
$\AA^{m|n} = (\AA^m, \OO_{ \AA^{m|n} })=S(\AA^m, {\OO_{\AA^m}}^{\oplus n})$, 
where $M$ is ordinary affine $m$-space and $V = \OO_M^{\oplus n}$ is the trivial rank $n$ bundle on it. 
A {\em  supermanifold} is then a supercommutative locally ringed space which is  locally isomorphic to some local model $S(M,V)$.
\footnote {In the continuous, differentiable or analytic settings, we may as well restrict our local models to the affine superspaces 
$\AA^{m|n}.$ In the algebraic setting we must allow all $S(M,V)$ with affine $M$, as in the bosonic algebraic case.}
It is {\em split} if it is globally  isomorphic to some $S(M,V)$. 

It is important to note that the isomorphisms above are isomorphisms of $\ZZ/2$-graded algebras (over the real or complex numbers).  They need not preserve the $\ZZ$-grading of $\exterior{\bullet} V^{\vee} $. So if $z$ is a function on $M$ and $\theta_i$ are fiber coordinates on $V$, then $z$ cannot go to $z + \theta_1$, but it can go to $z + \theta_1 \theta_2$. 

The definition of a supermanifold endows  the sheaf $\OO_S$ with the structure of a sheaf of (real or complex) algebras, and also with  a surjective homomorphism to $\OO_M$ (more on this below). It does not endow the sheaf $\OO_S$ with the structure of a sheaf of $\OO_M$-modules: multiplication by $\OO_M$ need not commute with the gluing isomorphisms. We say that the supermanifold $S$ is {\em projected} when the sheaf $\OO_S$ can be given the structure of a sheaf of $\OO_M$-algebras commuting with the given projection to $\OO_M$  . (It may be possible to do this in more than one way.)  Clearly, every split supermanifold is projected, but we will see that not every supermanifold is projected, and likewise there are obstructions for a projected supermanifold to be split.

The structure sheaf $\OO_S$ of a supermanifold $S$ contains the ideal $J$ consisting of all nilpotents. It is the ideal generated by all odd functions. Given a supermanifold $S$, we can recover its underlying manifold $(M,\OO_M)$, which we call the reduced space  ${S}_{{\mathrm{red}}}$, as well as a bundle $V$ on it: $\OO_M$ is recovered as $\OO_S/J$, while the dual $V^{\vee}$ is recovered as $J/{J^2}$. In fact, the supermanifold $S$ determines the split supermanifold $\Gr(S)$ whose reduced space is $M$ and whose structure sheaf is $\Gr_J(\OO_S) := \bigoplus_{i=0}^{\infty} (J^i/J^{i+1})$, and the latter  determines and is uniquely determined by the pair $(M,V)$:
\[
S  \implies \Gr(S) \Longleftrightarrow (M,V).   
\]
We  say that a supermanifold $S$ with $\Gr(S)\cong S(M,V)$
is modeled on $M,V$, where  $S(M,V)$ is the split supermanifold determined by the pair $M,V$ as above.
We will discuss in  section \ref{obstructions} how to characterize all supermanifolds $S$ that are modeled on $M,V$.

The definition of a supermanifold  leads more or less immediately to various standard notions: a morphism of supermanifolds, a submanifold (a term we use instead of the clunky `sub-supermanifold'), immersion and submersion, a product, a fiber product (or pullback), a fiber. (The fiber of a submersion of supermanifolds is a supermanifold. The fiber of a general morphism of supermanifolds is a possibly singular locally ringed superspace. In the algebraic world, it is a superscheme.) By a family of supermanifolds parametrized by a supermanifold $P$, we mean a submersion $\pi: S \to P$.  
If $0 \in P_{{\mathrm{red}}}$ is a point, the family is interpreted as a deformation of the fiber $S_0 := \pi^{-1}(0)$. It is also straightforward to define vector bundles, sheaves of $\OO_S$-modules, and so on. 

We have noted that a supermanifold $S$ determines its reduced space $M$.
Moreover, $M$ has a natural embedding in $S$; this corresponds to the  existence of
a projection from $\OO_S$ to $\OO_M=\OO_S/J$. In the other direction, a projection $S\to M$ would be equivalent to 
an embedding of $\OO_M$ in $\OO_S$, and also to endowing $\OO_S$ with an $\OO_M$-algebra structure. 
We emphasize that in general these structures do not exist: a general supermanifold may not be projected.

We note that the product of two split supermanifolds is split, and for any morphism $f:M' \to M$, if $S$ is a split or projected supermanifold with reduced space $M$, then $S' = f^*S$ is well-defined and it too is split or projected. 
Some of the simplest examples of supermanifolds are coverings and submanifolds, discussed below in section \ref{FirstExamples}. If $\wt{S} \to S$ is a finite covering map, it is not too hard to see (Corollary \ref{CoroCovering}) that $\wt{S}$ is split  if and only if so is ${S}$ (and projected if $S$ is).
However, a submanifold  $S'$ of a split supermanifold $S$ need not in general be split as a supermanifold. 

A sheaf $\mathcal{F}$ on a supermanifold $S$ can be viewed simply as a sheaf on the reduced space $M$, and sheaf cohomology on $S$ is defined to be the cohomology of the corresponding sheaf on the reduced space: 
$H^*(S, \mathcal{F}) := H^*(M, \mathcal{F})$.   This is the same definition that one uses for sheaf cohomology on a non-reduced scheme
in ordinary algebraic geometry, and as in that case,
 it is usually more illuminating to denote this
cohomology as $H^*(S,\mathcal{F})$ rather than $H^*(M,\mathcal{F})$.  For example, it is much more natural to think of a sheaf of $\OO_S$-modules
as a sheaf on $S$ rather than as a sheaf on $M$.

The tangent bundle $TS$ of the supermanifold $S$ is an example of such a sheaf of $\OO_S$-modules. It can be defined in terms of derivations: $TS := {\mbox{Der}}(\OO_S)$. 
It is a $\ZZ_2$-graded vector bundle, or locally free sheaf of $\OO_S$-modules. When $S = \AA^{m|n}$, $TS$ is the free $\OO_S$-module generated by even tangent vectors $\partial / \partial x_i, i = 1, \ldots, m$ and odd tangent vectors $\partial / \partial \theta_j, j = 1, \ldots, n$.
In general, $TS$ need not have such distinguished complementary even and odd subbundles: the even and odd parts are not sheaves of $\OO_S$-modules. But the  restriction $TS _{|M}$ of $TS$ to the reduced space $M=S_{{\mathrm{red}}}$ does split.
(By restriction to $M=S_{{\mathrm{red}}}$ we mean pullback, under the natural inclusion $M \to S$, 
of a sheaf of $\OO_S$ modules to a sheaf of $\OO_M$ modules. In this case, this is accomplished 
by setting all odd functions to $0$.)
We refer to the graded pieces of $TS _{|M}$ as the 
even and odd tangent bundles of $S$. Explicitly, these are given by:
\[
T_+S := TM,  \quad  \quad  T_-S :=V.
\]
$S$ is then modeled on the pair $M,V$, though of course it may not be isomorphic to the split model $S(M,V)$.
By definition, the {\em dimension} of $S$ is the pair $(m|n)$, where $m,n$ are the ranks of $T_{\pm}S$: $m :=\dim (M)$ and $n:=\rank (V)$.
We note that specifying a map $f_v: \CC^{0|1} \to S$, where $\CC^{0|1}$ is the $(0|1)$ dimensional affine superspace, is equivalent to specifying an odd tangent vector $v \in T_-S$, i.e. a point $p \in M$ and an odd tangent vector $v \in T_{-,p}S$ at that point.

\subsection{Examples of supermanifolds} \label{FirstExamples}
The simplest supermanifolds are the affine superspaces $\AA^{m|n}$, defined above. In the algebraic case, their global function ring is the polynomial ring  in $m$ commuting (even) variables $x_i, i = 1, \ldots, m$, and $n$ anticommuting (odd) variables $\theta_j, j = 1, \ldots, n$. In the other cases, the even part is extended to allow continuous, differentiable or analytic functions of the $x_i$, but the odd part is unchanged. In these cases, it is often convenient to illustrate arguments about supermanifolds by referring to these local coordinates $x_i, \theta_j$. As noted above, in the algebraic case a supermanifold is usually not locally isomorphic to affine superspace, but we can still use analytic local coordinates.

A less trivial example of  a complex supermanifold is complex projective superspace $\PP^{m|n}$, for $m,n \ge 0$. We can think of it globally, as a quotient, or locally, as pieced together from affine charts. The global description involves a quotient by the  purely even group $\CC^\times$, so we use homogeneous coordinates $x_0\ldots x_{m}|\theta_1\ldots\theta_n$,
subject to an overall scaling of all $x$'s and $\theta$'s by the same nonzero even complex parameter $\lambda \in \CC^\times$, and with a requirement that not all the bosonic coordinates
$x_{\alpha}$ are  allowed to   vanish simultaneously.  
 The local description specifies $\PP^{m|n}$ 
as the union of its open subsets $U_\alpha$, for $\alpha=0,\ldots,m$,  defined by the condition
 $x_\alpha\not=0$.
Each $U_\alpha$ can be identified with affine superspace $\AA^{m|n}$ by the ratios $x_{\beta}/x_{\alpha}$, $\beta \not=\alpha$, and $\theta_j/x_\alpha$, for $j=1,\ldots,n$. The gluing relations are  the obvious ones.  Note that for $m=0$ there is a unique open set $U_\alpha$, so $\PP^{0|n}$ is the same thing as $\AA^{0|n}$.

Next we describe three ways of constructing supermanifolds: as submanifolds, blowups, and branched covers.

\subsubsection{\it Submanifolds of supermanifolds}

One way to construct new supermanifolds from a given supermanifold is  by imposing
one (or more) equations. E.g. in $\PP^{m|n}$ impose:
 $P(z^0\dots z^n |\theta^1 \dots\theta^n)=0$, where $P$ is a homogeneous polynomial in the homogeneous coordinates of $\CC\PP^{m|n}$ that is either even or odd.  
If $P$ is even and sufficiently generic, this will give a complex supermanifold of dimension $m-1|n$.  For suitable odd $P$, it gives
a complex supermanifold of dimension 
$m|n-1$.
We reserve the name  {\em divisor} to the case of codimension $(1|0)$, i.e. when the defining polynomial $P$  is even.

\subsubsection{\it Coverings of supermanifolds}\label{Covers}

Given  a supermanifold $S=(M,\OO_S)$ it is straightforward to lift a finite unramified 
covering $f: \wt{M} \to M$ of the underlying reduced space $M$ to a finite unramified 
covering $F: \wt{S} \to S$. The reduced space is of course $\wt{M}$, and  the sheaf  $\OO_{\wt{S}}\to \wt{M}$ is the pullback $f^{-1}(\OO_S)$ of the sheaf $\OO_S$
over $M$.  This pullback has the local structure of the sheaf of functions on a supermanifold since the covering map $f$ is a local isomorphism.

\subsubsection{\it Branched coverings of supermanifolds}\label{covers}

A variant of the above allows us to construct branched coverings as well. Start with:
\begin{itemize}
\item a supermanifold $S=(M,\OO_S)$,
\item a  branched covering $f: \wt{M} \to M$ of the underlying reduced space $M$, with smooth branch divisor $B \subset M$, and
\item  a divisor $D \subset S$ whose intersection with $M$ is $B$. 
\end{itemize}

We construct a supermanifold $\wt{S}$ and a morphism $F: \wt{S} \to S$ whose branch divisor is $D$ and whose reduced version is $F_{{\mathrm{red}}} = f$. For the moment, assume that $S$ has global coordinates $(z_1,\dots,z_m|\theta_1,\dots,\theta_n)$, where $(z_1,\dots,z_m)$ are coordinates on $M$, the branch divisor $B$ is given by $z_1=0$, and the corresponding coordinates on $\wt{M}$ are $(w_1,z_2,\dots,z_m)$,
so that  pulling back by the branched covering $f: \wt{M} \to M$ sends $z_1$ to $(w_1)^k$ for some $k$. (We can always achieve this after restricting to sufficiently small open subsets.) Now the divisor $D$ is given by the vanishing of some even function $z'=z'(z,\theta)$ whose image modulo the $\theta$'s is $z_1$. This implies that the coordinate ring $(z_1,\dots,z_m|\theta_1,\dots,\theta_n)$ is also generated by $(z',z_2,\dots,z_m|\theta_1,\dots,\theta_n)$.  We can therefore construct $\OO_{\wt{S}}$ as the sheaf of $\OO_S$ algebras generated by $w'$, which is defined to be the $k$-th root of $z'$. The reduced space of the resulting $\wt{S}$ is naturally identified with $\wt{M}$: just send $w'$ to $w$, $z_i$ to themselves and the $\theta_j$ to $0$. Now this sheaf is unique up to an isomorphism which itself is unique up to a $k$-th root of unity (=a deck transformation of the covering). This allows us to patch the open pieces to obtain the desired global branched covering $F: \wt{S} \to S$.

The above construction extends to families of supermanifolds. In fact, a branched covering of a family of supermanifolds is a special case of a branched covering of a single supermanifold: given a family $\pi:S \to B$ and the above data, we momentarily forget $\pi$, so we get the branched covering $F: \wt{S} \to S$ of the total space of the family, and then we remember that $S$ (and hence also $\wt{S}$) are families over $B$.

\subsubsection{\it Blowups of supermanifolds}\label{blowup}

Starting with a supermanifold $X$ and its codimension $(k|l)$ submanifold $Y$, we construct the blowup $\wt{X}$ of $X$ along $Y$. Let $y_1,\ldots,y_{m-k}$ and $\eta_1,\ldots,\eta_{n-l}$ be coordinates on an open set $W \subset Y$, while $x_1,\ldots,x_k$ and $\theta_1,\ldots,\theta_l$ are normal coordinates to $Y$ in $X$, so that $x,y,\theta,\eta$ together form coordinates on an open $U \subset X$. We cover $U$ by affine open subsets $U_{\alpha}, \quad \alpha=1,\dots,k,$ given by $x_{\alpha} \neq 0$,
and replace these by new affines $\wt{U}_{\alpha}$ with coordinates:
\[
 y_1,\ldots,y_{m-k}, \quad 
\eta_1,\ldots,\eta_{n-l}, \quad 
x_{\beta}/x_{\alpha}, \beta \neq \alpha, \quad
x_{\alpha}, \quad
\theta_j/x_{\alpha}, j=1,\ldots,l.
\]
The $\wt{U}_{\alpha}$ are glued in the obvious way to give the blowup $\wt{X}$. (As in ordinary algebraic geometry, one can describe the blowup more intrinsically using the Proj construction, by means of which one can more generally blow up an arbitrary sheaf of ideals.)
As in  ordinary algebraic geometry, $\wt{X}$  comes with an {\em exceptional divisor} $E$ and a map $\pi: \wt{X} \to X$ such that $E$ becomes a bundle over $Y$ with fiber $\PP^{(k-1)|l}$ while $\wt{X} \setminus E$ maps isomorphically to $X-Y$. Particularly interesting is the case $k=1$, in which the underlying reduced manifold remains unchanged by the blowup, and only the odd directions are modified. We will encounter this in section \ref{SRS blowup}.

\subsection{Obstructions to splitting} \label{obstructions}

\mbox {   }
The obstructions to splitting of a supermanifold have been analyzed by a number of authors, including Green, Berezin, Manin, Vaintrob, Rothstein, Onishchik and others, cf. \cite{Green, Berezin, Manin, Vaintrob, Rothstein, Onishchik}. In this section  we will describe the space of all supermanifolds modeled on a given
$M,V$ in terms of the cohomology of a certain sheaf $G$ of non-abelian  groups, and explain how the condition for a supermanifold $S$ to be split is equivalent to vanishing of a certain sequence of abelian
cohomology classes.  We go into much more detail in this section than is needed for our later applications, which only depend on the leading
obstruction $\omega_2$.

\subsubsection{\it Green's cohomological description and the obstruction classes}
Let $V$ be a rank $n$ vector bundle on a manifold $M$, and  let $S(M,V)$  be the corresponding split supermanifold. 
In great generality, the set of all objects of some kind that are locally isomorphic to some model object is given as the first cohomology of the sheaf of automorphism groups of the model. When these  automorphism groups are non abelian, the first cohomology is not a group, only a pointed set: the ``point" corresponds to the model object itself.
We start by applying this principle to all supermanifolds with given reduced space $M$ and given 
odd dimension; all of these are locally isomorphic to $S(M,V)$, since any two vector bundles of the same rank on $M$ are locally isomorphic. 
We then restrict to obtain cohomological descriptions of 
(1)  isomorphism classes of pairs consisting of a supermanifold 
together with an isomorphism of its graded version with $S(M, V ),$  and 
(2) isomorphism classes of supermanifolds whose graded version is globally isomorphic 
to the given $S(M,V)$ (Green's theorem).

 By definition, a supermanifold $S$ with reduced space $M$ and  odd dimension $n$ is a sheaf of $\ZZ_2$-graded algebras on $M$ that is locally isomorphic to $\OO_{S(M,V)} = \exterior{\bullet} V^{\vee}$. 
Consider the sheaf of these local isomorphisms, namely the sheaf $\Isom(S(M,V),S)$ on $M$ whose sections on an open $U \subset M$ are the isomorphisms between $S(M,V)_{|U}$ and $S_{|U}$. In case $S=S(M,V)$, this becomes the sheaf of non-abelian groups:
\[
\Isom(S(M,V),S(M,V)) = \Aut(\OO_{S(M,V)})  = \Aut(\exterior{\bullet} V^{\vee}) ,   
\]
where $\Aut(\exterior{\bullet} V^{\vee})$ denotes the sheaf of automorphisms of the  $\ZZ_2$-graded sheaf of algebras $\exterior{\bullet} V^{\vee}$.  These automorphisms send $J$ to itself, so they act on $\OO_M
=\OO_{S(M,V)}/J$, but this action is trivial since these are automorphisms of sheaves on $M$.
We will describe the structure of $\Aut(\exterior{\bullet} V^{\vee})$ below: 
by \eqref{Gdef} it maps onto $\Aut(V)$ with  a kernel $G$, 
which in turn is filtered by subgroups $G^i$ 
with graded pieces given (cf. \eqref{drt}) by:
$G^i / G^{i+1} \cong
 T_{(-)^i}M \otimes \exterior{i}V^{\vee}$.

In general, since $S$ is locally isomorphic to $S(M,V)$,
$\Isom(S(M,V),S)$ is locally isomorphic to $\Aut(\exterior{\bullet} V^{\vee})$, i.e. it is 
a torsor (=principal homogeneous space) over  $\Aut(\exterior{\bullet} V^{\vee})$. Conversely, every such torsor determines a corresponding supermanifold.
The set of isomorphism classes of supermanifolds $S$ with a given reduced space $M$ and given odd dimension $n$ is therefore given by the first cohomology set 
\[
H^1(M,\Aut(\OO_{S(M,V)})) = H^1(M,\Aut(\exterior{\bullet} V^{\vee})).
\]
As noted above, since the group involved is non-abelian, this cohomology is not a group but only a set with a base point. The  base point corresponds to $S(M,V)$. (So far, we could have used a different rank $n$  bundle $V'$; this would have yielded another description of the same set of isomorphism classes of supermanifolds $S$ with the given reduced space $M$ and given odd dimension $n$, the only difference being that the base point would now be $S(M,V')$.)

Let $G = G_V$ be the kernel of the map that sends an automorphism of  $\exterior{\bullet} V^{\vee}$ to the induced automorphism of $V^{\vee} = \Gr^1(\exterior{\bullet} V^{\vee})$, or equivalently to the (transpose inverse) automorphism of of $V$:
\begin{equation}\label{Gdef}
1 \to G \to \Aut(\exterior{\bullet} V^{\vee}) \to \Aut(V) \to 1.  
\end{equation}
Equivalently, $G$ is the group of those automorphisms of the split model $S(M,V)$ that preserve both $M$ and $V$.

Consider a supermanifold $S=(M,\OO_S)$ with an isomorphism $\rho: V \cong  T_{-}S = \Gr^1(\OO_S)^{\vee} $, or equivalently
an isomorphism of $\ZZ$-graded sheaves of algebras:
\begin{equation}\label{zolg}\rho: \exterior{\bullet} V^{\vee} \cong \Gr(\OO_S). \end{equation}
We can compare the pair $(S,\rho)$ to $S(M,V)$, which comes with the natural isomorphism $\Id: V \cong T_{-}S(M,V)$. We find that the sheaf 
\[\
Isom( \  (S(M,V),\Id),~(S,\rho) \  ),
\]
consisting of those local isomorphisms that send $\Id$ to $\rho$, is  a torsor over $G$. 
The terms of the long exact sequence of cohomology sets of \eqref{Gdef}: 
\begin{align*}
1 &\to H^0(M,G) \to H^0(M,\Aut(\exterior{\bullet} V^{\vee})) \to H^0(M,\Aut(V)) \\
&\to H^1(M,G) \to H^1(M,\Aut(\exterior{\bullet} V^{\vee})) \to H^1(M,\Aut(V))
\end{align*}
therefore have the following interpretations:
\begin{itemize}
\item $H^1(M,\Aut(V)$ is the set of isomorphism classes of rank $n$ bundles on $M$, with the base point corresponding to $V$.
\item As  noted above, $H^1(M,\Aut(\exterior{\bullet} V^{\vee}))$ is the set of isomorphism classes of supermanifolds $S$ with reduced space $M$ and  odd dimension $n$. The map to $H^1(M,\Aut(V))$ sends $S$ to $T_{-}S$.
\item $H^1(M,G)$ is the set of isomorphism classes of pairs $(S,\rho)$ where $S$ is a supermanifold with reduced space $M$ and $\rho$ is an isomorphism $V \cong T_{-}S$.
\item The set of isomorphism classes of supermanifolds $S$ modeled on $M,V$ (the isomorphism is required to be the identity on the reduced space $M$, but can act on the odd directions $V$) is therefore identified with the quotient of $H^1(M,G)$ by $H^0(M,\Aut(V))$. The base point corresponds to $S(M,V)$. This is the main result of \cite{Green}. In the present paper we will use only the previous identification of $H^1(M,G)$ itself.
\end{itemize}

The group $G$ has a descending filtration by normal subgroups $G^i \quad ( i=2,3,\ldots )$. We give three descriptions of these subgroups: algebraic, geometric, and analytic. We then use these groups to describe an obstruction theory for the splitting of a supermanifold.

Algebraically,  we define:
\[
G^i = \{g \in G \quad | \quad g(x) - x \in J^i  \quad \forall x \in \wedge^{\bullet} V^\vee \}.
\]
One has $G^2 = G$, while $G^i$ is trivial if $i$ exceeds the odd dimension $n$ of $S$.
Modulo higher order terms, each $g \in G$ is a $\exterior{i}V^{\vee}$-valued, even derivation. 
In other words, there is a natural isomorphism for  $i \ge 2$:
\begin{equation}\label{drt}
G^i / G^{i+1} \cong
 T_{(-)^i}M \otimes \exterior{i}V^{\vee}.
\end{equation}
On the right hand side, $ T_{(-)^i}M \otimes \exterior{i}V^{\vee}$ is understood simply as a sheaf of abelian groups under addition.  This isomorphism is easiest to see from the geometric or  analytic descrptions of the $G^i$, which we give next.

Geometrically, we interpret the $G^i$ in terms of a filtration of  $S$ itself.
Given a supermanifold $S=(M,\OO_S)$ with nilpotent subsheaf $J\subset \OO_S$, it is convenient to introduce
\begin{equation}\label{Si}
S^{(i)} := (M, \OO_{S^{(i)}} =\OO_S/J^{i+1}).
\end{equation}
These $S^{(i)}$ are locally ringed subspaces of $S$, though they are not supermanifolds, except for the extremes $i=0,n$.  (They are superanalogs of non-reduced schemes in ordinary algebraic geometry. ) They form an increasing filtration of $S$:   
\[
S_{red} = S^{(0)} \subset S^{(1)} \subset \dots  \subset S^{(i-1)} \subset S^{(i)} \subset \dots \subset S^{(n)}  =S
\] 
Recall that automorphisms of the exterior algebra preserve $J$, hence they preserve$ J^i$ so they preserve the filtration of $S$ by the $S^{(i)}$. In the above, we can in particular take $S$ to be the split model $S(M,V)$. An equivalent definition of the $G^i$ is as those automorphisms of $S(M,V)$ that act as the identity on $S(M,V)^{(i-1)}$.

Analytically, it is natural to interpret these groups in terms of vector fields on the split model  $S(M,V)$.
Concretely, the (Lie algebra $\mathfrak{g}$ of the) group $G$ is generated by vector fields on   $S(M,V)$ that are schematically of the form
$\theta^{2k}\partial_x$ or $\theta^{2k+1}\partial_\theta$, $k\geq 1$.  These expressions are shorthand for vector
fields on $S(M,V)$ that in local coordinates $x_1,\dots,x_m|\theta_1,\dots,\theta_n$ take the form
\begin{equation}\label{zelbo} \sum_{a_1,\dots, a_{2k}=1}^n\sum_{b=1}^m f_{a_1,\dots,a_{2k};b}(x_1,\dots,x_m)\theta_{a_1}\dots \theta_{a_{2k}}\frac{\partial}{\partial x^b}\end{equation} or 
\begin{equation}\label{welbo} \sum_{a_1,\dots, a_{2k+1}=1}^n \sum_{s=1}^n f_{a_1,\dots,a_{2k+1};s}(x_1,\dots,x_m)\theta_{a_1}\dots \theta_{a_{2k+1}}\frac{\partial}{\partial \theta^s},\end{equation} 
respectively.  
In these terms, the (Lie algebra $\mathfrak{g}^i$ of the)  subgroup $G^i$ is generated by vector fields on   
$S(M,V)$ that are schematically of the form
$\theta^j\partial_x$ or $\theta^j\partial_\theta$, depending on the parity of $j$, for $j \geq i$. The abelian
$G^i / G^{i+1} \cong T_{(-)^i}M \otimes \exterior{i}V^{\vee}$
can be identified with its Lie algebra. In agreement with \eqref{drt}, it 
is just the sheaf of vector
fields on $S(M,V)$ that are schematically $\theta^i\partial_x$ or $\theta^i\partial_\theta$, depending on the parity of $i$. 
Since $\mathfrak{g}$ is nilpotent, the exponential map $\exp: \mathfrak{g} \to G$ is a bijection, inducing  bijections on global sections: $\exp: H^0(\mathfrak{g}) \to H^0(G)$  and   $\exp: H^0(\mathfrak{g}^i) \to H^0(G^i)$. But since $\exp$ does not respect the group structure of the two sheaves $\mathfrak{g}, G$, the bijection on $H^0$'s need not be an  isomorphism of groups, and there is no induced bijection on $H^1$'s.

Using either of these equivalent descriptions of the $G^i$, 
we obtain natural interpretations for their cohomologies $H^0$ and $H^1$.
by a {\em splitting} of the supermanifold $S$ we mean 
an isomorphism from the split supermanifold $S(M,V)$ to $S$
that induces the identity on both the underlying reduced space $M$ and the odd tangent bundle $V$.
The family of all splittings of $S$ is parametrized by 
$$\Sp(S) := \Isom_{M,V}(S(M,V),S).$$
A bit more generally, we have the notion of a splitting of  the superspace $S^{(i)}$, i.e. 
an isomorphism from $S(M,V)^{(i)}$ to $S^{(i)}$
that induces the identity on both  $M$ and $V$, and
the parameter space  $\Sp(S^{(i)})$ of all  such splittings. 
For the split $S=S(M,V)$, we have identifications 
$\Sp(S) \cong H^0(G)$ and $\Sp(S^{(i-1)}) \cong H^0(G/G^i)$. 
For a general $S$, we get instead that $\Sp(S^{(i-1)})$ is an $H^0(G/G^i)$-torsor, 
which is non-empty if and only if $S$ is split.
Not all splittings of $S^{(i-1)}$ lift to splittings of $S$. The variety 
$\Sp(S^{(i-1)})^S$ of those splittings that do lift to $S$ is (a torsor over)
$ H^0(G)/H^0(G^i) \subset  H^0(G/G^i)$.
Similarly, $ H^0(G^i) $ itself parametrzes those splittings of $S=S(M,V)$ that 
induce the identity splitting of $S^{(i-1)}$.
In full generality, we may consider splittings of $S^{(j-1)}$ that lift to $S^{(k-1)}$  and 
induce the identity splitting of $S^{(i-1)}$, 
whenever $2 \leq i \leq j \leq k$. 
In terms of the $G^i$, this is:
\begin{equation}\label{dwsrxe}
\Sp( S^{(j-1)})_{S^{(i-1)}}^{S^{(k-1)}} \quad\cong \quad H^0(G^i/G^k) \quad / \quad H^0(G^j/G^k).
\end{equation}
Another useful case is when $j=i+1$ and $k = \infty$ or $k=j$: 
the splittings of $S^{(i)}$ that lift to $S$ and are trivial on $S^{(i-1)}$
are parametrized by $H^0(G^i) / H^0(G^{i+1})$, while
all splittings of $S^{(i)}$ that are trivial on $S^{(i-1)}$
are parametrized by
$H^0(G^i/ G^{i+1}) \cong H^0(M,T_{(-)^i}M \otimes \exterior{i}V^{\vee})$. 
The latter is a vector space, and we will see below (in the proof of Corollary \ref{FF}) 
that the former is actually a linear subspace.

The obstruction theory for splitting of the supermanifold $S$ 
is based on filtering  $H^1(M,G)$ by the images of the $H^1(G^i)$.
The geometric interpretation of 
$H^1(G^i)$ is as the set of isomorphism classes of pairs $\varphi_{i-1}=(S,\rho_{i-1})$, 
where $S$ is a supermanifold with reduced space $M$, and 
$\rho_{i-1} \in \Sp(S^{(i-1)})$ is an isomorphism between $S(M,V)^{(i-1})$ and $S^{(i-1)}$.
In order for a class $\varphi = (S,\rho)$ in $H^1(M,G)$ to vanish, 
it is necessary and sufficient that, for each $i\geq 2$, 
this class should be  the image of some $\varphi_{i-1} \in H^1(M,G^i)$.  
This is clear, since  for sufficiently high $i$ 
the $G^i$ vanish and the $S^{(i)}=S$.

There is nothing to check for $i=2$, since $G^{2} = G$, $\rho_1 = \rho$ and $\varphi_{1} = \varphi$.  If a given class $(S,\rho)$ is in the image for some  $i$, then to decide if it is in the image for $i+1$, we look at the exact sequence
\begin{equation}\label{proz} H^1(M,G^{i+1})\to H^1(M,G^i) \stackrel{\omega}{\to} H^1(M,  T_{(-)^i}M \otimes \exterior{i}V^{\vee}).\end{equation}
The obstruction for a class  $\varphi _{i-1}\in H^1(M,G^i)$ to come from some $\varphi_{i}\in H^1(M,G^{i+1})$
is that its image  
\begin{equation}\label{poiudz}
\omega_i = \omega(\varphi_{i-1}) \in H^1(M,T_{(-)^i}M \otimes \exterior{i}V^{\vee})
\end{equation}
must vanish.  This class $\omega_i$ is called the $i$-th obstruction class for splitting of $S$, and we refer to $\varphi_{i}\in H^1(M,G^{i+1})$ as a level $i$ trivialization (or level $i$ splitting) of $S$. 
The condition for lifting a given $\varphi _{i-1}$ to an  isomorphism  $\varphi_{i}: S^{(i)} \to S(M,V)^{(i)}$
is that $\omega_i= \omega({\varphi_{i-1}})=0$.
This was the basis for the original definition of the classes  $\omega_i$ in \cite{Berezin, Manin}; for the interpretation we have  described above via cohomology of the sheaf of non-abelian groups $G$, see \cite{Onishchik}.

Note that we have
defined  $\omega_i$  only if the $\omega_j$ vanish, for all $2\leq j<i$, and only after a level $i-1$ trivialization $\varphi _{i-1} $ has been chosen. We discuss
these issues below. The splitness of $S$ is equivalent to the existence of a level $i$ trivialization $\varphi _i $ for some $i \geq n$, and hence
can be investigated recursively using the obstruction classes $\omega_i$.

 We summarize the limitations in the definition of the higher obstructions $\omega_i$  as follows:
\begin{enumerate}
\item As noted above, $\omega_i$  is defined only if the $\omega_j$ vanish, for all $2\leq j<i$. 
\item Even then, $\omega_i= \omega({\varphi_{i-1}})$ may depend not only on the supermanifold $S$ but also on the choice of trivialization ${\varphi_{i-1}}$. In  section \ref{cwsdf} below we give an example showing that even for the split model $S(M,V)$, a non-standard choice of trivialization $\varphi _2$ may lead to $\omega_3 \neq 0$, while of course the standard  choice of trivialization $\varphi _2$ gives $\omega_3 = 0$. So the non-vanishing of some higher $\omega_i$ for a particular choice of $\varphi_{i-1}$ 
is not sufficient to deduce that a supermanifold is non-split.
\item On the other hand, we will show, also in section  \ref{cwsdf},  that for even $i$, while $\omega_i$ may depend on  $\varphi _{i-2}$, it is independent of how $\varphi_{i-2}$ is lifted to  $\varphi _{i-1}$.
\item In section \ref{cwsdfr} we will describe analogous classes $\omega_i^-$, for even $i$, that obstruct the existence of a projection  $\varphi^-_{i}: S^{(i)} \to S^{(0)}$. The $\omega_i^-$ for odd $i$ vanish identically: the projection $\varphi^-_{2k-2}$ lifts uniquely to $\varphi^-_{2k-1}$. The obstruction $\omega_{2k}^-$ depends only on $\varphi^-_{2k-2}$. When  $\varphi^-_{2k-2}$ is taken as the image of  $\varphi_{2k-2}$ (i.e. it is the projection determined by the level $i$ trivialization  $\varphi_{2k-2}$)   and $\omega_{2k}$ is defined, it equals  $\omega_{2k}^-$. In this sense,  $\omega_{2k}$ can be made to depend only on the projection data.
\item These obstructions to projection also depend on previous choices. For instance, Proposition 4.9.5 of \cite{Berezin} shows that when $\omega_2 =0$, so that $\varphi_{2} = \varphi_{2}^- = \varphi_{3}^-$ can be chosen, but $\omega_3 \neq 0$, the next class $\omega_4^-$ depends linearly and non trivially on the choice of $\varphi_2$.

\end{enumerate}

None of these points  affect the first obstruction class $\omega_2$, which is an invariant of any supermanifold $S$ and obstructs a splitting
or projection.  Our proofs that various moduli spaces are non-split and in fact non-projected will boil down to showing that $\omega_2$ is nonzero.

\subsubsection{\it Illustrations}\label{cwsdf}

Consider for example a split supermanifold $S=S(M,V)$ of dimension $1|3$. The filtration is
\[
1 = G^4 \subset G^3 \subset G^2 = G,
\]
and we have a short exact sequence
\begin{equation}\label{poioi}
1  \to G^3  \to  G  \to  G^2/G^3 \to  1    
\end{equation}
with
\begin{align*}
G^3=G^3/G^4 &= Hom(\wedge^3T-,T-) = T_{-} \otimes \wedge^3T_{-}^* \\
G^2/G^3 &= Hom(\wedge^2T-,T+) = T_{+} \otimes \wedge^2T_{-}^*.
\end{align*}
The  trivial class $1 \in H^1(G)$ has the standard trivialization  $\varphi_2 =1 \in H^1(G^3)$, whose obstruction $\omega_3(1)$ vanishes.
Since $G^4=1$, the map $\omega$ in sequence \eqref{proz} for $i=3$ is injective.
So any exotic lift $\varphi_2 \neq 1$ of the trivial class $1 \in H^1(G)$ must be obstructed; it cannot be extended to a splitting of $S$.
We will show  that the coboundary map:
\begin{equation}\label{potoi}
H^0(Hom(\wedge^2T_{-},T_{+})) \to H^1(Hom(\wedge^3T_{-},T_{-}))     
\end{equation}
can be non-zero. Any  non-trivial  $\varphi_2 \neq 1$ in its image would then be an obstructed, exotic level 2 trivialization of $S(M,V)$, with $ \omega(\varphi_2) \neq 1$.

For a split supermanifold $S=S(M,V)$ of arbitrary dimension $m|n$, the Lie algebra $\mathfrak{g}$ is the subalgebra of $\Der(\wedge^{\bullet}V)$ consisting of even derivations sending $J^i$ to $J^{i+2}$ for each $i$. The action of each $g \in 
\mathfrak{g}$ gives a first-order differential operator $ V^{\vee} \to \wedge^3 V^{\vee}$, so we have a sheaf map
\begin{equation}\label{poipoi}
\mathfrak{g} \to D^1(V^{\vee}, \wedge^3 V^{\vee}).
\end{equation}

Several simplifications occur when $n=3$: $\mathfrak{g}$ becomes abelian, the exponential map $\exp: \mathfrak{g} \to G$ sends $x \mapsto 1+x$ and induces an isomorphism of groups. In particular it induces a bijection $\exp: H^1(\mathfrak{g}) \to H^1(G)$. The map \eqref{poipoi} becomes an isomorphism, which in fact takes the Lie algebra variant of short exact sequence \eqref{poioi} to the symbol sequence:
\begin{equation}\label{totoi}
0 \to  \wedge^2 V^{\vee}   \to D^1(V^{\vee}, \wedge^3 V^{\vee}) \stackrel{\sigma}{ \to}   T_{+} \otimes \wedge^2 V^{\vee}   \to 0.
\end{equation}
The coboundary map in \eqref{potoi} can therefore be identified as cup product with the extension class of the symbol sequence \eqref{totoi}, which is induced from the Atiyah class of $V$. 

For simplicity, consider the case that $V$ is the direct sum of three line bundles $L_i$. Our coboundary decouples as the sum of three maps:
\[
c_1(L_i) : H^0(T_{+} \otimes L_j^{\vee} \otimes L_k^{\vee})  \to H^1(   L_j^{\vee} \otimes L_k^{\vee}),
\]
where $\{ i,j,k \} $ is a permutation of $ \{ 1,2,3 \} $, and we have identified the Atiyah class of a line bundle with its first Chern class. This is clearly non-zero for general choices. For example, this is the case for the super line $\PP^{(1|3)}$, where $T_{+} = \OO_{\PP^1}(2)$ and the $L_i$ are $\OO_{\PP^1}(1)$; or for a superelliptic curve where two of the $L_i $ are trivial, as is $T_{+}$, while the third $L_i $ has non-zero degree.

The situation is very different though for even $i$. In that case, $\omega_i$ depends on  $\varphi _{i-2}$, but is 
independent of the choice of its lift $\varphi _{i-1}$.
This is seen by considering the rescaling action along the fibers of $V$. Under this action, the $\theta$ have degree 1, $\partial_{\theta}$ has degree $-1$, and the $x$'s and $\partial_{x}$ are neutral. We see that the vector fields in equations \eqref{zelbo} and \eqref{welbo} both have the same degree $2k$. The Lie algebra $\mathfrak{g}$ is therefore graded, with a two-dimensional graded piece for each even weight $2k$. It follows that  the coboundary map:
\begin{align*}\label{potpi}
 H^0(Hom(\wedge^{2k-1}T_{-},T_{-}))     &\to   H^1(Hom(\wedge^{2k}T_{-},T_{+})) \\
H^0(G^{2k-1}/G^{2k}) &\to H^1(G^{2k}/G^{2k+1})
\end{align*}
goes between pieces of different weights, so it must vanish, and therefore the ambiguity, given by the composition
\[
H^0(G^{2k-1}/G^{2k}) \to H^1(G^{2k})  \stackrel{\omega}{\to}  H^1(Hom(\wedge^{2k}T_{-},T_{+}))
\]
vanishes as well. Actually, a stronger result holds: in an appropriate sense, the even $\omega_i$ can be chosen to be independent of 
$\varphi _{j}$ for all odd $j<i$. We give a direct construction of these choice-independent $\omega_{2k}$ next.

\subsubsection{\it Analog for projections}\label{cwsdfr}

The group $G$ has a subgroup $G_-$ that is generated by the vector fields $\theta^{2k+1}\partial_\theta$.
This is the subgroup of $G$ that preserves a projection.  Accordingly,
supermanifolds $S$ with a chosen isomorphism $\rho$ as in eqn. (\ref{zolg})
and that are  projected but not necessarily split are labeled by a class in $H^1(M,G_-)$. 

In general, a supermanifold $S$ modeled on $M,V$ is projected if and only if the class  $x_S\in H^1(M,G)$ that represents it is in the image of $H^1(M,G_-)$.
If the subgroup $G_-$ of $G$ were normal, we would have an exact sequence $H^1(M,G_-)\to H^1(M,G)\to H^1(M,G/G_-)$
and then $S$ would be projected if and only if  $x_S$ maps to zero in $H^1(M,G/G_-)$.  
But  $G_-$ is not normal, and it does not
appear to be possible to express the obstruction to  a projection in such simple terms. 

Instead, we define $G^i_-$ to be the subgroup of $G$ generated by $G^i$ and $G_-$. This gives a sequence of subgroups
\[
G  := G^2_- \supset G^3_- =  G^4_- \supset G^5_- =  G^6_- \supset G^7_- =\dots
\]
In order for a class  $x_S \in H^1(M,G)$ to come  from $H^1(M,G_-)$, it is necessary and sufficient that, for each (odd) $i\geq 2$, this class should be in the image of $H^1(M,G^i_-)$. Is it possible to convert this to the vanishing of a series of obstructions? For that, we need each $G^i_-$ to be a normal subgroup of its predecessor $G^{i-2}_-$. This turns out to be true.

We can see the non-normality for $G_-$ as well as the normality for $G^i_-$  quite explicitly from the analytic description in terms of the Lie algebra $\mathfrak g$, which is generated by the vector fields on the split model  $S(M,V)$ that are schematically of the form
$\theta^{2k}\partial_x$ or $\theta^{2k+1}\partial_\theta$, $k\geq 1$, as in \eqref{zelbo} and \eqref{welbo}.
The Lie algebra $\mathfrak g_-$ of $G_-$ is the subalgebra generated by the vector fields  $\theta^{2k+1}\partial_\theta$.
The condition for a subgroup to be normal is that its Lie algebra should be an ideal in the ambient Lie algebra.
The Lie bracket of vector fields on $S(M,V)$  schematically gives formulas such as
\begin{equation}\label{yros} \left[\theta^2\partial_x,\theta^3\partial_\theta\right]=\theta^4\partial_x +\theta^5\partial_\theta.\end{equation}
(Recall that our schematic notation suppresses the coefficients, which depend on $x$.)
This shows that $\mathfrak g_-$ is not an ideal in $\mathfrak g$ since the bracket of  $\theta^2\partial_x \in \mathfrak g$ with $\theta^3\partial_\theta \in \mathfrak g_-$ has a non-vanishing $\theta^4\partial_x$ term, and is therefore not in $\mathfrak g_-$. 

The Lie subalgebra $\mathfrak g^i_-$ corresponding to $G^i_-$, for odd $i$, is generated by all the above vector fields except  $\theta^{2}\partial_x, \dots, \theta^{i-1}\partial_x$. The same equation \eqref{yros} shows that for $i \geq 5$, $\mathfrak g^i_-$ is not an ideal in $\mathfrak g$, since the bracket of  $\theta^2\partial_x \in \mathfrak g$ with $\theta^3\partial_\theta \in \mathfrak g^i_-$ has a non-vanishing $\theta^4\partial_x$ term, and is therefore not in $\mathfrak g^i_-$. (For $i=3$, $\mathfrak g^3_-$  {\em{is}} normal in $\mathfrak g$, as it equals $\mathfrak g^3$.)

On the other hand, $\mathfrak g^{i}_-$ is always an ideal in $\mathfrak g^{i-2}_-$. As above, $\mathfrak g^{i-2}_-$  is generated by all the vector fields except  $\theta^{2}\partial_x, \dots, \theta^{i-3}\partial_x$. The bracket of any two vector fields in $\mathfrak g_-$ is in $\mathfrak g_-$. The smallest mixed term with a $\partial_x$  factor from $\mathfrak g^{i-2}_-$ and a $\partial_\theta$ factor from $\mathfrak g^{i}_-$ is 
$\left[\theta^{i-1}\partial_x,\theta^3\partial_\theta\right]=\theta^{i+1}\partial_x +\theta^{i+2}\partial_\theta,$ 
which is in $\mathfrak g^i_-$, as is the remaining bracket 
$\left[\theta^{i-1}\partial_x,\theta^{i+1}\partial_x\right]=\theta^{2i}\partial_x.$ 

We conclude that each $G^{2k+2}_- = G^{2k+1}_-$ is a normal subgroup of its predecessor $G^{2k}_- = G^{2k-1}_-$.  As in \eqref{proz}, if a given class $x_S$ is in the image of $H^1(G^{2k}_-)$ for some  $k$, then to decide whether it is in the image of $H^1(G^{2k+1}_-)$, we look at the exact sequence
\begin{equation}\label{projz} H^1(M,G^{2k+1}_-)\to H^1(M,G^{2k}_-) \stackrel{\omega^-}{\to} H^1(M,  T_{+}M \otimes \exterior{2k}V^{\vee}).\end{equation}
The obstruction for a class  $\varphi _{2k-1}^- \in H^1(M,G^{2k}_-)$ to come from $H^1(M,G^{2k+1}_-)$
is that its image  
\[
\omega_{2k}^- = \omega^-(\varphi_{2k-1}^-) \in H^1(M,T_{+}M \otimes \exterior{2k}V^{\vee})
\]
must vanish.  We may as well call this class $\omega_{2k}^-$   the $2k$-th obstruction class for projectedness of $S$. It has properties, and limitations, analogous to those of the $\omega_i$. By construction, it depends only on the odd 
$\varphi_{2k-1}^-$, and these are the same as the even $\varphi_{2k-2}^-$. When both are defined, $\omega_{2k}^-$ and $\omega_{2k}$ clearly agree. This is the sense in which
$\omega_{2k}$ can be made to depend only on the previous even choices.

Note in particular that $\omega_2^- = \omega_2$, so the vanishing of $\omega_2$ is a necessary condition for a projection. (This follows more directly  from the fact that  $G_-$ is contained in $G^3$.) This is the only necessary criterion for projectedness that we will use in the present paper.

\subsubsection{\it Some immediate applications}

\begin{corollary}
Any $C^\infty$ supermanifold $S$ is split.
\end{corollary}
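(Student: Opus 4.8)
\emph{Proof proposal.} The plan is to feed the $C^\infty$ hypothesis into the obstruction machinery of Section~\ref{obstructions}: in the smooth category all the relevant cohomology groups vanish, so the recursive lifting through the filtration $G=G^2\supset G^3\supset\cdots$ proceeds with no obstruction, and $S$ must coincide with its split model. Concretely, fix data exhibiting $S$ as modeled on $(M,V)$ together with an isomorphism $\rho$ as in \eqref{zolg}, so that $(S,\rho)$ is represented by a class $x_S\in H^1(M,G)=H^1(M,G^2)$; the goal is to show $x_S$ is the base point, i.e. $S\cong S(M,V)$.

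The first step is the input from differential topology: a (paracompact) $C^\infty$ manifold admits partitions of unity subordinate to every open cover, so the structure sheaf $C^\infty_M$ is a soft sheaf of rings, and every sheaf of $C^\infty_M$-modules — in particular the sheaf of $C^\infty$ sections of any vector bundle on $M$ — is soft, hence has vanishing higher cohomology. Applying this to $T_{(-)^i}M\otimes\exterior{i}V^\vee$, which by \eqref{drt} is the abelian quotient $G^i/G^{i+1}$ (still a sheaf of $C^\infty_M$-modules, the abelian-group convention in \eqref{drt} notwithstanding), gives
\[
H^1\bigl(M,\,G^i/G^{i+1}\bigr)\;=\;H^1\bigl(M,\,T_{(-)^i}M\otimes\exterior{i}V^\vee\bigr)\;=\;0\qquad(i\geq 2).
\]

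Next I would run the recursion. Assuming $x_S$ lies in the image of $H^1(M,G^i)$ (automatic for $i=2$ since $G^2=G$), the normality of $G^{i+1}$ in $G^i$ yields the exact sequence of pointed sets
\[
H^1(M,G^{i+1})\;\longrightarrow\;H^1(M,G^i)\;\xrightarrow{\;\omega\;}\;H^1(M,G^i/G^{i+1})=0,
\]
so every class in $H^1(M,G^i)$ — in particular a lift of $x_S$ — comes from $H^1(M,G^{i+1})$; equivalently, the obstruction $\omega_i=\omega(\varphi_{i-1})$ vanishes because it lives in a group that is zero. Iterating up to $i=n$ (the odd dimension of $S$), where $G^{n+1}=1$, shows $x_S$ lies in the image of $H^1(M,\{1\})=\{\ast\}$, hence equals the base point. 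Therefore $S$ is split. One could alternatively patch local splitting isomorphisms directly using a $C^\infty$ partition of unity, but the cohomological route is cleaner given the setup already in place.

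The only point needing a word of care is the inference ``$H^1(M,G^i/G^{i+1})=0\ \Rightarrow\ H^1(M,G^{i+1})\to H^1(M,G^i)$ surjective'': for non-abelian coefficient sheaves one has only an exact sequence of pointed sets, which a priori identifies the image of $H^1(M,G^{i+1})$ with the fiber of $\omega$ over the base point — but that fiber is everything precisely because the target is a single point, so the conclusion stands. Beyond this bookkeeping there is no genuine obstacle; the entire content of the corollary is the softness of sheaves of $C^\infty$-modules.
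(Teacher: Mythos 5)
Your proof is correct and follows the same route as the paper, which simply notes that a $C^\infty$ locally free sheaf is fine, hence has vanishing $H^1$, so all the obstructions in the recursive lifting through the filtration $G=G^2\supset G^3\supset\cdots$ vanish. You have merely spelled out the pointed-set bookkeeping that the paper leaves implicit; there is no substantive difference.
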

\begin{proof}
A $C^\infty$ locally-free sheaf is fine, hence its $H^1$ vanishes.
\end{proof}

\vspace {0.2 in}

\begin{corollary} \label{m|1}
Any supermanifold $S$ of dimension $(m|1)$ is split.
\end{corollary}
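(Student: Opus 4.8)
The plan is to read the result off Green's cohomological description of the supermanifolds modeled on a given pair $(M,V)$, as set up above. Let $S$ be a supermanifold of dimension $(m|1)$, put $M := S_{\red}$, and let $L := T_-S = V$, which is now a \emph{line} bundle since the odd dimension is $n=1$. According to the algebraic description of the filtration of $G = G_V$, the subgroup $G^i$ is trivial as soon as $i$ exceeds the odd dimension; with $n=1$ this says $G^i = 1$ for all $i \ge 2$. Since $G = G^2$, the entire sheaf of nonabelian groups $G_V$ is trivial.

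Next I would observe that $S$ is automatically modeled on $(M,L)$: its associated graded $\Gr(S)$ is canonically the split model $S(M,L) = (M, \exterior{\bullet}L^\vee)$. By the identification recalled above, the set of isomorphism classes of pairs $(S,\rho)$ — an $(m|1)$-supermanifold $S$ with reduced space $M$ together with an isomorphism $\rho\colon V \cong T_-S$ — is $H^1(M, G_V)$. As $G_V$ is the trivial sheaf, $H^1(M, G_V)$ consists of a single point, the base point, which corresponds to $S(M,L)$ itself. Hence $S \cong S(M,L)$, i.e. $S$ is split.

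There is no genuine obstacle here. The only point to keep straight is the vanishing of the relevant obstruction groups, which rests on the elementary fact that a $(m|1)$-supermanifold has $J^2 = 0$: locally there is a single odd coordinate $\theta$, and $\theta^2 = 0$ in characteristic $0$. This is also the substance of an even quicker, self-contained argument one could give instead: since $J$ is locally purely odd, the composition $(\OO_S)_0 \hookrightarrow \OO_S \twoheadrightarrow \OO_S/J = \OO_M$ is an isomorphism of sheaves of algebras, providing a canonical algebra embedding $\OO_M \hookrightarrow \OO_S$; combined with $J = J/J^2 = L^\vee$ and $J^2 = 0$, this exhibits $\OO_S \cong \OO_M \oplus L^\vee = \exterior{\bullet}L^\vee$, which is exactly the split model $S(M,L)$.
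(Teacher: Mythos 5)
Your proof is correct and is essentially the paper's own argument: for odd dimension $1$ one has $\exterior{i}V^{\vee}=0$ for $i\ge 2$, so all the groups $G^i$ (hence $G=G^2$ itself) are trivial, $H^1(M,G)$ is a single point, and $S\cong S(M,L)$ is split. The supplementary direct argument via $J^2=0$ and the canonical isomorphism $(\OO_S)_0\cong\OO_M$ is also valid, but it rests on the same elementary vanishing, so there is nothing further to add.
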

\begin{proof}
$(\exterior{i}V^{\vee})_0 =0, \quad i \ge 2$.
\end{proof}

\vspace {0.2 in}

\begin{corollary} \label{m|2}
A supermanifold $S$ of dimension $(m|2)$ is determined by the triple $(M,V,\omega)$, where 
$\omega=\omega_2 \in H^1(M,\Hom (\exterior{2}T_-, T_+))$, and any such triple arises from some $S$.  A supermanifold of dimension $(m|2)$
is projected if and only if it is split.\end{corollary}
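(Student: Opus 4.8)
The plan is to prove Corollary \ref{m|2} in three pieces: the structure statement, the realizability statement, and the equivalence of projected and split. The key simplification is that for a supermanifold of odd dimension $2$, the filtration of $G$ collapses almost completely: $G^2 = G$, $G^3$ is trivial (since $\exterior{i}V^\vee$ has no sections through the grading for $i \geq 3$ when $\rank V = 2$), and $G_-$ is also trivial (it is contained in $G^3$, as noted right before the corollary). So $G = G/G^3 \cong T_+M \otimes \exterior{2}V^\vee = \Hom(\exterior{2}T_-,T_+)$ is abelian, and the exponential map is a group isomorphism onto $G$.

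\textbf{Structure and realizability.} First I would invoke the identification established in section \ref{obstructions}: isomorphism classes of supermanifolds modeled on $(M,V)$ are the quotient of $H^1(M,G)$ by the action of $H^0(M,\Aut(V))$. Since $G$ is now abelian and canonically $\Hom(\exterior{2}T_-,T_+)$, we have $H^1(M,G) = H^1(M,\Hom(\exterior{2}T_-,T_+))$, and for a pair $(S,\rho)$ the class $\varphi = \varphi_1$ is exactly $\omega_2$ (there is no obstruction to defining it and no choice involved, since $\rho_1 = \rho$ and $\varphi_1 = \varphi$). The only remaining ambiguity in passing from the pair $(S,\rho)$ to $S$ itself is the $H^0(M,\Aut(V))$-action; but an automorphism of $V$ acts on $\Hom(\exterior{2}T_-,T_+) = \Hom(\exterior{2}V,TM)$ through $\exterior{2}$ of its action on $V$, and when $\rank V = 2$ this is just multiplication by the determinant, a global unit — so the class $\omega_2$ is unchanged up to this scaling, and in particular its vanishing or non-vanishing, hence the full isomorphism class refined by $(M,V)$, is captured by $\omega_2$. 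This gives both that $S \mapsto (M,V,\omega_2)$ is well-defined and that every $\omega_2 \in H^1(M,\Hom(\exterior{2}T_-,T_+))$ is realized (take any cocycle representing it and glue; equivalently, run the correspondence backwards through $H^1(M,G)$).

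\textbf{Projected $\iff$ split.} Here the point is that $G_- \subset G^3 = 1$, so $G_- $ is trivial and $H^1(M,G_-) = 0$. A supermanifold modeled on $(M,V)$ is projected precisely when its class $x_S \in H^1(M,G)$ lies in the image of $H^1(M,G_-) = 0$, i.e. precisely when $x_S = 0$, i.e. precisely when it is split. Equivalently and more concretely: the only obstruction to projectedness is $\omega_2^- = \omega_2$, and the only obstruction to splitness is $\omega_2$ (there are no higher $\omega_i$ or $\omega_i^-$ because $G^3 = 1$), so the two conditions coincide.

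\textbf{Anticipated obstacle.} The routine parts are bookkeeping with the filtration; the one place demanding care is the well-definedness of the assignment $S \mapsto (M,V,\omega_2)$ as a genuine invariant of $S$ alone — i.e. checking that changing $\rho$ (the chosen isomorphism $V \cong T_-S$) does not change $\omega_2$ beyond the harmless $\det$-scaling, and articulating precisely in what sense the triple "determines" $S$ (namely: up to isomorphism covering $\id_M$, after the $H^0(M,\Aut(V))$-quotient). This is the crux, and it is exactly the rank-$2$ miracle that $\Aut(V)$ acts on $\exterior{2}V$ by a scalar, so no genuine ambiguity in the cohomology class survives.
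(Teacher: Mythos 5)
Your proposal is correct and takes essentially the same route as the paper: in odd dimension $2$ the filtration collapses ($G=G^2$ is abelian, $G^3=1$, and $G_-\subset G^3$ is trivial), so the general classification identifies supermanifolds modeled on $(M,V)$ with classes $\omega_2\in H^1(M,\Hom(\exterior{2}T_-,T_+))$, and $\omega_2$ is both the sole obstruction to splitting and an obstruction to projection, which is exactly the justification the paper gives for the last statement.
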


The last statement reflects the fact that $\omega_2$, which is the only obstruction to a splitting when the odd dimension is 2, is also
an obstruction to a projection.

Similarly,

\begin{corollary} \label{m|}
For any supermanifold $S$, the following conditions are equivalent: $\omega_2(S)\not=0$; $S^{(2)}$ is not split; $S^{(2)}$ is not projected.  Moreover, if $\omega_2(S)\not=0$,
then $S$ is not projected.\end{corollary}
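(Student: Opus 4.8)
The plan is to prove Corollary \ref{m|} by tracking through the obstruction-theoretic definitions set up in section \ref{obstructions}, reducing everything to statements about $S^{(2)}$ and the single class $\omega_2$. The key observation is that $S^{(2)}$ has odd dimension effectively equal to $2$ in the relevant sense: the ideal satisfies $J^3 = 0$ on $S^{(2)}$, so the filtration of the automorphism group relevant to $S^{(2)}$ stops after $G/G^3 = G^2/G^3 \cong \Hom(\exterior{2}T_-,T_+)$. In other words, for the locally ringed space $S^{(2)}$ the only obstruction in the sequence \eqref{proz} is $\omega_2$ itself, and there are no higher obstructions and no ambiguity of the kind discussed in points (1)--(5) after \eqref{poiudz}.

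First I would establish the chain of equivalences. The class of $S$ in $H^1(M,G)$ restricts to a class of $S^{(2)}$ in $H^1(M,G/G^3) = H^1(M,\Hom(\exterior{2}T_-,T_+))$, and by construction this restricted class is exactly $\omega_2(S)$ (this is the $i=2$ case of \eqref{proz}, where $G^2 = G$ and $\varphi_1 = \varphi$, so $\omega_2 = \omega(\varphi)$ is literally the image of the full class in $H^1(M,G^2/G^3)$). Now $S^{(2)}$ is split precisely when this class is the image of a class in $H^1(M,G^3)$ — equivalently, since on $S^{(2)}$ we have $G^3$ acting trivially (it acts as the identity on $S(M,V)^{(2)}$ by the geometric description of the $G^i$), when $\omega_2(S) = 0$. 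So $\omega_2(S) \neq 0 \iff S^{(2)}$ not split. For the projection statement: $S^{(2)}$ is projected iff its class lies in the image of $H^1(M,G_-)$; but $G_-\subset G^3$ (stated explicitly at the end of section \ref{cwsdfr}, and visible from the Lie algebra description since $\mathfrak g_-$ is generated by $\theta^{2k+1}\partial_\theta$ with $2k+1\geq 3$), so $G_-$ acts trivially on $S(M,V)^{(2)}$ as well, and a class in $H^1(M,G/G^3)$ comes from $H^1(M,G_-)$ iff it is zero. Hence $S^{(2)}$ projected $\iff \omega_2(S)=0 \iff S^{(2)}$ split. This also re-proves the last sentence of Corollary \ref{m|2}.

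Finally, for the implication that $\omega_2(S)\neq 0 \implies S$ is not projected: if $S$ were projected, then by the compatibility of the construction with truncation — a projection $S \to M$ (equivalently an $\OO_M$-algebra structure on $\OO_S$) induces an $\OO_M$-algebra structure on the quotient $\OO_S/J^3 = \OO_{S^{(2)}}$, hence a projection $S^{(2)}\to M$. So $S$ projected implies $S^{(2)}$ projected, which by the equivalence just established forces $\omega_2(S)=0$, a contradiction. Equivalently and more cleanly in the cohomological language: if the class $x_S \in H^1(M,G)$ of $S$ lifts to $H^1(M,G_-)$, then since $G_-\subset G^3$ its image in $H^1(M,G/G^3)$ is zero, and that image is $\omega_2(S)$.

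I do not expect a serious obstacle here; this is essentially an exercise in unwinding the definitions. The one point requiring a little care is the claim that truncation of the structure sheaf is functorial for the relevant structures — that a projection (or splitting, or trivialization) of $S$ induces one of $S^{(i)}$ — but this is immediate because $J^{i+1}$ is preserved by all morphisms in sight and the quotient construction is functorial, so everything passes to $\OO_S/J^{i+1}$. The other mild subtlety is being careful that the ``point'' of the pointed set $H^1(M,G/G^3)$ is the split model, so that ``the class is zero'' genuinely means ``$S^{(2)}\cong S(M,V)^{(2)}$'', i.e. $S^{(2)}$ is split; but this is exactly how the base point was normalized throughout section \ref{obstructions}.
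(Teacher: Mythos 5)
Your proposal is correct and takes essentially the same route as the paper's own (very terse) proof: $S^{(2)}$ is classified by the image of the class of $S$ in $H^1(M,G/G^3)$, namely $\omega_2(S)$, which obstructs both a splitting and a projection of $S^{(2)}$ (via $G_-\subset G^3$), and a projection of $S$ truncates to a projection of $S^{(2)}$. You have simply spelled out the cohomological bookkeeping that the paper leaves implicit.
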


Indeed, with $M,V$ given, $S^{(2)}$ is classified up to isomorphism by $\omega_2(S)$, which obstructs both a splitting and a projection of $S^{(2)}$.  
A projection of $S$  could be restricted to a projection of $S^{(2)}$, so $\omega_2(S)\not=0$ implies that $S$ is not projected.

One can easily construct explicit examples of non-split supermanifolds, beginning in dimension $(1|2)$. For example, 
 a non-degenerate conic in $\PP^{2|2}$, e.g.
\begin{equation}
x^2  + y^2  + z^2 + \theta^1 \theta^2 =0,
\end{equation}
is non-split.
The obstruction class $\omega_2$ for this supermanifold is evaluated in \cite{Manin}. We will see another explicit example in section \ref{local non-splitness}.

\vspace {0.2 in}
\begin{lemma} \label{FF}
The space $\Sp(S)$ of all splittings of a split supermanifold $S$ has the structure of an iterated fibration by affine spaces.
The action of the automorphism group $\Aut(S)$ preserves this fibration and induces an affine action on the affine fibers.
(Similarly if  $S$ is replaced by $S^{(i)}$ for some $i$.) 
\end{lemma}
\begin{proof}

Let $M,V$ be the reduced space and odd tangent bundle of $S$.
The  group $\Aut(S)$ acts on the manifold $M$, and $V$ is an $\Aut(S)$-equivariant vector bundle on $M$. 
Let $S_0:=S(M,V)$ be the split version of $S$. 
It inherits an action of $\Aut(S)$, and the distinguished splitting is $\Aut(S)$-invariant.

Recall that $\Sp(S) := \Isom_{M,V}(S_0,S)$ denotes the variety of splittings of the split supermanifold $S$, i.e. isomorphisms of $S_0$ with $S$ that act trivially on $M$ and $V$.
The actions of $\Aut(S)$ on $S,S_0$ induce an action of $\Aut(S)$ on  $\Sp(S)$. 
(This action combines the $\Aut(S)$ action on $S$ with the inverse action on $S_0$.)

Similarly, we have the variety 
$\Sp(S^{(i)})^S = \Sp(S^{(i)})^S_{ S^{(1)}}$ 
of those splittings of $S^{(i)}$ that lift to $S$ (and restrict to the identity on $S^{(1)}$, as we always require.) 
The iterated fibration is:

\begin{equation} \label{fkyrd}
\Sp(S) \ldots \to   \Sp(S^{(i)})^S \to \Sp(S^{(i-1)})^S \to \ldots
\end{equation}

For $S_0$ itself, $\Sp(S_0)$ is a group, isomorphic to $H^0(G)$. The distinguished splitting, corresponding to the unit element, is a fixed point of the above natural $\Aut(S)$-action.  As a special case of \eqref{dwsrxe}, the iterated fibration \eqref{fkyrd} becomes:

\begin{equation} \label{fkzrd}
H^0(G) \ldots \to   H^0(G)/H^0(G^{i+1}) \to H^0(G)/H^0(G^{i}) \to \ldots
\end{equation}

The typical step here is a surjective group homomorphism:
\[
\pi: H^0(G)/H^0(G^{i+1}) \to H^0(G)/H^0(G^{i}))
\]
whose kernel is $H^0(G^{i}))/H^0(G^{i+1})$ and whose other fibers are cosets of this kernel. The sequence 
$ 0 \to G^{i+1} \to G^{i} \to G^{i}/G^{i+1} \to 0$
shows that this kernel is the image of the group homomorphism
\[
h: H^0(G^{i})) \to H^0(G^{i}/G^{i+1}).
\]
This image is clearly a subgroup of the vector space 
$$H^0(G^{i}/G^{i+1}) = H^0(M,  T_{(-)^{i}}M \otimes \exterior{i}V^{\vee}).$$ 
It is also invariant under homotheties: if $h(x)=y$, then (because the exponential map is bijective) $x$ can be extended to a 1-parameter subgroup of $H^0(G^{i}))$, and $h$ of this 1-parameter subgroup is a 1-parameter subgroup of the vector space $H^0(G^{i}/G^{i+1})$, i.e. it is a line. So $\ker(\pi)$, which is the image of $h$, is a subgroup closed under homotheties, i.e. it is a vector subspace of $H^0(G^{i}/G^{i+1})$, and $\Aut(S)$ acts on it linearly. The general fibers of $\pi$ are cosets of this, i.e. affine spaces with affine $\Aut(S)$ action.

Compare this with $S$, which is split but perhaps not $\Aut(S)$-invariantly so. Then 
\break $\Sp(S)$ is a $\Sp(S_0)$-torsor, and iterated fibration \eqref{fkyrd} is a torsor over iterated fibration \eqref{fkzrd}. (This means that the objects of  \eqref{fkyrd} are torsors over corresponding objects of  \eqref{fkzrd}, and the maps in   \eqref{fkyrd} are torsor maps compatible with the corresponding maps of groups in   \eqref{fkzrd}.)
There are induced actions of $\Aut(S)$ on all terms of the two iterated fibrations, and 
the crucial observation is that these $\Aut(S)$-actions are compatible with the above torsor structure. 
In particular, the fibers of $\Sp(S^{i})^S \to \Sp(S^{i-1})^S$ in the  iterated fibration \eqref{fkyrd} are affine spaces modeled on the vector space $\ker(\pi)$,  the $\Aut(S)$ action on them is affine, and the linearization of this action is the linear $\Aut(S)$ action on $\ker(\pi)$.

\end{proof}

\vspace {0.2 in}
\begin{corollary} \label{F}
Let $S$ be an algebraic supermanifold 
(i.e. the reduced space $M$ is a complex algebraic variety and $V$ is an algebraic  vector bundle),
and let $F$ be a reductive subgroup of its automorphism group $\Aut(S)$. 
If  $S$, or $S^{(i)}$ for some $i$, is  split, then the splitting can be chosen to be $F$-invariant.   \end{corollary}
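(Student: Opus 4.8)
\textbf{Proof proposal for Corollary \ref{F}.}

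The plan is to obtain an $F$-invariant splitting (or level $i$ trivialization) by averaging an arbitrary one over the reductive group $F$. First I would set up the right notion of $F$-equivariance: since $F$ acts on $S$ compatibly with its action on $M$, it acts on the sheaf $\OO_S$, on $J$, on the associated graded $\Gr(\OO_S)$, hence on the pair $(M,V)$, and therefore on the sheaf of nonabelian groups $G = G_V$, on all the subgroups $G^i$, and on the abelian quotients $G^i/G^{i+1} \cong T_{(-)^i}M \otimes \exterior{i}V^{\vee}$. The isomorphism class of $S$ (together with its given $\rho$) is an $F$-fixed point in $H^1(M,G)$, and likewise a level $i$ trivialization, when it exists, is represented by some class $\varphi_i \in H^1(M, G^{i+1})$ whose image in $H^1(M,G)$ is $F$-fixed but which need not itself be $F$-fixed.

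The key step is the averaging argument on the abelian cohomology groups. Working recursively up the filtration, suppose we have an $F$-invariant level $i-1$ trivialization $\varphi_{i-1}$ and want to lift it to an $F$-invariant $\varphi_i$. The set of lifts is a torsor under $H^1(M, G^i/G^{i+1}) = H^1(M, T_{(-)^i}M \otimes \exterior{i}V^{\vee})$, a finite-dimensional complex (or real) vector space on which $F$ acts linearly. By hypothesis some lift $\varphi_i$ exists (this is where we use that $S$, or $S^{(i)}$, is split, so all the relevant obstruction classes $\omega_j$ vanish, $2 \le j \le i$); its $F$-translates $f\cdot\varphi_i$ are all lifts of the $F$-invariant $\varphi_{i-1}$, so they differ from $\varphi_i$ by a cocycle in the vector space $H^1(M, T_{(-)^i}M \otimes \exterior{i}V^{\vee})$, giving a map $F \to H^1(M, T_{(-)^i}M \otimes \exterior{i}V^{\vee})$. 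Because $F$ is reductive (and we are in characteristic $0$), we may integrate/average over $F$ — using Haar measure in the compact case, or the Reynolds operator / complete reducibility in the algebraic reductive case — to replace $\varphi_i$ by its $F$-average $\bar\varphi_i$, which is then a genuinely $F$-invariant lift of $\varphi_{i-1}$. Starting the recursion at $i=2$, where $\varphi_1 = \varphi$ is $F$-invariant by assumption, and running it up to $i \ge n$ (for the splitness statement) or up to the given $i$ (for the $S^{(i)}$ statement), produces the desired $F$-invariant splitting. The same argument applies verbatim to the projection filtration $G^i_-$ using the sequence \eqref{projz}, since $G^{i}_-$ is normal in $G^{i-2}_-$ and the quotients are again linear.

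The main obstacle — really the only subtle point — is making the averaging rigorous in the algebraic category, where "integration over $F$" is not literally available: one must invoke linear reductivity of $F$, i.e. that every rational $F$-representation is a direct sum of irreducibles, so that the subspace of coinvariants splits off $F$-equivariantly and one can project the chosen lift onto the $F$-fixed part. One should also check the mild compatibility that the recursive choice is consistent, i.e. that averaging at stage $i$ does not disturb the $F$-invariant $\varphi_{i-1}$ already fixed at stage $i-1$; this is automatic because the two lifts of $\varphi_{i-1}$ have the same image in $H^1(M,G^i)$, so subtracting their difference (a class in the linear space $H^1(M, G^i/G^{i+1})$) leaves $\varphi_{i-1}$ untouched. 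Finally, one notes that $H^0$-ambiguities (the $\Aut(V)$-action used to pass from $H^1(M,G)$ to isomorphism classes of supermanifolds modeled on $(M,V)$) play no role here since we have fixed $\rho$ throughout; if desired, the same averaging handles the choice of $\rho$ as well, as $F$ acts on the space of such $\rho$'s.
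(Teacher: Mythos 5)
Your overall strategy, averaging over the reductive group $F$, is the same basic idea as the paper's (the paper itself remarks that for a finite group one averages), but your level-by-level implementation has two problems, one notational and one genuine. The notational one: the set of level-$i$ trivializations extending a fixed level-$(i-1)$ trivialization is a torsor under the \emph{global sections} $H^0(M,G^i/G^{i+1})=H^0(M,T_{(-)^i}M\otimes\exterior{i}V^{\vee})$, not under $H^1$ of that sheaf — two trivializations of $S^{(i)}$ restricting to the same $\varphi_{i-1}$ differ by an automorphism of $S(M,V)^{(i)}$ which is the identity on $S(M,V)^{(i-1)}$, i.e.\ by a section of $G^i/G^{i+1}$. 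The group $H^1(M,G^i/G^{i+1})$ is where the obstruction $\omega_i$ lives (it is the target in \eqref{proz}), not the difference space of lifts. This could be repaired by replacing $H^1$ with $H^0$ throughout, since $H^0$ is again a vector space with a linear $F$-action.

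The genuine gap is the step ``by hypothesis some lift $\varphi_i$ exists, since $S$ is split so all $\omega_j$ vanish for $2\le j\le i$.'' Splitness of $S$ produces \emph{one} compatible chain of trivializations; it does not imply that the particular $F$-invariant $\varphi_{i-1}$ you produced by averaging admits any lift, because $\omega_i$ depends on the choice of $\varphi_{i-1}$. This is exactly the phenomenon the paper emphasizes in section \ref{cwsdf}: even for the split model, an exotic level-$2$ trivialization can have $\omega_3\neq 0$ and cannot be extended. Your averaged trivialization differs from the restriction of an honest splitting by some $\alpha\in H^0(M,G^{i-1}/G^{i})$, and the resulting change in $\omega_i$ (the coboundary map, as in \eqref{potoi}) can be nonzero, so liftability of the average must be proved, not assumed. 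The gap can in principle be closed — the $F$-translates of a liftable trivialization are liftable, and for $i\ge 3$ the sheaf $G^i/G^{i+1}$ is central in $G^{i-1}/G^{i+1}$, so the obstruction varies additively under the $H^0(M,G^{i-1}/G^{i})$-action and an average of liftable trivializations is again liftable — but that extra argument is precisely what is missing from your write-up. The paper's proof sidesteps the issue entirely: taking $S=S(M,V)$, the set of \emph{all} splittings is a nonempty torsor under $H^0(M,G)$, whose underlying variety is a vector space via $\exp$ on the nilpotent $H^0(M,\mathfrak g)$, and one application of the fixed-point (averaging) argument for the reductive $F$ acting on this torsor yields an invariant splitting, with no level-by-level lifting and hence no dependence-of-obstruction problem.
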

\begin{proof}\footnote{We thank P. Deligne for help with this proof.}
An $F$-invariant splitting of $S$ is a fixed point $p$ of the $F$-action on 
\break $\Sp(S)$.
We find a fixed point $p^i$ of the action of $F$ on  $\Sp(S^{(i)})^S $ for each $i$, inductively. 
For sufficiently high $i$ this will give us the desired fixed point of the action of $F$ on  $\Sp(S)$.
For $i=1$ there is nothing to do, since  
$\Sp(S^{(1)})^S$ 
is a single point. 
So we assume inductively that we have a fixed point $p^{i-1}  \in \Sp(S^{(i-1)})^S$ 
and we need to lift it to a fixed point $p^i \in \Sp(S^{(i)})^S$ in the fiber above $p^{i-1}$, 
which was seen in Lemma \ref{FF} to be an affine space with an affine action of $F$. 
The inductive step, and the proof of the Corollary, now follow from the observation that 
an algebraic affine action of a reductive group  on an affine space must have a fixed point. 

If $F$ is finite, the existence of fixed points is clear: 
the average of a finite set of points in an affine space (in characteristic zero) is well defined, and 
the average of the points in any $F$-orbit is clearly $F$-invariant. 
When the vector space in question, 
$A:=H^0(M,T_{(-)^i}M \otimes \exterior{i}V^{\vee}),$
is  finite dimensional, e.g. when $M$ is compact, 
the reductive case follows similarly, 
by averaging with respect to an invariant measure on a compact form $F_c$ of the group. 
Alternatively, the affine action of $F$ on the vector space $A$ 
extends to a linear action on the vector space $B:=A \oplus C$, where $C=\CC$, 
and the action on $B$ commutes with the projection to $C$. 
(The affine action is recovered on the slice $A^1$ above $1 \in \CC$,
while its linearization is realized on the slice $A^0$ above $0 \in \CC$.)
This $B$ contains $A=A^0$, with a 1-dimensional quotient $C$ on which $F$ acts trivially. 
In the finite dimensional case, 
complete reducibility of the action  of our reductive  $F$ gives an $F$-invariant embedding of $C$ in $B$, 
and points of its image are fixed by $F$.

It is true in general that any algebraic linear representation of $F$, such as its action on 
$B=A \oplus C$, is the union of its finite dimensional subrepresentations. 
The existence of fixed points for the affine action in the general case then follows: 
choose any point $p$ in the affine fiber  $A^1$, think of it as a point of $B$ above $1 \in \CC$, 
and find a finite dimensional subrepresentation of $B$ containing this point. 
Using the finite dimensional case, we find a fixed point $p^i$ in this subrepresentation.

In our case, with $A=H^0(M,T_{(-)^i}M \otimes \exterior{i}V^{\vee}),$ 
we can use some geometry to show directly that 
any $\rho \in A$ is contained in some finite dimensional subrepresentation.
For this, choose a compactification $\bar{M}$ of $M$. 
(This can always be done, by a well known result of Nagata \cite{Nagata},
which has been modernized and improved in  \cite{Conrad} by Conrad, following Deligne.)
The vector bundle $T_{(-)^i}M \otimes \exterior{i}V^{\vee}$ on $M$
extends to a coherent sheaf $F$ on  $\bar{M}$. 
Our space  $A=H^0(M,T_{(-)^i}M \otimes \exterior{i}V^{\vee})$
is the union of its finite dimensional subspaces $A_j$ 
consisting of meromorphic sections of $F$ with a pole of order $\leq j$ on the boundary of $\bar{M}$. 
(These $A_j$ may not be $F$-invariant.)
Now given any $\rho \in A$, the $F$-action on $\rho$ gives a map $f: F \to A$.
The inverse images $f^{-1}(A_j)$ are Zariski closed subsets of $F$, 
since each is given locally by the vanishing of a set of Laurent coefficients.
They form an increasing sequence of Zariski closed subsets that cover $F$, 
so one of them must equal $F$. 
(Over an uncountable field, if a variety is the union of a countable collection of its subvarieties, 
then it is also the union of a finite subset of these. Topologically,  this is the Baire category theorem.)
It follows that the span of the $F$-orbit of $\rho$ is contained in some $A_j$, 
hence this span is a finite dimensional $F$-invariant subspace containing $\rho$ as desired.
\end{proof}

\begin{corollary} \label{CorCovering} 
Let $\pi: \wt{S} \to S$ be a finite covering map of supermanifolds. 
If $S$, or $S^{(i)}$ for some $i$, is projected or split, then likewise $\wt{S}$, or $\wt S^{(i)}$, is projected or split.
Conversely (in characteristic zero, as always assumed in this paper), if $\wt{S}$ or $\wt S^{(i)}$ is split, then so is $S$ or $S^{(i)}$.
\end{corollary}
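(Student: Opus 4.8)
The plan is to handle the two assertions separately: the forward implication by pulling structures back along the covering, and the converse by first reducing to a Galois covering and then averaging via Corollary \ref{F}.

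For the forward direction I would recall from section \ref{Covers} that $\pi$ arises from a finite unramified covering $f\colon\wt M\to M$ of the reduced spaces via $\OO_{\wt S}=f^{-1}(\OO_S)$. Since $f$ is a local isomorphism, $f^{-1}$ is exact and commutes with forming the nilpotent ideal and its powers, so also $\OO_{\wt S^{(i)}}=f^{-1}(\OO_{S^{(i)}})$ for every $i$. Then, if $S$ is split --- say $\OO_S\cong\exterior{\bullet}V^{\vee}$ as sheaves of algebras --- applying $f^{-1}$ gives $\OO_{\wt S}\cong\exterior{\bullet}(f^{*}V)^{\vee}$, so $\wt S=S(\wt M,f^{*}V)$ is split; and if $\OO_S$ carries an $\OO_M$-module structure compatible with the projection $\OO_S\to\OO_S/J=\OO_M$, this pulls back to an $\OO_{\wt M}$-module structure on $\OO_{\wt S}$ compatible with the projection to $\OO_{\wt M}$, so a projected $S$ yields a projected $\wt S$. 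Applying the same to $S^{(i)}$, which is covered by $\wt S^{(i)}$ in exactly the same way, gives the truncated statements. I expect these to be routine.

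For the converse I would argue as follows. Suppose $\wt S$, or $\wt S^{(i)}$, is split; the goal is the analogous statement for $S$. First I would reduce to the Galois case: choose a finite unramified covering $h\colon\hat M\to M$ that is Galois with group $\Gamma$ and dominates $f$ --- for instance the cover attached to the normal core of the finite-index subgroup $\pi_1(\wt M)\subseteq\pi_1(M)$, taken componentwise if $M$ is disconnected (and with the algebraic fundamental group in the algebraic setting). Lifting $h$ to a finite covering $\hat S\to\wt S\to S$ as in section \ref{Covers}, the forward direction already shows $\hat S$ (resp. $\hat S^{(i)}$) is split, so after replacing $\wt S$ by $\hat S$ we may assume $\pi\colon\wt S\to S$ is itself Galois: $\Gamma$ is finite, acts freely on $\wt M$ and by automorphisms on $\wt S$, and $\OO_S=(f_{*}\OO_{\wt S})^{\Gamma}$; since $\Gamma$ preserves the nilpotent ideal it also acts on each $\wt S^{(i)}$ with $\OO_{S^{(i)}}=(f_{*}\OO_{\wt S^{(i)}})^{\Gamma}$. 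Then, since $\Gamma$ is a finite group and hence reductive in characteristic zero, and is a subgroup of $\Aut(\wt S)$ (resp. $\Aut(\wt S^{(i)})$), Corollary \ref{F} applied to this action on the split supermanifold $\wt S$ (resp. $\wt S^{(i)}$) yields a $\Gamma$-invariant splitting, i.e. a $\Gamma$-equivariant algebra isomorphism $\OO_{\wt S}\cong\exterior{\bullet}\wt V^{\vee}$ with $\wt V=f^{*}V$. Pushing forward and taking $\Gamma$-invariants, using that $f$ is \'etale Galois with group $\Gamma$ so that $(f_{*}\OO_{\wt S})^{\Gamma}=\OO_S$ and $(f_{*}\exterior{\bullet}\wt V^{\vee})^{\Gamma}=\exterior{\bullet}V^{\vee}$, I obtain $\OO_S\cong\exterior{\bullet}V^{\vee}$, so $S$ is split; the same argument with $\wt S^{(i)}$ gives that $S^{(i)}$ is split.

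The main obstacle is organizational rather than deep: I must justify the Galois-closure step (existence of a dominating Galois cover, and that its supermanifold lift is again split --- which is precisely the forward direction) and verify that the $\Gamma$-invariant splitting genuinely descends, i.e.\ that taking $\Gamma$-invariants of the equivariant structures upstairs recovers the structures on $S$. It is also worth remarking why only splitness, not projectedness, appears in the converse: Corollary \ref{F} produces invariant \emph{splittings} of split supermanifolds, whereas we have no corresponding device producing an invariant projection of a merely projected covering space.
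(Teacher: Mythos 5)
Your proposal is correct and follows essentially the same route as the paper: the forward direction by pulling the split or projected structure back along $\OO_{\wt S}=f^{-1}(\OO_S)$, and the converse by passing to a Galois cover dominating $\wt M$, invoking the forward direction to split that cover, and then using Corollary \ref{F} to produce a Galois-invariant splitting that descends to $S$. Your explicit descent via $\Gamma$-invariants is just a spelled-out version of the paper's statement that the invariant splitting is a pullback from a splitting of $S$.
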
 
\begin{proof}The first part is an immediate consequence of the construction in section \ref{Covers} of a covering map of supermanifolds.
In this construction, if $S$ or $S^{(i)}$ for some $i$ is split or projected, then the cover $\wt{S}$ or $\wt{S}^{(i)}$ immediately acquires the
same property.

For the converse part,  if there is a finite group $F$ that acts freely on  $\wt{S}$ with quotient $S$, we can appeal to  Corollary \ref{F}:  the splitting of $\wt{S}$ can be replaced by one which is $F$-invariant, and the latter is a pullback from a splitting of $S$. In general, we can always find a covering $\bar{S} \to  \wt{S}$, a freely acting group $F$, and a subgroup $G \subset F$ such that  $\wt{S}$ is the quotient of $\bar{S} $ by the action of $G$, while $S$ is the quotient of $\bar{S} $ by the action of $F$. To construct $\bar S$, we simply construct a Galois cover $\bar M$ of $M$ lying over $\wt{M}$ (the
reduced space of $\wt S$), and let
$\bar S$ be the corresponding cover of $S$.  The splitness of $\wt{S}$ implies the splitness of  $\bar{S} $ by the first part of this proof, and the splitness of $S$ then follows from Corollary \ref{F}.

\end{proof}

  We do not know whether Corollary \ref{F} and the converse part of Corollary \ref{CorCovering}  hold  with ``split'' replaced by ``projected.''
  
  The special case of Corollary \ref{CorCovering} that we will use later is the following:

\begin{corollary}\label{CoroCovering}Let $\pi:\wt{S}\to S$ be a finite covering map of supermanifolds (in characteristic zero).
If $\omega_2(S)\not=0$, then $\omega_2(\wt{S})\not=0$, so $\wt{S}$ is not projected.
\end{corollary}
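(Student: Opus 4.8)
The plan is to reduce $\omega_2(\wt S)\neq 0$ to a statement about $S$ by combining the two preceding corollaries. First I would invoke Corollary \ref{m|} (``$\omega_2(S)\neq 0$ iff $S^{(2)}$ is not split iff $S^{(2)}$ is not projected''), which lets me restate the hypothesis as: $\wt S^{(2)}$ is not split, equivalently not projected. I want to conclude that $S$ is not projected, and by Corollary \ref{m|} again it suffices to show $S^{(2)}$ is not projected, or even better $S^{(2)}$ is not split.

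Next I would observe that the covering map $\pi:\wt S\to S$ restricts to a finite covering map $\pi^{(2)}:\wt S^{(2)}\to S^{(2)}$ of the truncated locally ringed superspaces: the construction in section \ref{Covers} builds $\OO_{\wt S}$ as the pullback sheaf $f^{-1}(\OO_S)$ along the reduced covering $f:\wt M\to M$, and this construction manifestly commutes with passing to the quotient by $J^{3}$ (the nilpotent ideal pulls back to the nilpotent ideal), so $\OO_{\wt S^{(2)}} = f^{-1}(\OO_{S^{(2)}})$. Although the $S^{(i)}$ are not themselves supermanifolds, Corollary \ref{CorCovering} (and the Corollary \ref{F} underlying it) is phrased to cover exactly this truncated situation: ``if $S$, or $S^{(i)}$ for some $i$, is projected or split, then likewise $\wt S$, or $\wt S^{(i)}$, is projected or split; conversely, if $\wt S$ or $\wt S^{(i)}$ is split, then so is $S$ or $S^{(i)}$.'' Applying the converse direction with $i=2$: if $\wt S^{(2)}$ were split then $S^{(2)}$ would be split, and then by Corollary \ref{m|} $\omega_2(S)=0$ and $\omega_2(\wt S)=\pi^*\omega_2(S)=0$ (pullback of obstruction classes under the covering), contradicting the hypothesis. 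Hence $S^{(2)}$ is not split, so by Corollary \ref{m|} $S^{(2)}$ is not projected, and therefore (again Corollary \ref{m|}, last sentence) $S$ is not projected.

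A cleaner bookkeeping, which I would actually write: by Corollary \ref{m|}, $\omega_2(\wt S)\neq 0$ says $\wt S^{(2)}$ is not projected. Suppose for contradiction $S$ is projected; restricting the projection $S\to M$ to $S^{(2)}$ gives a projection $S^{(2)}\to M$. Pulling this back along the covering $\wt S^{(2)}\to S^{(2)}$ (using that $\OO_{\wt S^{(2)}}=f^{-1}\OO_{S^{(2)}}$ and that an $\OO_M$-algebra structure pulls back to an $\OO_{\wt M}$-algebra structure, exactly as in the proof of Corollary \ref{CorCovering}) produces a projection $\wt S^{(2)}\to \wt M$, i.e.\ $\wt S^{(2)}$ is projected, contradiction. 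Hence $S$ is not projected.

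The only real point requiring care — and the step I expect to be the mild obstacle — is the compatibility of the section \ref{Covers} covering construction with the truncation functor $S\mapsto S^{(i)}$, i.e.\ verifying $\OO_{\wt S^{(2)}} = f^{-1}(\OO_{S^{(2)}})$ and that the induced projection/splitting data genuinely pulls back; but since $f$ is a local isomorphism this is essentially formal, and it is already implicit in the statement of Corollary \ref{CorCovering} (which explicitly includes the $S^{(i)}$), so in the write-up I would simply cite Corollary \ref{CorCovering} together with Corollary \ref{m|} rather than redo it.
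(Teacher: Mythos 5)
Your final (``cleaner bookkeeping'') argument is correct and is essentially the paper's own proof: use Corollary \ref{m|} to translate $\omega_2(\wt S)\neq 0$ into a statement about the second truncation, and transfer along the covering using the easy direction of Corollary \ref{CorCovering} (the paper words it through non-splitness of $S^{(2)}$ and hence $\omega_2(S)\neq 0$, you word it through projections directly; the content is the same). One caution about your intermediate version: there the reductio assumes ``$\wt S^{(2)}$ split,'' whose failure is just the hypothesis restated, so the sentence ``hence $S^{(2)}$ is not split'' does not follow as written -- the assumption to contradict must be that $S^{(2)}$ is split (or that $S$ is projected, as in your final version), which you then push forward to the cover.
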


\begin{proof} The hypothesis that $\omega_2({S})\not=0$ is equivalent to ${S}^{(2)}$ being non-split.  By Corollary \ref{CorCovering}, therefore,
$\wt{S}^{(2)}$ is not split.  This in turn is equivalent to $\omega_2(\wt{S})\not=0$, and, as $\omega_2$ obstructs a projection, it follows that $\wt{S}$ is not projected.

\end{proof}
One can also give a direct proof of this Corollary. (We thank one of the referees for pointing this out.)
For a local isomorphism $\pi:\wt{S}\to S$, the structure sheaf $\OO_{\wt{S}}$ is the pullback of $\OO_{{S}}$. It follows that $\omega_2({\wt{S}}) = \pi^* \omega_2({{S}}) $.  But the composition of $\pi^*$ with the trace map is multiplication by the degree of $\pi$, so $\pi^*$ is injective and the result follows.

\subsubsection{\it Submanifolds}

Suppose that $S$ is a supermanifold modeled on $M,V$.  Let $M'$ be a submanifold of $M$, and let $V'$ be a subbundle of $V|_{M'}$.
In general, a submanifold  $S'$ of $S$ with reduced space $M'$ and odd tangent bundle $V'$ does not exist.  
To analyze the obstruction to this, we start with the split supermanifold $S(M,V)$, which has $S(M',V')$ as a submanifold.
The sheaf of groups $G$ has a subsheaf $G_*$ consisting of automorphisms of $S(M,V)$ that restrict to automorphisms of $S(M',V')$.  
The condition that a  supermanifold $S$ modeled on $M,V$ contains a submanifold $S'$ modeled on $M',V'$ is that the class in
$H^1(M,G)$ that represents $S$ should be the image of a class in $H^1(M,G_*)$.   As in the discussion of projections,
there would be an easy criterion for this if $G_*$ were normal, but this is not the case.  

However, we can partially reduce to the normal case if we replace $S$ with $S^{(2)}$ and ask whether $S^{(2)}$ contains a subspace $S'{}^{(2)}$
modeled on $S(M',V')^{(2)}$.  For this, we replace $G$ by the sheaf of abelian groups $G/G^3={ \mathit{Hom}}(\exterior{2}T_-M,T_+M)$
and $G_*$ by $G_*/G_*^{3}$, which is the subsheaf of $\mathit{Hom}(\exterior{2}T_-M,T_+M)$ consisting of homomorphisms that, when
restricted to $M'$, map $\exterior{2}T_-{M'}$ to $T_+M'$.  We call this sheaf $\mathit{Hom}^*(\exterior{2}T_-M,T_+M)$.  
The quotient of the two is the sheaf $\mathit{Hom}(
\exterior{2}T_-M',N)$ supported on $M'$, where $N$ is the normal bundle to $M'$ in $M$.
So we get an exact sequence in cohomology
\begin{align}\label{poiu}H^1(M,{\mathit{ Hom}}^*(\exterior{2}T_-M,T_+M))&\to H^1(M,\mathit{Hom}(\exterior{2}T_-M,T_+M) \cr &\stackrel{b}{\to}H^1(M',\mathit{Hom}(
\exterior{2}T_-M',N),\end{align}
leading to a necessary condition for existence of $S'$:

\begin{corollary} \label{m|a2}
A necessary condition for a supermanifold $S$ modeled on $M,V$ to contain a submanifold $S'$ modeled on $M',V'$ is that, in the notation of \eqref{poiu},
\[
 b(\omega_2)=0 \in H^1(M',\mathit{Hom}(\exterior{2}T_-M',N)).
\]
\end{corollary}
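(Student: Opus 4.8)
The plan is to unwind the definition of the obstruction class $\omega_2$ as a Čech cohomology class and track how it restricts to $M'$. Recall that $\omega_2(S)\in H^1(M,\mathit{Hom}(\exterior{2}T_-M,T_+M))$ is the image of the class $x_S\in H^1(M,G)$ representing $S$ under the natural map $G\to G/G^3 = \mathit{Hom}(\exterior{2}T_-M,T_+M)$. Suppose for contradiction that $S$ contains a submanifold $S'$ modeled on $M',V'$. Then, as sketched in the paragraph preceding the corollary, $x_S$ lifts to a class $x_S^*\in H^1(M,G_*)$, where $G_*\subset G$ is the subsheaf of automorphisms of $S(M,V)$ preserving $S(M',V')$. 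First I would make this precise: choose an open cover $\{U_\alpha\}$ of $M$ and gluing data realizing $S$ such that on each $U_\alpha$ the chosen local trivialization restricts on $U_\alpha\cap M'$ to the standard split form of $S'$; the existence of $S'\hookrightarrow S$ is exactly what makes such a choice possible. The resulting transition cocycle then takes values in $G_*$.

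Next I would pass to the second-order truncation. Applying $(-)^{(2)}$, i.e. working modulo $J^3$, replaces $G$ by the abelian sheaf $G/G^3=\mathit{Hom}(\exterior{2}T_-M,T_+M)$ and $G_*$ by $G_*/G_*^3=\mathit{Hom}^*(\exterior{2}T_-M,T_+M)$, the subsheaf of homomorphisms that carry $\exterior{2}T_-M'$ into $T_+M'$ after restriction to $M'$. Functoriality of cohomology applied to the inclusion $G_*\hookrightarrow G$ and the quotient maps gives a commutative square in which $\omega_2(S)\in H^1(M,\mathit{Hom}(\exterior{2}T_-M,T_+M))$ is the image of (the class of) $x_S^*$ under $H^1(M,\mathit{Hom}^*(\exterior{2}T_-M,T_+M))\to H^1(M,\mathit{Hom}(\exterior{2}T_-M,T_+M))$. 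In other words, $\omega_2(S)$ lies in the image of the first map in the exact sequence \eqref{poiu}.

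Finally I would invoke exactness of \eqref{poiu}: the composite
\[
H^1(M,\mathit{Hom}^*(\exterior{2}T_-M,T_+M))\to H^1(M,\mathit{Hom}(\exterior{2}T_-M,T_+M))\xrightarrow{\ b\ }H^1(M',\mathit{Hom}(\exterior{2}T_-M',N))
\]
is zero, since $b$ is the connecting-type map in the long exact sequence associated to
\[
0\to \mathit{Hom}^*(\exterior{2}T_-M,T_+M)\to \mathit{Hom}(\exterior{2}T_-M,T_+M)\to \mathit{Hom}(\exterior{2}T_-M',N)\to 0.
\]
Hence $b(\omega_2(S))=0$, which is the asserted necessary condition. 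The main obstacle — more bookkeeping than genuine difficulty — is the first step: verifying that the hypothesis ``$S'\hookrightarrow S$ exists'' really does let one choose the local trivializations of $S$ so that the transition functions lie in $G_*$, rather than merely in $G$. This amounts to noting that a morphism of supermanifolds $S'\hookrightarrow S$ over $M'\hookrightarrow M$ can be completed, on a suitable cover, to local splitting isomorphisms of $S$ compatible with the standard splitting of $S'=S(M',V')$; once the cocycle is seen to take values in $G_*$, everything else is formal diagram-chasing in Čech cohomology. One should also check that the quotient sheaf $G_*/G_*^3$ is indeed $\mathit{Hom}^*(\exterior{2}T_-M,T_+M)$ and that its quotient by $\mathit{Hom}^*$ inside $\mathit{Hom}$ is $\mathit{Hom}(\exterior{2}T_-M',N)$, which follows from the standard identification of the normal bundle $N$ with the cokernel of $T_-M'\to T_-M|_{M'}$ together with the analogous statement for $T_+$.
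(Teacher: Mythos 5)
Your argument is essentially the paper's own: the existence of $S'\subset S$ means the class $x_S$ comes from $H^1(M,G_*)$, reducing modulo $G^3$ shows that $\omega_2(S)$ lies in the image of $H^1(M,\mathit{Hom}^*(\exterior{2}T_-M,T_+M))$, and exactness of \eqref{poiu} then gives $b(\omega_2)=0$. One small correction to your closing aside: the bundle $N$ in the quotient $\mathit{Hom}/\mathit{Hom}^*\cong\mathit{Hom}(\exterior{2}T_-M',N)$ is the even normal bundle $T_+M|_{M'}/T_+M'$, not the cokernel of $T_-M'\to T_-M|_{M'}$; the odd directions enter only through restricting the source to $\exterior{2}T_-M'$.
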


 From the definition of $G_*$ in terms of automorphisms of $S(M,V)$ that restrict to automorphisms of $S(M',V')$, there is a tautological
 map from $G_*$ to the sheaf $G'$ of automorphisms of $S(M',V')$ that act trivially on $M'$ and its normal bundle.   Thus, $G_*$ maps
 to both $G$ and $G'$, 
 \begin{equation}\label{surely}\begin{matrix} G_* & \stackrel{u}{\to} G \cr
                                                                       r\downarrow & \\
                                                                        G', & \end{matrix}\end{equation} 
leading to corresponding maps in cohomology
 \begin{equation}\label{purely}\begin{matrix} H^1(M,G_*) & \stackrel{u}{\to} H^1(M,G) \cr
                                                                       r\downarrow & \\
                                                                        H^1(M',G'). & \end{matrix}\end{equation} 

\remark{\label{helpful} As already noted, if $S'$ exists,
the class $x_S\in H^1(M,G)$ that represents $S$ is the image of some $x^*_{S}\in H^1(M,G_*)$.  The map from $G_*$ to $G'$ just
restricts an element of $G_*$ to its action on $S(M',V')$, so
the image of $x^*_S$
in $H^1(M',G')$ is the class $x_{S'}$ that represents $S'$.}

  If we abelianize by replacing $G_*$, $G$, and $G'$ by their quotients by $G_*^{3}$, $G^3$, and $G'^3$, we
 can complete the picture (\ref{surely}) to a commuting square:
 \begin{equation}
   \begin{matrix} \mathit{ Hom}^*(\exterior{2}T_-M,T_+M) & \stackrel{u}{\to} &  \mathit{ Hom}(\exterior{2}T_-M,T_+M)  \cr
          r\downarrow  &    &     \iota\downarrow    \cr
                \left.  \mathit{Hom}(\exterior{2}T_-M',T_+M') \right.      & \stackrel{j}{\to} &  \left.\mathit{ Hom}(\exterior{2}T_-M',T_+M|_{M'}) \right  .      \end{matrix}       \end{equation}  
Here $u$ and $r$ are the linearizations of the corresponding maps in (\ref{surely}).
                  The sheaves on the top row are sheaves on $M$ and the sheaves on the bottom row are sheaves on $M'$.
The vertical maps are restrictions from $M$ to $M'$, followed by restriction from $\exterior{2}T_-M$ to $\exterior{2}T_-M'$.
Finally,  $j$ comes from the inclusion $T_+M'
\subset T_+M|_{M'}$.  In cohomology we get:
   \begin{equation}\label{yelb}
   \begin{matrix} H^1(M, \mathit{ Hom}^*(\exterior{2}T_-M,T_+M)) & \stackrel{u}{\to} &   H^1(M,  \mathit{ Hom}(\exterior{2}T_-M,T_+M))   \cr
          r\downarrow  &    &     \iota\downarrow    \cr
                  H^1(M',\mathit{Hom}(\exterior{2}T_-M',T_+M') )       & \stackrel{j}{\to} & H^1(M',    \mathit{ Hom}(\exterior{2}T_-M',T_+M)) .         \end{matrix}       \end{equation}         
 
 The linearization of Remark \ref{helpful} says that there is a class  $\omega_2^*(S)$ in the upper left corner of the square (\ref{yelb})
 that maps horizontally to $\omega_2(S)$ and vertically to $\omega_2(S')$.  So commutativity of the square implies that
 $j(\omega_2(S'))=\iota(\omega_2(S))$.  This statement is our Compatibility Lemma:
 
 \begin{corollary} \label{mila}
If $S$ is a supermanifold with submanifold $S'$, then the classes $\omega_2(S')$ and $\omega_2(S)$ are compatible in the
sense that $j(\omega_2(S'))=\iota(\omega_2(S))$.
\end{corollary}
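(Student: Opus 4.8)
The plan is to make explicit the commutative square \eqref{yelb} at the cochain level and then simply chase the class $\omega_2(S)$ around it. The key point, already isolated in Remark \ref{helpful} and its linearization, is that when $S'$ exists inside $S$, the class $x_S \in H^1(M,G)$ lifts to a class $x_S^* \in H^1(M,G_*)$, and the restriction map $r\colon G_* \to G'$ carries $x_S^*$ to the class $x_{S'}$ representing $S'$. Abelianizing by passing to the quotients $G/G^3$, $G_*/G_*^3$, $G'/G'^3$ turns this into the assertion that there is a class $\omega_2^*(S) \in H^1(M, \mathit{Hom}^*(\exterior{2}T_-M, T_+M))$ whose image under $u$ is $\omega_2(S)$ and whose image under $r$ is $\omega_2(S')$.

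First I would recall why the abelianized square \eqref{yelb} genuinely commutes. The two composites $M' \hookrightarrow M$ and $\exterior{2}T_-M' \hookrightarrow \exterior{2}T_-M|_{M'}$ used to define the vertical maps are the same whether one first applies $u$ (which is an honest inclusion of subsheaves of $\mathit{Hom}(\exterior{2}T_-M, T_+M)$) or first restricts; and the map $j$ induced by $T_+M' \subset T_+M|_{M'}$ is manifestly compatible with these restrictions. So at the level of sheaves the square
\[
\begin{matrix}
\mathit{Hom}^*(\exterior{2}T_-M, T_+M) & \xrightarrow{u} & \mathit{Hom}(\exterior{2}T_-M, T_+M) \\
r\downarrow & & \iota\downarrow \\
\mathit{Hom}(\exterior{2}T_-M', T_+M') & \xrightarrow{j} & \mathit{Hom}(\exterior{2}T_-M', T_+M|_{M'})
\end{matrix}
\]
commutes, and taking $H^1$ is functorial, so the induced square in cohomology commutes too.

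Then the proof is immediate: starting from $\omega_2^*(S)$ in the upper-left corner, going right then down gives $\iota(u(\omega_2^*(S))) = \iota(\omega_2(S))$, while going down then right gives $j(r(\omega_2^*(S))) = j(\omega_2(S'))$. Commutativity forces these to be equal, i.e. $j(\omega_2(S')) = \iota(\omega_2(S))$, which is exactly the claim. The only genuinely substantive input is the existence of $\omega_2^*(S)$ mapping to both $\omega_2(S)$ and $\omega_2(S')$, and that is precisely the content of the linearized version of Remark \ref{helpful}, which I would simply invoke.

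I expect the main obstacle to be bookkeeping rather than any real difficulty: one must be careful that $\mathit{Hom}^*(\exterior{2}T_-M, T_+M)$ is correctly identified as $G_*/G_*^3$ and that the maps $u$ and $r$ are the linearizations of the corresponding maps $u, r$ of \eqref{surely}, so that the lift $\omega_2^*(S)$ of $\omega_2(S)$ really does restrict to $\omega_2(S')$ and not to some other class. Once the identification of $G_*/G_*^3$ with $\mathit{Hom}^*$ and of its maps to $G/G^3$ and $G'/G'^3$ is taken as given (as set up just above the statement), the argument is a one-line diagram chase.
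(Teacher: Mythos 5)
Your proposal is correct and is essentially identical to the paper's own proof: both rely on the linearization of Remark \ref{helpful} to produce a class $\omega_2^*(S)$ in the upper-left corner of the square \eqref{yelb} mapping to $\omega_2(S)$ horizontally and to $\omega_2(S')$ vertically, and then conclude by commutativity of that square. No gaps beyond those the paper itself leaves implicit.
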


Particularly important is the case that the normal sequence decomposes:

\vspace {0.2 in}
\begin{corollary} \label{submanifold splits}
Let $S$ be a supermanifold, and $S'\subset S$ a submanifold, with reduced spaces $M'\subset M$, such that the normal
sequence of $M'$ decomposes: 
$ T{M}|_{M'} =  T_{M'} \oplus N$, where $N$ is the (even) normal bundle. If $\omega_2(S')\not=0$,  then also $\omega_2(S)\not=0$
and $S$ is not projected.\end{corollary}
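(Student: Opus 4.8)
The plan is to derive Corollary~\ref{submanifold splits} as a direct consequence of the Compatibility Lemma (Corollary~\ref{mila}) together with the hypothesis that the normal sequence of $M'$ in $M$ splits. Recall that Corollary~\ref{mila} gives $j(\omega_2(S'))=\iota(\omega_2(S))$, where $\iota$ and $j$ are the maps appearing in the bottom-right of the commuting square \eqref{yelb}. So if I can show $\iota$ is injective, then $\omega_2(S')\neq 0$ forces $j(\omega_2(S'))\neq 0$, hence $\iota(\omega_2(S))\neq 0$, hence $\omega_2(S)\neq 0$; and then Corollary~\ref{m|} (the last sentence, that $\omega_2(S)\neq 0$ implies $S$ is not projected) finishes the argument.

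The point of the splitting hypothesis is exactly to make $\iota$ injective. The map $\iota$ is induced by the restriction from $M$ to $M'$ followed by restriction of $\exterior{2}T_-M$ to $\exterior{2}T_-M'$; on the level of the relevant sheaf it is
\[
\iota\colon \mathit{Hom}(\exterior{2}T_-M,T_+M)\longrightarrow \mathit{Hom}(\exterior{2}T_-M',T_+M|_{M'}).
\]
Now the odd tangent bundle restricts by hypothesis as $T_-M|_{M'}=T_-M'\oplus V''$ (the complementary odd subbundle supplied in the hypothesis of the Corollary, or rather the statement is about $V'\subset V|_{M'}$ — here I should use that $T_-M'=V'$ is a subbundle of $V|_{M'}$ with a chosen complement), and the even tangent bundle restricts as $T_+M|_{M'}=T_+M'\oplus N$ with $N$ the even normal bundle. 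Using these direct-sum decompositions, restriction of $\mathit{Hom}(\exterior{2}T_-M,T_+M)$ from $M$ to $M'$ is a surjection of sheaves on $M'$ onto $\mathit{Hom}(\exterior{2}(T_-M'\oplus V''),T_+M'\oplus N)$, and the further projection onto the summand $\mathit{Hom}(\exterior{2}T_-M',T_+M'\oplus N)=\mathit{Hom}(\exterior{2}T_-M',T_+M|_{M'})$ is a split surjection. Hence on $H^1$ the composite $H^1(M,\mathit{Hom}(\exterior{2}T_-M,T_+M))\to H^1(M',\mathit{Hom}(\exterior{2}T_-M',T_+M|_{M'}))$ need not itself be injective — but that is not what I need. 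What I actually need is only the relation $j(\omega_2(S'))=\iota(\omega_2(S))$ plus injectivity of $j$, because $\omega_2(S')$ lives on $M'$. And $j\colon H^1(M',\mathit{Hom}(\exterior{2}T_-M',T_+M'))\to H^1(M',\mathit{Hom}(\exterior{2}T_-M',T_+M|_{M'}))$ is induced by the inclusion $T_+M'\hookrightarrow T_+M|_{M'}$, which is split precisely by the decomposition $T_+M|_{M'}=T_+M'\oplus N$. A split injection of sheaves induces a split injection on each $H^i$, so $j$ is injective. Therefore $\omega_2(S')\neq 0\Rightarrow j(\omega_2(S'))\neq 0\Rightarrow \iota(\omega_2(S))=j(\omega_2(S'))\neq 0\Rightarrow \omega_2(S)\neq 0$, and non-projectedness follows from Corollary~\ref{m|}.

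The main obstacle — really the only substantive point — is verifying that the normal-bundle splitting $TM|_{M'}=T_{M'}\oplus N$ is exactly the data that splits the morphism $T_+M'\hookrightarrow T_+M|_{M'}$, so that $j$ is a split monomorphism and hence injective on $H^1$; everything else is formal diagram-chasing in \eqref{yelb} combined with citing Corollary~\ref{mila} and Corollary~\ref{m|}. One should also note that $\omega_2(S')$ is a priori a class on $M'$ (cohomology on $S'$ meaning cohomology on its reduced space $M'$), so the comparison takes place after restricting the ambient class to $M'$, which is what $\iota$ does; this is already built into the statement of the Compatibility Lemma, so no extra work is needed there.
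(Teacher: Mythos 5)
Your proposal is correct and is essentially the paper's own argument: the paper also deduces the corollary from Corollary \ref{mila} together with the observation that the decomposition $TM|_{M'}=T_{M'}\oplus N$ splits the inclusion $T_+M'\hookrightarrow T_+M|_{M'}$, making $j$ injective on $H^1$, with non-projectedness then following since $\omega_2$ obstructs a projection. Your added remark that injectivity of $\iota$ is not needed is a correct clarification of the same route, not a different proof.
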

\begin{proof}
This follows from Corollary \ref{mila} and the fact that, when the normal sequence decomposes, the map $j$ is injective.
\end{proof}
 
Are there analogous results with $\omega_2$ replaced by $\omega_j$, for some $j>2$?  Clearly a necessary hypothesis
is that $\omega_i(S)=\omega_i(S')=0$, $i<j$, so that $\omega_j(S)$ and $\omega_j(S')$ are defined.   This would imply that there are splittings
$S^{(j-1)}\to M$ and $S'^{(j-1)}\to M'$.  One actually needs a stronger hypothesis to make the argument.  
One needs to know that the class in $H^1(M,G)$ associated
to $S$ is a pullback from $G_*^{j}$, which implies that there is a splitting $S^{(j-1)}\to M$
that restricts to a splitting $S'^{(j-1)}\to M'$.  Under this hypothesis, the analog of Corollary \ref{mila} holds, with essentially the same proof.

Rather than an embedding, consider any map of supermanifolds  $f: S' \to S$, with reduced manifolds 
${S}_{{\mathrm{red}}} = M$ and $S'_{{\mathrm{red}}} = M'$. The differentials:
\[
d_{\pm}: T_{\pm}{S'}  \to f^*(T_{\pm}{S})
\]
induce maps on cohomology groups:
\[
\iota: H^1(M,  (T_{\pm}{S}) \otimes \exterior{i}   (T_{-}{S}) ^{\vee})
\to
H^1(M',  (T_{\pm}{S}) \otimes \exterior{i}  (T_{-}{S'} ) ^{\vee})
\]
and
\[
j: H^1(M',  (T_{\pm}{S'} ) \otimes \exterior{i}   (T_{-}{S'} ) ^{\vee})
\to
H^1(M',  (T_{\pm}{S}) \otimes \exterior{i}  (T_{-}{S'} ) ^{\vee}).\]

Perhaps surprisingly, Corollary \ref{mila}   holds for an arbitrary map:
\vspace {0.2 in}

\begin{corollary} \label{X,Y general} For any map  of supermanifolds $f: S' \to S$,
the classes $\omega_2(S')$ and $\omega_2(S)$  are compatible, in the sense that:
\[
 j ( \omega_2(S')  )=\iota( \omega_2(S)).
\]
\end{corollary}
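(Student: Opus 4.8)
The plan is to reduce the general case of Corollary \ref{X,Y general} to the submanifold case already treated in Corollary \ref{mila}, by factoring an arbitrary map $f\colon S'\to S$ through its graph. First I would form the product supermanifold $S'\times S$; since $\omega_2$ of a product is easy to control (the tangent bundle of a product splits as a direct sum of the pullbacks, so the obstruction classes behave additively), and since $S'\times S$ projects to both factors, one has compatibility of $\omega_2(S'\times S)$ with both $\omega_2(S')$ and $\omega_2(S)$ via the projection maps. Then I would consider the graph $\Gamma_f\subset S'\times S$, which is a submanifold isomorphic to $S'$ via the first projection. Applying Corollary \ref{mila} to the embedding $\Gamma_f\hookrightarrow S'\times S$ gives a compatibility between $\omega_2(\Gamma_f)=\omega_2(S')$ and $\omega_2(S'\times S)$; composing with the compatibility coming from the projection $S'\times S\to S$ should yield exactly the stated identity $j(\omega_2(S'))=\iota(\omega_2(S))$, once one checks that the two maps $j$ and $\iota$ defined via $d_\pm$ of $f$ agree with the composites of the maps appearing in the graph factorization.

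The key steps, in order, are: (1) verify that for a product $P=A\times B$ the obstruction class $\omega_2(P)$ restricts correctly under the two projections, i.e. $\omega_2(P)$ maps to $\omega_2(A)$ under pullback along $A\hookrightarrow A\times B$ (the zero section in the $B$ directions) — this is really a special case of Corollary \ref{mila} again, or can be seen directly from Green's description since the sheaf of groups $G$ for a product decomposes compatibly; (2) identify $\Gamma_f$ with $S'$ and compute its even and odd normal bundles inside $S'\times S$, namely $N_+ \cong f^*T_+S$ and $N_-\cong f^*T_-S$ (this is the superanalog of the standard fact that the normal bundle of a graph is the pullback of the tangent bundle of the target); (3) chase the commuting square of sheaves/cohomology from Corollary \ref{mila} applied to $\Gamma_f\subset S'\times S$ and the analogous square for the projection $S'\times S\to S$, and check that the composite recovers precisely the maps $\iota$ and $j$ built from $d_+$ and $d_-$ of $f$ as displayed just before the statement.

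The main obstacle I expect is bookkeeping rather than conceptual: one must be careful that the maps $\iota,j$ in the statement are defined using $(T_-S)^\vee$ pulled back along $f$ versus $(T_-S')^\vee$, and the graph construction naturally produces the corresponding pullbacks, so the identifications in step (2) have to be made with the correct functoriality and in the correct $\mathbf{Z}/2$-graded conventions. A secondary subtlety is that $\omega_2(S')$ is only defined when $M',V'$ are fixed; here $M'=S'_{\mathrm{red}}$ and $V'=T_-S'$ are canonical, and the graph embedding respects this, so no genuine ambiguity arises, but one should remark on it. Alternatively, if the graph factorization feels heavy, one can give a direct cocycle proof: represent $S$ and $S'$ by $1$-cocycles valued in their respective sheaves $G$ using a common refinement of coordinate covers adapted to $f$, write down the degree-$2$ piece (an element of $H^1$ of $\mathit{Hom}(\wedge^2 T_-,T_+)$) explicitly, and observe that $d_\pm f$ transforms the cocycle for $S'$ into the pullback of the cocycle for $S$ modulo coboundaries — this is essentially the linearization argument behind Corollary \ref{mila} carried out without assuming $f$ is an embedding. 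I would present the graph-factorization argument as the main proof and mention the direct cocycle computation as the underlying mechanism.
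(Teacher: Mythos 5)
Your proposal is correct and is essentially the paper's own argument: the paper likewise factors $f$ through the graph embedding into the product and the projection onto the factor $S$, notes compatibility is clear for the projection and behaves well under composition, and then invokes the embedding case, Corollary \ref{mila}. The extra details you supply (behavior of $\omega_2$ for a product, identification of the normal bundle of the graph, and the underlying cocycle mechanism) are just an expansion of the same route.
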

\begin{proof}
The map $f: S' \to S$ factors through the graph embedding $\Gamma: S' \to S \times S'$ and the projection on the first factor. The result is clear for the projection and behaves well under composition, so this corollary follows from the  embedding case which is Corollary \ref{mila}.
\end{proof}

\vspace {0.2 in}

\section{Super Riemann Surfaces}\label{SRSs}

\subsection{Basics}\label{basics}
A {\em Super Riemann Surface}\footnote{The term ``super Riemann surface'' was introduced in \cite{F}.   The same objects
were called ``superconformal manifolds'' in \cite{BS1}.  In our terminology, a super Riemann surface, or SRS for short, 
is a $(1|1)$ complex supermanifold with a superconformal structure.}  is a pair $\S = (S,\D)$ where $S = (C, \OO_S)$ is a complex analytic supermanifold of dimension $(1|1)$ and $\D$ is an everywhere non-integrable odd distribution, $\D \subset TS$.  Recall that the square of an odd vector field $v$ is an even vector field $v^2 = \frac{1}{2} \{v,v\}$. The distribution $\D$ generated by $v$ is said to be {\em integrable} if $v^2 \in \D$, and 
{\em  everywhere non-integrable} if $v^2$ is everywhere independent of $v$.  In the latter case, $v$ and $ v^2$ span the full tangent bundle $TS$, and
thus the nonintegrable distribution $\D$ is actually part of an exact sequence
\begin{equation}\label{exseq} 0\to \D\to TS\to \D^2\to 0. \end{equation}
The everywhere nonintegrable odd distribution $\D$ endows the $(1|1)$ supermanifold $S$ with what is called a superconformal structure.

As is the case for supermanifolds in general, 
interesting phenomena usually concern not a single Super Riemann Surface 
but a family of Super Riemann Surfaces over a base that is itself a supermanifold.
This means a family of complex analytic supermanifold of dimension $(1|1)$
together with a subbundle of the relative tangent bundle, of rank $(0|1)$ and 
everywhere non-integrable.

\begin{lemma}\label{egh}
Locally on a super Riemann surface $S$
one can choose coordinates $x$ and $\theta$, referred to as superconformal coordinates, such that $\D$ is generated by the odd vector field
\begin{equation} \label{SRSD}
v :=\frac{ \partial }{\partial \theta }+  \theta\frac{ \partial }{\partial x}. \end{equation}  More specifically, if $x$ is any even function on $S$ that reduces mod
nilpotents to a local parameter on the reduced space, then there is (locally in the analytic or etale topologies) an odd function $\theta$ on $S$ such that $x|\theta$ are superconformal 
coordinates.
\end{lemma}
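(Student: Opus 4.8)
The plan is to work locally with arbitrary coordinates and successively adjust them until $\D$ is put in the stated normal form. Start with any local coordinates $x_0 \mid \theta_0$ on $S$ such that $x_0$ reduces mod nilpotents to a local parameter on $C$ (the existence of such coordinates is the definition of a $(1|1)$ supermanifold). A generator $v_0$ of $\D$ can then be written as $v_0 = a\,\partial_{\theta_0} + b\,\partial_{x_0}$ where $a$ is an even function and $b$ is an odd function; non-integrability of $\D$, i.e. the condition that $v_0$ and $v_0^2$ span $TS$, forces $a$ to be invertible near the point in question. Replacing $v_0$ by $a^{-1}v_0$, we may assume $v_0 = \partial_{\theta_0} + b\,\partial_{x_0}$ with $b$ odd. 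Note this rescaling only changes the generator, not the distribution $\D$, so it is harmless.

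Next I would look for a change of coordinates of the form $x = x(x_0,\theta_0)$, $\theta = \theta(x_0,\theta_0)$ putting $v_0$ (up to an invertible multiple) into the form $v = \partial_\theta + \theta\,\partial_x$. The key observation is that the desired $v$ is characterized by $v^2 = \partial_x$ and $v(\theta) = 1$, $v(x) = \theta$. So the strategy is: \emph{first} produce an even function $x$ with $v_0(x) \cdot(\text{unit}) $ equal to a perfect "square" in the appropriate sense — concretely, one wants $v_0(x)$ to be an odd function whose $v_0$-derivative is invertible — and \emph{then} define $\theta := $ (suitable normalization of) $v_0(x)$, after which one checks $x \mid \theta$ are coordinates and $v_0$ is proportional to $\partial_\theta + \theta\,\partial_x$. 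To find such an $x$: write $b = \beta_0(x_0) + (\text{nilpotent corrections})$; actually since $b$ is odd and there is a single odd coordinate, $b = \theta_0\, c(x_0)$ for an even function $c$. One can then solve order by order in the nilpotent filtration (equivalently, in powers of $\theta_0$, but here the only subtlety is the even nilpotents coming from parameters if $S$ sits in a family — in the absolutely local, non-family case $\OO_S$ is just $\OO_C[\theta_0]$ and the computation is finite and explicit). I would carry out the substitution $x = x_0 + \theta_0\,(\text{odd})$, $\theta = \theta_0\,(\text{even}) + (\text{odd})$ and match coefficients; the leading terms are determined by an ODE in $x_0$ for the coefficient functions, solvable locally, and the higher corrections are then algebraically determined.

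The main obstacle — and the step that needs the non-integrability hypothesis in an essential way — is ensuring that the candidate $\theta$ really is a coordinate, i.e. that $v_0(x)$ is, up to a unit, an honest odd coordinate transverse to $x$. This is exactly where $v_0^2$ being independent of $v_0$ is used: if $\D$ were integrable one could not straighten it this way (one would instead get $v^2 \in \D$, the Frobenius-type normal form), so the argument must invoke $v_0^2 = \tfrac12\{v_0,v_0\}$ being a nonzero multiple of $\partial_x$ modulo $\D$ at the point. Concretely I expect to reduce to checking that a certain $2\times 2$ matrix of (super)derivatives — the Jacobian of $(x,\theta)$ with respect to $(x_0,\theta_0)$ — is invertible, and that its invertibility is equivalent to the everywhere-non-integrability of $\D$. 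Once coordinates $x\mid\theta$ are in hand with $v_0 \propto \partial_\theta + \theta\,\partial_x$, the exact sequence \eqref{exseq} and the local picture \eqref{SRSD} follow immediately, proving both the normal-form statement and the refinement that any such $x$ can be completed by a suitable $\theta$.
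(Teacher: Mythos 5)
Your core idea is the right one, and it is essentially the change of variables the paper has in mind: after normalizing the generator to $v_0=\partial_{\theta_0}+b\,\partial_{x_0}$, the new odd coordinate should be a suitably normalized multiple of $v_0(x)$, and the superconformal condition in the new coordinates is exactly $v_0(x)=v_0(\theta)\,\theta$ with $v_0(\theta)$ invertible, since then $v_0=v_0(\theta)\,(\partial_\theta+\theta\,\partial_x)$. But as written there are two gaps. First, the step in which you propose to ``first produce an even function $x$'' with $v_0^2(x)$ invertible, by an order-by-order/ODE construction, is both unnecessary and, taken literally, loses the second statement of the lemma: if you manufacture a special $x$, you have not shown that an arbitrary given $x$ reducing to a local parameter can be completed by some $\theta$ (this refinement is what gets used later, e.g. in Remark \ref{funremark}). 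The missing observation is that the required property is automatic: writing $b=b_0+\theta_0 b_1$, one computes $v_0^2=(b_1+\text{nilpotent})\,\partial_{x_0}$, and everywhere non-integrability forces $b_1$ to be invertible; hence for \emph{any} even $x$ whose reduction is a local parameter, $v_0^2(x)=(b_1+\cdots)\,\partial_{x_0}x$ is invertible. No ODE has to be solved and $x$ never has to be changed, which is precisely the paper's closing remark that only $\theta$ is modified.

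Second, the ``suitable normalization'' is left unspecified, and it is where the computation has to close. Set $c:=\bigl(v_0^2(x)\bigr)^{-1/2}$, which exists locally because $v_0^2(x)$ is invertible (take a square root of the reduced function and correct by a finite binomial series in the nilpotent part), and put $\theta:=c\,v_0(x)$. Then $v_0(\theta)=v_0(c)\,v_0(x)+c\,v_0^2(x)$, which is invertible, and since $v_0(x)^2=0$ one gets $v_0(\theta)\,\theta=c^2\,v_0^2(x)\,v_0(x)=v_0(x)$, which is the superconformal condition above. Finally $\partial_{\theta_0}\theta$ reduces to a unit times $b_1\,\partial_{x_0}x$, hence is invertible, so $x|\theta$ is indeed a coordinate system; this is your Jacobian check, and it is exactly where non-integrability enters. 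With these two points supplied, your argument is complete and coincides with the paper's (terse) proof, which performs the same elementary change of variables, changing $\theta$ but not $x$.
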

\begin{proof}
In general, in any local coordinate system $x|\theta$, an odd vector field is of the form $v=a(x|\theta)\partial_\theta+b(x|\theta)\partial_x$, where $a$ is
an even  function and $b$ is an odd one.  (In general, $a$ and $b$ depend on additional odd and/or even parameters as well as on $x|\theta$.)
As $b$ is odd, it is nilpotent. The definition of an odd distribution is such that $v$ defines an odd distribution where and only where
$a\not=0$.  (On a supermanifold, to say that an even quantity is nonzero is taken to mean that it is invertible.)  The condition that 
$v$ and $v^2$ generate $TS$ implies that if we write $b(x|\theta)=b_0(x)+\theta b_1(x)$, then $b_1\not=0$ everywhere.
Given the conditions that $a$ and $b_1$ are everywhere nonzero, 
it is elementary to find a change of variables that locally puts $v$ in the form given in eqn. (\ref{SRSD}).  The last statement in the lemma
holds because in this change of variables, it is only necessary to change $\theta$, not $x$.
\end{proof}

Applying  the vector field $v$ of \eqref{SRSD} to a function $f(x) + \theta g(x)$ gives $g(x) + \theta f'(x)$, so applying it twice gives $f'(x) + \theta g'(x)$. In other words,
\begin{equation}\label{szx}
v^2= \frac{\partial }{\partial x}.
\end{equation}

Since a super Riemann surface has dimension $(1|1)$, a divisor in it has dimension $(0|1)$.  One way to specify a subvariety of
dimension $(0|1)$ in any complex supermanifold is to give a point $p$ and an odd tangent vector at $p$.  There is then a unique
subvariety of dimension $(0|1)$ that passes through $p$ with the given tangent vector.  In the case of a super Riemann surface $\S$,
let  $p$ be the point defined by equations $x=x_0$, $\theta=\theta_0$.  For this to make sense, 
with $\theta_0\not=0$,
we must work over a ring with odd elements; the constant $\theta_0$ is an odd element of the ground ring.  The fiber at $p$ of
the distribution $\D$ gives an odd tangent direction at $p$, and this determines a divisor $D$ passing through $p$.  
Concretely, with $\D$ generated as in (\ref{SRSD}), 
$D$ is given in parametric form by $x=x_0+\alpha\theta$, $\theta=\theta_0+\alpha$, where $\alpha$ is an odd parameter.
Alternatively, $D$ is given by the equation
\begin{equation}\label{medlo} x=x_0-\theta_0\theta. \end{equation}
In a general supermanifold of dimension $(1|1)$, a divisor that in some local coordinate system is defined by an equation of this kind
is called a minimal divisor.  (Algebraically, one might call it a prime divisor.)  Since the parameters $(x_0,\theta_0)$
in the equation were the coordinates of an arbitrary point $p\in \S$, this construction gives 
a natural 1-1 correspondence between points and minimal divisors on a super Riemann surface $\S$.  On a general complex
supermanifold $S$ of dimension $(1|1)$, there is no such natural correspondence between points and minimal divisors.  (Rather,
to a $(1|1)$ supermanifold $S$, one associates a dual supermanifold $\widehat S$ that parametrizes minimal divisors in $S$; the relationship
between $S$ and $\widehat S$ is actually symmetrical.)

\remark{For a concrete special case of the relationship between points and minimal divisors on a super Riemann surface, 
consider, in local superconformal
coordinates $x|\theta$, the divisor $x=0$.  The equation $x=0$ is of the form (\ref{medlo}) for $x_0=\theta_0=0$, so the distinguished
point associated to the divisor $x=0$ is given by $x=\theta=0$.  On a general  $(1|1)$ supermanifold, one could make an automorphism
$x\to x,$ $\theta\to\theta+\alpha$, with $\alpha$ an odd parameter, and thus there would be no distinguished point on the divisor
$x=0$.  On a super Riemann surface, there is no such automorphism preserving the superconformal structure. 

 The example that we have
just described is typical in the sense that any minimal divisor on a super Riemann surface takes the form $x=0$ in some system $x|\theta$
of local superconformal coordinates.  This assertion follows from the last remark in Lemma \ref{egh}, or alternatively from the fact
that if $x|\theta$ are local superconformal coordinates, then so are $x-a+\eta\theta|\theta-\eta$, where here $a$ and $\eta$ are arbitrary
even and odd parameters.  }\label{funremark}

\subsection{Moduli}\label{moduli}

\mbox { }

In ordinary bosonic algebraic geometry, one encounters the coarse moduli space $\M_g$ of genus $g$ curves 
(or: Riemann surfaces) as well as the moduli stack $M_g$. The stack is characterized by the collection of maps to it: a 
map $\phi: B \to M_g$ from a variety $B$ to the moduli stack $M_g$ is specified by a family of genus $g$ curves 
parametrized by $B$. Such a family determines a map $\bar{\phi}$ of $B$ to the moduli space $\M_g$ too; this is 
just the composition of $\phi$ with the natural map $\pi: M_g \to \M_g$. But not every such map $\bar{\phi}$ 
arises from a family over $B$; it does if and only if it factors through $M_g$. For example, the identity 
map of $\M_g$ to itself does not arise from a family: there is no universal curve over $\M_g$. The problem 
arises from the automorphisms. Every curve of genus $\leq 2$, and some curves in any genus, 
have non-trivial automorphisms. At the corresponding points of moduli, the map $\pi: M_g \to \M_g$ is not a 
local isomorphism: it looks locally like a quotient by the automorphism group.

For super Riemann surfaces,  the 
problem is more severe. Any split super Riemann surface $\S$, e.g. any SRS over a point 
(as opposed to a family of SRS's), has a non-trivial automorphism which is the identity on $\S_{\red}$ and 
acts as $-1$ in the odd direction. In local coordinates, this is $x \mapsto x, ~~\theta \mapsto - \theta$. So super Riemann surfaces with
non-trivial automorphisms are dense in any family of super Riemann surfaces, 
and unlike the case of ordinary Riemann surfaces, one cannot avoid this by taking a finite cover
of the moduli space (for example, by  fixing a level structure).  
So the stacky nature of the moduli problem is more essential for super Riemann surfaces than for ordinary ones and 
what we will loosely call super moduli space and denote as $\SM_g$ must be properly understood not as a supermanifold
but as the supermanifold analog of a stack.  We return to this in section \ref{stacks}, but for now we consider local questions.

By Corollary \ref{m|1}, a $(1|1)$ supermanifold $S$ defined over a field  (as opposed to one that is defined over a ring with odd elements)
is split, so it is specified by a pair $(C,V)$ where $C$ is an ordinary Riemann surface and $V$ is the  analytic line bundle on $C$ underlying $\D$. We refer to the genus of $C$ also as the genus of $\S$. The  calculation \eqref{szx}  shows that $V^{\otimes 2} \sim T_C$, so $V$ (or rather, its dual) is a spin structure  on $C$. An ordinary Riemann surface with a choice of spin structure is often called a spin curve, and thus the reduced space
$C$ of a super Riemann surface $\S$ is a spin curve.  It follows also that the reduced space  of the moduli space $\SM_g$ of super
Riemann surfaces of genus $g$ 
is the moduli space  (or rather stack)  $\sM_g$ of spin curves.  Just like $\sM_g$, $\SM_g$ has two components, corresponding
to even and odd spin structures.

For a super Riemann surface constructed from a pair $(C,V)$, we will denote $V$ as  ${T_C}^{1/2}$, and 
its $n$-th power by either $T_C^{\otimes \frac{n}{2}} = T_C^{ \frac{n}{2}}$ or equivalently $K_C^{\otimes (- \frac{n}{2})} = K_C^{ - \frac{n}{2}}$.

As in ordinary algebraic geometry, to
understand the deformation theory of  a super Riemann surface $\S$, we first must consider the automorphisms.
Locally, the superconformal vector fields, i.e. infinitesimal automorphisms of $\S$, are given by vector fields that preserve the distribution $\D$.  In
superconformal coordinates, a short calculation shows that an even superconformal vector field takes the form
\begin{equation}\label{evensup} f(x)\frac{\partial}{\partial x}+\frac{f'(x)}{2}\theta\frac{\partial}{\partial\theta},\end{equation}
while an odd one takes the form
\begin{equation}\label{enrup} -g(x)\left(\frac{\partial}{\partial\theta}-\theta\frac{\partial}{\partial x}\right).\end{equation}
We stress that $f$ and $g$ are functions of $x$ only, and not $\theta$.
One defines a subsheaf $W$ of the tangent bundle $TS$ whose local sections are of the form (\ref{evensup}) and (\ref{enrup}).
It is called the sheaf of superconformal vector fields.
Since it is defined by the condition of preserving the superconformal structure of $\S$, $W$ is a sheaf of $\Z_2$-graded Lie algebras.
From this point of view, $W$ is not naturally a locally-free sheaf (multiplying a superconformal vector field by a function of $x$ and $\theta$,
or even a function of $x$, does not in general give a new superconformal vector field).  However, forgetting its structure
as a sheaf of graded Lie algebras, $W$ can be given the structure of a locally-free sheaf.  For this, we just think of $W$ as a subsheaf
of the sheaf of sections of $TS$, so that in view of the exact sequence (\ref{exseq}), $W$ can be projected to the sheaf of sections
of $\D^2$.  A short calculation in local superconformal coordinates  shows that this projection is an isomorphism,
so $W$ can be identified with the sheaf of sections of $\D^2$. Indeed, in local superconformal coordinates $x|\theta$,
$TS/\D\cong \D^2$ is generated by $\partial_x$, so a general section of $\D^2$ is $a(x|\theta)\partial_x$ for some function $a(x|\theta)$.  So
we must show that $a(x|\theta)\partial_x$ can in a unique fashion be written as a section of $W$ modulo $\D$.  This follows from the formula
\begin{equation}\label{fondo}\left(f(x)\partial_x+(f'(x)/2)\theta\partial_\theta\right)
-g(x)\left(\partial_\theta-\theta\partial_x\right)=(f(x)+2\theta g(x))\partial_x~~\mathrm{mod}~\D,\end{equation}
which shows that a general section of $W$ is $a(x|\theta)\partial_x$ mod $\D$, with 
$a(x|\theta)=f(x)+2\theta g(x)$.  

Just as in bosonic algebraic geometry, the first-order deformations of $\S$ are given by the first cohomology of $\S$ with values
in the sheaf of infinitesimal automorphisms.  So first-order deformations are given by $H^1(\S,W)$, or equivalently by $H^1(\S,\D^2)$. (Sheaf cohomology on a SRS $\S$ means cohomology of the same sheaf on the underlying supermanifold $S$, which in turn was defined to be the cohomology of the corresponding sheaf on the reduced space $C$.)
This gives the tangent space to the moduli space $\SM_g$ of super Riemann surfaces at the point corresponding to a super Riemann
surface $\S$:
\begin{equation}\label{tangentspace} T_{\S}\SM_g=H^1(\S,W)=H^1(\S,\D^2). \end{equation}
If $\S$ is split (by which we mean that the underlying supermanifold $S$ is split), we can make this more explicit.  For $\S$ split with 
reduced space $C$, the sheaf $W$ of superconformal vector fields is the direct sum of its even and odd parts $W_+$ and $W_-$,
where $W_+$ is the sheaf of sections of $T_C$ and $W_-$ is the sheaf of sections of $T_C^{1/2}$.  This enables us to identify the even
and odd tangent spaces $T_\pm \SM_g$.  Their fibers at the point in $\sM_g$ corresponding to $C$ are
\begin{align}\label{gensp} T_{+,\S}\SM_g &= H^1(C,T_C) \cr T_{-,\S}\SM_g&= H^1(C,T_C^{1/2}). \end{align}

\subsubsection{\it More on the moduli stack}\label{stacks}

A closer examination of the formula (\ref{gensp}) for the odd normal bundle to $\sM_g$ in $\SM_g$ leads to a better understanding
of why the ``stacky'' nature of the moduli problem is more central for  super Riemann surfaces than for ordinary ones.

Suppose that $B$ is an algebraic variety that parametrizes a family of curves with spin structure.
We denote this family as $\pi:X\to B$ and denote a fiber of this fibration as $C$.  By definition, since $B$ parametrizes
a family of curves with spin structure, each $C$ comes with an isomorphism class of line bundle $\L=T_C^{1/2}$ with
an isomorphism $\varphi:\L^2\cong TC$.  We define  $\T\to X$ to be the  relative tangent bundle, i.e. the 
tangent bundle along the fibers of $\pi:X\to B$. 
If there is a line bundle $\L\to X$ with an isomorphism $\varphi:\L^2\cong \T$, and such
that $\L$ restricted to each $C$ is isomorphic to the given $T_C^{1/2}$, then we call $\L$ a relative spin bundle.

However, in general, the existence of a relative spin bundle is obstructed.  The essential reason is that locally, after a line bundle
$\L$ with an isomophism $\varphi:\L^2\cong\T$ is chosen, we are still left with the group of automorphisms $\{\pm 1\}$ acting
on $\L$ without changing $\varphi$.  Thus, locally, the pair $\L$, $\varphi$ is unique up to isomorphism but not up to a unique isomorphism, and this leads to a global
obstruction.  One may cover $B$ with small open sets $\O_\alpha$ and choose for each of them a line bundle $\L_\alpha \to \pi^{-1}
(\O_\alpha)$ with an isomorphism $\varphi_\alpha:\L_\alpha^2\cong \T$.  Since locally the pair $\L_\alpha$, $\varphi_\alpha$
 is unique up to isomorphism,
on each $\O_\alpha\cap\O_\beta$, one can pick an isomorphism $\psi_{\alpha\beta}:\L_\alpha\to \L_\beta$ such that
$\varphi_\alpha=\varphi_\beta\circ(\psi_{\alpha\beta}\otimes \psi_{\alpha\beta})$.  This last condition determines $\psi_{\alpha\beta}$ uniquely
up to sign, but in general there is no natural way to fix the sign. In a triple intersection $\O_\alpha\cap\O_\beta\cap\O_\gamma$, 
set $\lambda_{\alpha\beta\gamma}=\psi_{\gamma\alpha}\psi_{\beta\gamma}\psi_{\alpha\beta}$.  In general, $\lambda_{\alpha\beta\gamma}=
\pm 1$.   If the signs of the local
isomorphisms $\psi_{\alpha\beta}$ can be chosen so that $\lambda_{\alpha\beta\gamma}=+1$ for all $\alpha,\beta,\gamma$, then the
$\L_\alpha$ can be glued together via the isomorphisms $\psi_{\alpha\beta}$ to make a relative spin bundle $\L$.  In general,
the $\lambda_{\alpha\beta\gamma}$ are a 2-cocycle representing an element  $\varrho\in H^2(B,\Z_2)$.  When $\varrho\not=0$,
this obstructs the existence of $\L$; in this situation, we can say that $\L$ exists not as an ordinary line bundle, but as a twisted
line bundle, twisted by the $\Z_2$ gerbe $\hat\varrho$ corresponding to $\varrho$. 
 In general, such an obstruction can arise even if $B$ is simply-connected, so it cannot be removed
by replacing $B$ by an unramified cover.   

Let us consider in this light the case that $B$ is the spin moduli space $\sM_g$ that parametrizes
pairs consisting of a curve $C$ with an isomorphism class of spin structure.   A universal spin curve $\pi:X\to \sM_g$ exists
if one suitably interprets $\sM_g$ as an orbifold or  stack, to account for the possibility that a curve $C$ may have non-trivial automorphisms that preserve
its  spin structure.  If there were a universal relative spin bundle $\L$ in this situation, then we would interpret (\ref{gensp}) to mean
that the odd normal bundle to $\sM_g$ in $\SM_g$ is the vector bundle over $\sM_g$ with fiber $H^1(C,\L|_C)$.  
However, the existence of such an $\L$ is obstructed\footnote{Relatively simple families realizing the obstruction have been described
by J. Ebert and O. Randal-Williams and by B. Hassett, who has also pointed out reference \cite{GV}.} for sufficiently large $g$ \cite{ER,R}.  (We do not know if the obstruction can be
eliminated by endowing $C$ with a suitable level structure or by otherwise taking a finite cover of $\sM_g$.)  
This obstruction means that the ``odd normal bundle'' to $\sM_g$ in $\SM_g$ is not a vector bundle, even in the orbifold sense.
It is better described as a twisted vector bundle, twisted by a $\Z_2$-valued gerbe.  

Thus, to properly understand the moduli stack of super Riemann surfaces, one should think of a curve $C$ with spin structure,
even if $C$ has no non-trivial geometrical automorphisms that preserve its spin structure, 
as having a $\Z_2$ group of automorphisms $\{\pm 1\}$ acting on its spin bundle.
When $C$ is such that there is a nontrivial group $F$ of geometrical automorphisms that preserve the spin structure, the automorphism group
that is relevant in $\SM_g$ is the double cover $\widehat F$ of $F$ that acts on $T_C^{1/2}$; in general, this is a nontrivial double cover of $F$.
(The stacky structure of the Deligne-Mumford compactification of $\SM_g$ is still more subtle because in general there are separate
groups $\{\pm 1\}$ acting on the spin bundles of different components of $C$.)

To fully understand $\SM_g$, one should generalize the theory of supermanifolds to a theory of superstacks and understand $\SM_g$
in this framework.  A very special case of this more general theory is as follows.  Let $\hat\varrho$ be a $\Z_2$-gerbe over an ordinary
manifold (or algebraic variety) $M$.  Then by a $\hat\varrho$-twisted supermanifold $S$ with reduced space $M$, we mean a $\hat\varrho$-twisted
sheaf $\O_S$ of $\Z_2$-graded algebras, such that the even part of $\O_S$ is an ordinary sheaf, the odd part is a $\hat\varrho$-twisted
sheaf, and $\O_S$ is locally isomorphic to the sheaf of sections of $\wedge^\bullet V$, where $V\to M$ is a $\hat\varrho$-twisted vector
bundle.  
In the approximation of ignoring geometrical automorphisms (or eliminating them by picking a level structure), we can view
$\SM_g$ as a $\hat\varrho$-twisted supermanifold with reduced space $\sM_g$, where $\hat\varrho$ is the gerbe associated to the obstruction
$\varrho$ 
to finding a relative spin bundle.   The obstruction class $\omega_2$ to splitting of a $\hat\varrho$-twisted
supermanifold can be defined rather as for ordinary supermanifolds, with similar properties.

For the limited purposes of the present paper, we do not really need to go  in that direction.  Our considerations 
showing that $\SM_g$ is not projected involve concrete 1-parameter families of spin curves, over which a relative spin bundle will be visible. So the supermanifold framework is adequate for our purposes.  We will construct explicit
families of super Riemann surfaces, parametrized by a base $B$ that  in our examples generally will have dimension $(1|2)$,
and show that the  class $\omega_2(\SM_g)$ that obstructs
projection or splitting of $\SM_g $ is non-zero by showing that it has a nonzero
 restriction to $B$ -- or more precisely a nonzero  pullback to $B$, where this more precise statement accounts for the fact that
 some of the spin curves parametrized by $B$ have geometrical automorphisms.

\subsection{Punctures}

\mbox {   }

The notion of a puncture or a marked point on an ordinary Riemann surface has two analogs on a super Riemann surface.
In string theory, they are known as Neveu-Schwarz (NS) and Ramond punctures, respectively.

An NS puncture in a super Riemann surface $\S$
is the obvious analog of a puncture in an ordinary Riemann surface.  It is simply the choice of a point in
$\S$, given in local coordinates $x|\theta$ by $x=x_0$, $\theta=\theta_0$, for some $x_0,\theta_0$.  As we have 
learned in section \ref{basics}, such an NS puncture determines and is determined by a minimal divisor on $\S$.  Just as in the classical case,
deformation theory in the presence of an NS puncture at a point $p$  is obtained by restricting the sheaf $W$ of superconformal vector fields
to its subsheaf $W_p$ consisting of superconformal vector fields that leave fixed the point $p$.  By this definition, $W_p$ is a subsheaf of the sheaf
of sections of $TS$, but not a locally free subsheaf.  However, rather as we explained in the absence of the puncture, $W_p$ can be given 
a natural structure of a locally-free sheaf; in fact, it is isomorphic to $\D^2(-F)$, where $F$ is the minimal divisor that corresponds to the point $p$
in the correspondence described in section \ref{basics}.  To see this, we use local coordinates $x|\theta$, and take $p$ to be the point $x=\theta=0$.
The condition for a superconformal vector field to vanish at $p$ is then that 
 $f(0)=g(0)=0$ in eqns. (\ref{evensup}) and (\ref{enrup}).  This means precisely that $a(x|\theta)=f(x)+2\theta g(x)$ vanishes at $x=0$ in the computation described
 in eqn. (\ref{fondo}). But the divisor $F$ corresponding to the point $p$ is defined by $x=0$, as explained in Remark \ref{funremark}.  So the condition
 that $a(0|\theta)=0$ precisely means that $a(x|\theta)\partial_x$ is a section of $\D^2(-F)$.

If $\SM_{g,1}$ is the moduli space of super Riemann surfaces with a single NS puncture, then its reduced space is $\sM_{g,1}$,
which parametrizes spin curves of genus $g$ with a single puncture $p$.  The analog of eqn. (\ref{gensp}) is
\begin{align}\label{genspo} T_{+,S}\SM_{g,1} &= H^1(C,T_C(-p)) \cr T_{-,S}\SM_{g,1}&= H^1(C,T_C^{1/2}(-p)), \end{align}
where the twisting by $\O(-p)$ reflects the conditions $f(0)=g(0)=0$.  Eqn. (\ref{genspo}) has an obvious generalization for the moduli
space $\SM_{g,n}$ of super Riemann surfaces with any number $n$ of NS punctures.

We also are interested in the case of a $(1|1)$ supermanifold $\S$ that is endowed with a superconformal structure that degenerates
along a divisor in the following way.  We assume that the underlying supermanifold
$S = (C, \OO_S)$ is still smooth, but the  odd distribution $\D \subset T_-S$ is no longer  everywhere non-integrable: the local form \eqref{SRSD} for a generator is replaced by 
\begin{equation} \label{SRSD wRP}
v := \frac{\partial }{\partial \theta} +  x^k \theta \frac{\partial }{\partial x}.
\end{equation}
For negative $k$, such $v$ is meromorphic, and $\D$ fails to be a distribution along the divisor $x=0$.  (One can multiply by $x^{-k}$ to
remove the pole, but the resulting vector field $x^{-k}\partial_\theta+\theta\partial_x$ vanishes modulo nilpotents at $x=0$ and does
not define a distribution there; the notion of odd distribution is explained  for instance in the proof of Lemma \ref{egh}.)
For $k \ge 1$, such a $\D$ is a distribution but the non-integrability fails along the divisor $x=0$, with multiplicity $k$. 
In fact, $v^2 = x^k \partial /\partial x$ vanishes along the divisor $x=0$ to order $k$.
We say that $\S = (S,\D)$ is a {\em SRS with a  parabolic structure}  of order $k$ at the divisor $x=0$ (and we use this definition also for negative $k$).  The basic case $k=1$ is called a Ramond puncture by string theorists. The local form is:
\begin{equation} \label{SRSDR}
v :=\frac{ \partial }{\partial \theta }+  x \theta \frac{\partial }{\partial x}.
\end{equation}
It can be shown that in the presence of a Ramond puncture, the sheaf of superconformal vector fields is still isomorphic to $\D^2$, but
$\D^2$ is no longer isomorphic to $TS/\D$; rather, $\D^2\cong TS/\D\otimes \OO(-F)$, where now $F$ is the divisor on which the superconformal
structure degenerates (thus, in the example (\ref{SRSDR}), $F$ is the divisor $x=0$).

 For our purposes, the importance of parabolic structures  is that they arise naturally
in branched coverings, as we will see in section  \ref{SRSbranched}.   
In string theory, and also in the context of the Deligne-Mumford compactification of supermoduli space, there is a close analogy
between NS and Ramond punctures, and it is natural to define moduli spaces -- or rather stacks -- that parametrize super Riemann surfaces
with a specified number of punctures of each type.
However, in this paper, we limit ourselves to the moduli spaces $\SM_{g,n}$ of super Riemann surfaces
of genus $g$ with $n$ NS punctures.  

The moduli space $\SM_{g,1}$ can be interpreted as the total space of the universal genus $g$ super Riemann surface parametrized by 
$\SM_g=\SM_{g,0}$. It has dimension $(3g-2|2g-1)$ and comes with a morphism to $\SM_{g}$ whose fibers are the $(1|1)$ supermanifolds $S$ underlying the genus $g$ super Riemann surfaces parametrized by $\SM_g$.  In the stacky sense, it is  not necessary to correct
this statement for small $g$ to take into account the generic automorphisms of a Riemann surface.  In our applications, we  will always be pulling back
the relevant structures on moduli stacks to concrete families of Riemann surfaces or super Riemann surfaces and again we need not worry
about the automorphisms.

\subsection{Effects of geometric operations}

\subsubsection{\it Effect of a branched covering} \label{SRSbranched}
Let $\pi: \wt{C} \to C$ be a branched covering of ordinary Riemann surfaces. There are some ramification points $\wt{p}_j \in \wt{C}$ whose images in $C$ are the branch points $p_j = \pi(\wt{p}_j)$, where the covering map $\pi$ has local degree 
$k_j \ge 2$. Let $\wt{S} \to S$ be a branched covering of $(1|1)$-dimensional complex supermanifolds whose reduction is $\pi$, as in section \ref{covers}. Corresponding to each $\wt{p}_j$ there is now a ramification divisor $R_j \subset \wt{S}$ sitting over a branch divisor $B_j \subset S$.   Let $\S = (S,\D)$  be a SRS with underlying supermanifold $S$. The structure induced on $\wt{S}$ is not that of a SRS, but rather a SRS with parabolic structure of (negative) order $1-k_j$ along the $R_j$. Indeed, if $v$ is given near $p_j \in C$ by \eqref{SRSD}, and the local coordinate  $w$ near $\wt{p}_j \in \wt{C}$  satisfies $w^{k_j}=x$, then the induced (meromorphic) vector field upstairs is $\wt{v} =  \partial /\partial \theta +  \theta \partial /\partial x =
\partial /\partial \theta +  \frac{1}{k_j} w^{1-k_j} \theta \partial /\partial w$.

The above remains true, of course, when $k_j=1$, except that $p_j$ is no longer a branch point and $\wt{v}$ no longer has a pole. A little more generally,  if $\S = (S,\D)$  is a SRS with underlying supermanifold $S$ and parabolic structure of order $m_j$ at minimal divisor $B_j$, and  $\pi: \wt{C} \to C$ is a branched covering of ordinary Riemann surfaces with local degree $k_j \ge 1$ at $\wt{p}_j$, then a branched covering  $\wt{S} \to S$ inherits the structure of a SRS with parabolic structure
at each $R_j$ of order $k_j  (m_j -1) +1$. This follows immediately from equation \eqref{SRSD wRP} and the local form of the branched cover, as given in section \ref{covers}.

\subsubsection{\it Effect of a blowup}\label{SRS blowup}

\mbox {   }
\def\wh{\widehat}
Let $S$ be a $(1|1)$-dimensional complex supermanifold, with local coordinates $x|\theta$ defining the point, or codimension 
$(1|1)$ submanifold $p: \{x = \theta = 0 \}$. In section \ref{blowup} we described the blowup $\wh{S}$ of $S$ 
at $p$: it is again a  $(1|1)$-dimensional complex supermanifold, with the same reduced manifold as $S$, 
but with a new (and ``larger'') structure sheaf. It has local coordinates $\wh{x}|\wh{\theta}$ such that the map 
$\wh{S} \to S$ sets $\wh{x} = x, \wh{\theta} = \theta/x$, replacing the point $p$ by the minimal divisor $\wh{x} = 0$.

Now let ${\S} = ({S},{\D})$ be a SRS with local superconformal coordinates $x|\theta$,
and thus with the distribution $\D$ generated by $v=\partial_\theta+\theta\partial_x$.
As in section \ref{SRSbranched}, let $\pi:\wt \S\to \S$ be a branched cover of $\S$,
with $k$-fold ramification along the divisor $x=0$.  Concretely, $\wt S$ is described by local coordinates $y|\theta$, where $y^k=x$,
and the distribution is generated by $v=\partial_\theta+(\theta/k y^{k-1})\partial_y$, with parabolic structure of degree $-(k-1)$.
In $\S$, the divisor $x=0$ has the distinguished point $x=\theta=0$ (see Remark \ref{funremark}), and this can be pulled
back to the point $y=\theta=0$ in $\wt S$.   Then we can blow up this point, to get a new complex supermanifold $\widehat S$ with
local coordinates $\widehat y=y$ and $\wh \theta=\theta/y$.    The generator of the distribution becomes $ v= (1/y)\partial_{\wh\theta}
+(1/k y^{k-2}) \wh\theta\partial_x$.  Away from $y=0$, the same distribution is generated by $\wh v=yv=\partial_{\wh\theta}+(1/ky^{k-3})
\wh\theta\partial_y$.

For $k=3$, which actually will be the basic case in our later applications, this simple blowup has eliminated the parabolic
structure, and $\wh S$ has an ordinary super Riemann surface structure near $y=0$.    More generally, for any $k$,
the effect of a blowup has been to increase the order of a  parabolic structure by $2$.

Actually, we can make a blowup to  increase the order of a parabolic structure by any positive even number. This is not achieved by a series of blowups as above; the problem is that after the first blowup, there is no distinguished point to blow up inside the exceptional divisor. Instead, we need to blow up a more complicated multiple point, specified by a certain sheaf of ideals.
We describe this ideal as follows.  Downstairs, the distinguished point $x=\theta=0$ on the divisor $x=0$ determines the ideal
$I$ generated by $x$ and $\theta$.  The ideal $\pi^*(I)$ is generated by $y^k$  and $\theta$.  On the other hand, upstairs, we have the
ideal $J$ generated by $y$ and hence also its $\ell^{th}$ power $J^\ell$, generated by $y^\ell$. We let $I_{\ell}=(J^\ell,\pi^*I)$ be the
ideal generated by $J^\ell$ and $\pi^*I$. For $k \geq \ell$ it is generated, in local coordinates, by $y^\ell$ and $\theta$.
 Blowing up along this  ideal has the effect
of replacing $\theta$ by $\theta' := \theta/y^{\ell}$ and thus  increasing the order of the parabolic structure by $2\ell$, i.e. from 
$1-k$ to $1-k+2\ell$. If $k$ is odd, we can thus produce a SRS with no parabolic structure.  If $k$ is even, we can reduce to the case
of parabolic structure of degree 1.  These are the two cases (no parabolic structure and parabolic structure of degree 1, corresponding
to a Ramond puncture) that are usually
considered in string theory.

In short, we can eliminate the parabolic structure caused by a branched covering of local degree $k$ if and only if $k$ is odd, in which case we need to blowup the sheaf $I_\ell$, with $\ell=(k-1)/2$. Recalling the result in section \ref{covers}, we conclude{\footnote {A related assertion has been made on p. 61 of \cite{Natanzon}.}} :

\begin{proposition}\label{branched covers of SRS}
To a family $f: \S  \to \SM$ of super Riemann surfaces parametrized by a super manifold $\SM$, with underlying  super manifold $S \to \SM$, together with a branched covering map $\pi_{{\mathrm{red}}}: \wt{C} \to C$ of the reduced space $C := S_{{\mathrm{red}}}$ with all its local degrees {\em odd}, and a divisor $D \subset S$ whose reduced manifold is the branch locus $B$ of $\pi_{{\mathrm{red}}}$, there is naturally associated a family of super Riemann surfaces $\wt{f}: \wt{S}  \to \SM$ which factors: $\wt{f} = f \circ \pi$ through a branched covering map 
$\pi: \wt{\S} \to \S$   whose reduced version is the given $\pi_{{\mathrm{red}}}$.
\end{proposition}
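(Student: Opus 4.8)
The plan is to assemble Proposition \ref{branched covers of SRS} from the three local constructions already set up in the excerpt: the branched-covering-of-supermanifolds recipe of section \ref{covers}, the parabolic-structure bookkeeping of section \ref{SRSbranched}, and the blowup-along-$I_\ell$ operation of section \ref{SRS blowup}. First I would handle the case of a \emph{single} super Riemann surface $\S=(S,\D)$ (no parameters), since the passage to families is then essentially formal, as the excerpt already notes in section \ref{covers}: given $\pi:S\to B$ one forgets $\pi$, performs all constructions on the total space $S$, and then remembers the map to $B$; since blowups and branched covers commute with the smooth projection to $B$ (they only modify the fibre directions), the resulting $\wt S$ is again smooth over $B$.

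For a single $\S$, the first step is to apply the construction of section \ref{covers} to the data $(S,\wt C\to C, D)$. This produces a $(1|1)$ complex supermanifold $\wt S$ with a morphism $\pi:\wt S\to S$ whose reduced map is the given $\pi_{{\mathrm{red}}}$ and whose branch divisor is $D$; here one must check the hypotheses of section \ref{covers} are met, namely that $D$ is a divisor (codimension $(1|0)$) on $S$ meeting $S_{{\mathrm{red}}}$ in $B$ — this is exactly what is assumed. The second step is to transport the superconformal structure: by section \ref{SRSbranched}, $\D$ pulls back to an odd distribution on $\wt S$ that is a genuine distribution but with a parabolic structure of order $1-k_j$ along each ramification divisor $R_j$ over the branch point $p_j$ of local degree $k_j$. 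The third step, done locally near each $R_j$, is to blow up the sheaf of ideals $I_{\ell_j}$ with $\ell_j=(k_j-1)/2$ — a non-negative integer precisely because each $k_j$ is \emph{odd} — which by the computation in section \ref{SRS blowup} raises the parabolic order from $1-k_j$ to $1-k_j+2\ell_j = 0$, i.e. removes the parabolic structure and leaves an honest superconformal structure near $R_j$. Away from the $R_j$ the map $\pi$ is unramified and nothing needs to be done, so these local modifications glue to a global super Riemann surface $\wt\S=(\wt S_{\mathrm{new}},\wt\D)$, where $\wt S_{\mathrm{new}}$ is $\wt S$ modified by the blowups.

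Finally I would verify the asserted factorization: the blowup map $\wt S_{\mathrm{new}}\to \wt S$ composed with $\pi:\wt S\to S$ gives $\pi:\wt\S\to\S$, and composing further with $f:\S\to\SM$ gives $\wt f=f\circ\pi$. The reduced version of $\pi$ is unchanged by the blowups (the blowups have $k=1$ in the terminology of section \ref{blowup}, since they only touch odd directions when the centre has the form $(y^\ell,\theta)$ — indeed the blowup of $I_\ell$ replaces $\theta$ by $\theta/y^\ell$ and leaves the reduced space alone), so $\pi_{{\mathrm{red}}}$ really is the given branched covering $\wt C\to C$. One should also note that, as in section \ref{covers}, each local branched-cover piece is unique up to a deck transformation ($k_j$-th root of unity) and each blowup is canonical, so the local data patch consistently; combined with the functoriality over the base $B=\SM$, this yields the family $\wt f:\wt S\to\SM$ in the statement.

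The main obstacle I expect is making the ``glue to a global super Riemann surface'' step rigorous: section \ref{SRS blowup} performs the blowup in a fixed local superconformal coordinate system $x|\theta$ adapted to the branch divisor, and one must check that the resulting modification of $\wt S$ is independent of this choice — i.e. that two adapted coordinate systems on an overlap differ by an automorphism preserving the ideal $I_\ell$, so the blowups agree. This is where the rigidity of super Riemann surface structure does the work: by Lemma \ref{egh} and Remark \ref{funremark}, a minimal divisor on a SRS has a \emph{distinguished point}, so the centre of the blowup (the point $x=\theta=0$ on $x=0$, pulled back upstairs) is canonically determined by $\D$ and $B$ alone, with no coordinate ambiguity; hence the ideal $I_\ell$ and its blowup are globally well-defined along each $R_j$. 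Once this canonicity is in hand, the gluing and the compatibility with the family structure are routine.
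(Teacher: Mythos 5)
Your proposal is correct and follows essentially the same route as the paper: it assembles the branched-cover construction of section \ref{covers}, the parabolic-order computation of section \ref{SRSbranched}, and the blowup of the intrinsically defined ideal $I_\ell$ with $\ell=(k_j-1)/2$ from section \ref{SRS blowup}, with the oddness of the local degrees entering exactly where the paper uses it. Your concern about canonicity of the blowup centre is resolved just as in the paper, since $I_\ell=(J^\ell,\pi^*I)$ is defined intrinsically from the ramification divisor and the distinguished point of the branch divisor (Remark \ref{funremark}), and the passage to families is the same formal observation made at the end of sections \ref{covers} and \ref{SRS blowup}.
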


Although we do not need this in the sequel, we note that the above has an interesting converse: any minimal divisor $\wt{x} = 0$ on a $(1|1)$-dimensional complex supermanifold $\wt{S}$, with
local coordinates $\wt x|\wt \theta$,  can be blown down to a point $p: \{x = \theta = 0 \}$ on a $(1|1)$-dimensional complex supermanifold $S$ with the same  reduced space, the same structure sheaf away from $p$, and  coordinates $x=\wt{x}, \quad \theta =x \wt{\theta}$ near $p$. This process is natural, and can be described independently of the choice of coordinates. Given a $(1|1)$ supermanifold $S$ with minimal divisor $D$,
we blow down $D$ by allowing only functions that are constant along $D$.   If more generally $S \to  B$ is a family of $(1|1)$ supermanifolds
with minimal divisor $D$, we only allow functions on $S$ whose restriction to $D$ is a pullback from $B$.
In bosonic algebraic geometry, one could make the same definition by allowing only functions that are constant
along a given divisor, but in general
this would not give the sheaf of functions on an algebraic variety. For a $(1|1)$ supermanifold, 
the computation in the local coordinates $x|\theta$
does show that the blowdown works.
So for example, let $\wt{\S} = (\wt{S},\wt{\D})$ be a SRS that has along the minimal divisor $\wt{x}=0$ a parabolic structure of multiplicity $m + 2$. As we have just seen, this divisor can be blown down, reducing the multiplicity of the  parabolic structure to $m$.  The blowdown
process can be repeated, further reducing the multiplicity.

Everything that we have said in this section works naturally in families. Start with a family of SRS parametrized by some base supermanifold $B$ with  parabolic structure along a relative minimal divisor (i.e. a divisor intersecting each fiber in a minimal divisor). We can blow the divisor down, and thereby reduce the order of the family of parabolic structures by 2. Or we can blow up the relative point corresponding to the given divisor, and thereby increase the order of the family of parabolic structures by 2, or blow up a more subtle sheaf of ideals, increasing
it by $2\ell$ for any integer $\ell$.

\subsection{A non-split supermanifold} \label{local non-splitness}

\mbox {   }

In this section we exhibit a particular non-split supermanifold $X_\eta $. It has dimension $(1|2)$, and is fibered over the odd line $\CC^{0|1}$. The fibers are super Riemann surfaces. In fact, we interpret $\CC^{0|1}$ as an odd tangent line to  $\SM_{g}$, and build our $X_\eta $ by restricting the universal super Riemann surface $\SM_{g,1}$ to this line. This example will serve as a crucial ingredient in our proof of non-projectedness of $\SM_{g,1}$ in section \ref{marked non-splitness}.

Let $\S = (S,\D)$ be a split  SRS of genus $g$, $C:=S_{{\mathrm{red}}}$ the underlying Riemann surface.
Let $\eta  \in H^1(C, T_C^{1/2})$ be an odd tangent vector. It determines a map
$f_\eta : \CC^{0|1} \to \SM_{g}$. By pulling back the universal super Riemann surface $\SM_{g.1} \to \SM_{g}$, we obtain a $(1|2)$-dimensional supermanifold $X_\eta  := f_\eta ^* (\SM_{g,1})$. By definition, $X_\eta $ comes with a projection $\pi_\eta : X_\eta  \to \CC^{0|1}$.
\begin{proposition} 
 $X_\eta $ is projected if and only if $\eta =0$, in which case it is actually split. 
\end{proposition}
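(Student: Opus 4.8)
The plan is to compute the obstruction class $\omega_2(X_\eta)$ explicitly and show it is a nonzero multiple of $\eta$ (more precisely, of a natural image of $\eta$), so that it vanishes exactly when $\eta=0$. First I would set up coordinates on $X_\eta$: since $\CC^{0|1}$ has a single odd coordinate, call it $\zeta$, the reduced space of $X_\eta$ is the reduced space $S_\red = C$ of the fiber $\S$ over the origin, and $X_\eta$ is modeled on $(C, V)$ where, reading off the odd tangent bundle from the fibration, $V = T_C^{1/2} \oplus \OO_C$: one odd direction comes from the odd coordinate of the fiber super Riemann surface $\S$ (whose odd tangent line is $T_C^{1/2}$) and the other from the pulled-back coordinate $\zeta$ on the base $\CC^{0|1}$. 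Since the odd dimension is $2$, Corollary \ref{m|2} applies: $X_\eta$ is determined by the triple $(C, V, \omega_2)$ with $\omega_2 \in H^1(C, \Hom(\exterior{2}V, T_C))$, and $X_\eta$ is projected if and only if it is split if and only if $\omega_2 = 0$. Note $\exterior{2}V = T_C^{1/2}\otimes\OO_C = T_C^{1/2}$, so $\omega_2 \in H^1(C, T_C\otimes T_C^{-1/2}) = H^1(C, T_C^{1/2})$, the same group $\eta$ lives in.

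The core computation is then to identify $\omega_2(X_\eta)$ inside $H^1(C, T_C^{1/2})$. I would do this by unwinding the construction: $X_\eta = f_\eta^*(\SM_{g,1})$, where $f_\eta:\CC^{0|1}\to\SM_g$ is the map corresponding to the tangent vector $\eta\in H^1(C,T_C^{1/2}) = T_{-,S}\SM_g$. Pulling back the universal SRS amounts to taking the family of super Riemann surfaces over $\CC^{0|1}$ whose first-order (in $\zeta$) variation is governed by $\eta$. Concretely, I would represent $\eta$ by a Čech cocycle $\{\eta_{\alpha\beta}\}$ valued in $T_C^{1/2}$ with respect to a cover $\{U_\alpha\}$ of $C$, build the transition functions of $X_\eta$ on overlaps $U_\alpha\cap U_\beta$ by deforming the split SRS transition functions by $\zeta\,\eta_{\alpha\beta}$ acting as a superconformal vector field (formula \eqref{enrup}), and then extract the part of the gluing that is linear in $\zeta$ times a single fiber-$\theta$ and lands in the even tangent direction $\partial_x$ — that is exactly the representative of $\omega_2$ under the identification $G/G^3 = \Hom(\exterior{2}V, T_C)$. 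The upshot should be $\omega_2(X_\eta) = \pm\eta$ (up to a universal nonzero scalar coming from the normalization in \eqref{enrup} and \eqref{drt}), because the mixing of the base-odd direction $\zeta$ with the fiber-odd direction $\theta$ through the $\theta\partial_x$ term in \eqref{enrup} is precisely the obstruction to splitting off the base direction.

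Given $\omega_2(X_\eta) = c\,\eta$ with $c\neq 0$, the proposition is immediate: $\omega_2(X_\eta)=0 \iff \eta=0$, and by Corollary \ref{m|2} this is equivalent to $X_\eta$ being split, which in turn is equivalent (in odd dimension $2$) to being projected; and when $\eta=0$ the family $X_0$ is literally the product $\S\times\CC^{0|1}$ (pullback along the constant map), which is split since $\S$ is assumed split and a product of split supermanifolds is split.

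The step I expect to be the main obstacle is the bookkeeping in the middle paragraph: correctly matching the deformation-theoretic description of $f_\eta$ (via $\eta\in H^1(C,T_C^{1/2})$ as the Kodaira–Spencer class of the family) with the Čech-cocycle description of $\omega_2$ as an element of $H^1(C, G/G^3)$, keeping track of which graded piece of the structure sheaf the relevant transition-function correction lives in, and verifying that the constant $c$ is genuinely nonzero rather than accidentally vanishing. In particular I would want to be careful that the odd parameter $\zeta$ on the base is not itself acted on by any further superconformal freedom, so that the $\zeta\,\theta\,\partial_x$ term cannot be gauged away — this is the analog of the rigidity of the distinguished point on a minimal divisor noted in Remark \ref{funremark}, and it is what makes the obstruction survive.
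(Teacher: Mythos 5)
Your proposal follows essentially the same route as the paper: apply Corollary \ref{m|2} to the $(1|2)$-dimensional $X_\eta$ and identify $\omega_2(X_\eta)=\eta$ in $H^1(C,T_C^{1/2})$ (the paper's Lemma \ref{Id}) by exactly the cocycle comparison you describe --- the deformation is a \v{C}ech cocycle of odd superconformal vector fields $-g(x)\eta\left(\partial_\theta-\theta\partial_x\right)$, whose $\partial_\theta$ part does not affect splitness while its $\theta\partial_x$ part is precisely the obstruction cocycle, so the class cannot be gauged away unless $\eta=0$. The only minor correction is that $T_-X_\eta$ is in general a nontrivial extension of $\OO_C$ by $T_C^{1/2}$ rather than the direct sum you wrote, but since only $\exterior{2}T_-X_\eta\cong T_C^{1/2}$ enters the argument (and Corollary \ref{m|2} needs no splitting of $V$), this does not affect your proof.
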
 
\begin{proof}
First we remark that when $\eta =0$, the map $f_\eta $ is constant, so $X_\eta $ is the product $S \times \CC^{0|1}$ so in particular it is split.
In general, we are in the situation of Corollary \ref{m|2}, so $X_\eta $ is determined by its first obstruction:
\[
\omega := \omega(X_\eta ) \in H^1((X_\eta )_{{\mathrm{red}}}, \Hom( \exterior{2} T_{-} X_\eta , T_{+} X_\eta )). 
\]
We can identify:
\[(X_\eta )_{{\mathrm{red}}} = C\]
\[ T_{+} X_\eta  = TC\]
\[
\exterior{2} T_{-} X_\eta  = T_{-}S \otimes T_-{\CC^{0|1}} = T_C^{1/2} \otimes \OO = T_C^{1/2}.
\] 

So $\omega$ lives in 
$$H^1((X_\eta )_{{\mathrm{red}}}, \Hom( \exterior{2} T_{-} X_\eta , T_{+} X_\eta )) = H^1(C,T_C^{1/2}).$$

Our proposition follows from:
\begin{lemma} \label{Id}
Under the natural identifications, $\omega(X_\eta ) = \eta $.
\end{lemma}

We will give a very pedestrian explanation.  Since $X_\eta$ has odd dimension 2,
the class in $H^1(C,G)$ associated to $X_\eta$ is a 1-cocycle on the split model $S(M,V)=S\times \C^{0|1}$ valued in
vector fields of the form
\begin{equation}\label{modno} w(x)\eta\theta\frac{\partial}{\partial x}, \end{equation}
where $x$ is a local coordinate on $S_{\mathrm{red}}=C$ and $\theta$ is a local odd coordinate on $S$ that vanishes along $C$.
We can view this as a 1-cocycle that deforms $S\times \C^{0|1}$ away from being split.  $S\times \C^{0|1}$ has other first-order
deformations, but they do not affect its splitness.

On the other hand, to deform the super Riemann surface $\S$ in an $\eta$-dependent fashion, leaving it fixed at $\eta=0$,
we use a one-cocycle valued in odd superconformal vector fields on $\S$, multiplied by $\eta$.  Given the form (\ref{enrup}) of
an odd superconformal vector field, the one-cocycle is valued in vector fields of the form
\begin{equation}\label{lumbo} - g(x)\eta\left(\frac{\partial}{\partial\theta}-\theta\frac{\partial}{\partial x}\right)    .       \end{equation}
If we forget the superconformal structure and simply view this as a one-cocycle that we use to deform the complex structure of
$S\times \C^{0|1}$, it is a sum of two terms, namely $-g(x)\eta\partial_\theta$ and $g(x)\eta\theta\partial_x$, that can be considered
separately.  The first term does not affect the splitness of $S\times \C^{0|1}$, but the second does.  Indeed if we set $w=g$,
the second term coincides
with the cocycle (\ref{modno}) that characterizes $X_\eta$. 
The value that the cocycle has in one interpretation is the same as the value that it has in the other interpretation, since
either way the vector field $w(x)\eta\theta\partial_x$ or $g(x)\eta\theta\partial_x$ can be naturally identified with a section over $C$ of 
$T_C^{1/2}$ and concretely the cocycles under discussion  represent  elements of $H^1(C,T_C^{1/2})$.

\end{proof}

\section{Non-projectedness of $\SM_{g,1}$} \label{marked non-splitness}

\mbox {   }

We are now ready to prove Theorem \ref{Main2}, which says that the even spin component of the moduli space $\SM_{g,1}$ of super
Riemann surfaces with one NS puncture is non-projected. The result follows from the more precise:

\begin{proposition}
The first obstruction to the splitting of $\SM_{g,1}$:
\[
\omega:=\omega_2 \in H^1(\sM_{g,1}, \Hom(\exterior{2}T_-,T_+))
\]
does not vanish for $g \geq 2$ (and even spin structure), so the supermanifold $\SM_{g,1}$ is non-projected. 
\end{proposition}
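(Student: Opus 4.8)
The plan is to deduce this Proposition from the explicit non-split supermanifold $X_\eta$ of section \ref{local non-splitness} together with the Compatibility Lemma for submanifolds, Corollary \ref{submanifold splits}. First I would recall that $X_\eta = f_\eta^*(\SM_{g,1})$, where $f_\eta:\CC^{0|1}\to\SM_g$ is the map associated to an odd tangent vector $\eta\in H^1(C,T_C^{1/2})$, and that $X_\eta$ embeds naturally into $\SM_{g,1}$: indeed $X_\eta$ is by construction the restriction of the universal super Riemann surface over $\SM_g$ to the odd line $f_\eta(\CC^{0|1})$, so there is a tautological closed embedding $X_\eta\hookrightarrow\SM_{g,1}$. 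Its reduced version is the embedding $C\hookrightarrow\sM_{g,1}$ sending $C$ to the fiber of the universal curve over the spin curve $[C]\in\sM_g$; here $C$ is the fiber of $\sM_{g,1}\to\sM_g$ over $[C]$, so this is the embedding of a fiber in a fibration.

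Next I would verify the hypothesis of Corollary \ref{submanifold splits}, namely that the normal bundle sequence of $C$ in $\sM_{g,1}$ splits. Since $\sM_{g,1}\to\sM_g$ is a submersion (the universal spin curve) and $C$ is a fiber, the normal bundle $N$ of $C$ in $\sM_{g,1}$ is canonically trivial: $N\cong H^1(C,T_C^{1/2})\oplus H^1(C,T_C)\otimes\OO_C$ — more precisely, $N$ is the pullback to $C$ of the (even) tangent space $T_{[C]}\sM_g$, hence a trivial bundle. In particular the normal sequence $0\to T_C\to T\sM_{g,1}|_C\to N\to 0$ splits (any sub-line-bundle of a bundle that surjects onto a trivial bundle admits a splitting when the quotient is free, and in any case triviality of $N$ makes the map $j$ of Corollary \ref{mila} injective). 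Then by Lemma \ref{Id}, $\omega_2(X_\eta)=\eta\in H^1(C,T_C^{1/2})$, which is non-zero for $\eta\neq 0$. By Corollary \ref{submanifold splits} applied to the embedding $X_\eta\hookrightarrow\SM_{g,1}$, it follows that $\omega_2(\SM_{g,1})\neq 0$ and $\SM_{g,1}$ is non-projected.

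The one genuine point to check is that such a non-zero $\eta$ exists, i.e. that $H^1(C,T_C^{1/2})\neq 0$ for some spin curve $C$ of genus $g\geq 2$ with \emph{even} spin structure; this is where the hypotheses $g\geq 2$ and ``even spin structure'' enter. By Serre duality $H^1(C,T_C^{1/2})\cong H^0(C,K_C\otimes K_C^{1/2})^\vee = H^0(C,K_C^{3/2})^\vee$, and $\deg K_C^{3/2}=3(g-1)>0$ for $g\geq 2$, so by Riemann--Roch $h^0(C,K_C^{3/2})\geq 3(g-1)-g+1=2g-2>0$; hence $H^1(C,T_C^{1/2})\neq 0$ for every spin curve of genus $g\geq 2$, regardless of parity. (Equivalently, $h^1(C,T_C^{1/2})=h^0(C,K_C^{3/2})=2g-2$ since $K_C^{3/2}$ has degree $>2g-2$.) So in fact a suitable $\eta$ exists on \emph{every} component; the statement is phrased for the even component because it is the one needed later, and because on the odd component one must still choose a point of $\sM_{g,1}$, i.e. a spin curve with a marked point, which causes no difficulty here. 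Thus the proof is complete once the embedding $X_\eta\hookrightarrow\SM_{g,1}$ and the splitting of the normal sequence are in hand.

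The main obstacle is not any hard computation but rather making the embedding $X_\eta\hookrightarrow\SM_{g,1}$ and the identification of its reduced normal bundle completely precise in the stacky setting — one must be careful that $\sM_{g,1}\to\sM_g$ really is a submersion near the relevant fiber and that $C$ sits inside it as a genuine fiber so that $N$ is trivial, keeping in mind the caveats of section \ref{stacks} about automorphisms of spin curves. Once that geometric setup is granted, the result is an immediate consequence of Corollary \ref{submanifold splits} and Lemma \ref{Id}, with the existence of $\eta\neq 0$ following from the elementary Riemann--Roch count above.
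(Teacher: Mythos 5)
There is a genuine gap, and it sits exactly at the step the paper flags as the delicate one. You claim that because $C$ is a fiber of $\sM_{g,1}\to\sM_g$ its normal bundle $N$ is trivial and therefore the normal sequence $0\to T_C\to T\sM_{g,1}|_C\to N\to 0$ splits, so that Corollary \ref{submanifold splits} applies. Triviality of the quotient does not imply splitting of the sequence (your parenthetical justification is false in general: non-split extensions of a trivial bundle by a negative bundle abound, e.g.\ extensions of $\OO_{\PP^1}$ by $\OO_{\PP^1}(-2)$). In the case at hand the sequence is in fact maximally non-split: its extension class lies in $\Hom\bigl(H^1(C,T_C),H^1(C,T_C)\bigr)$ and is the Kodaira--Spencer class of the universal family, i.e.\ the identity. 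This is why the paper explicitly says that Corollary \ref{submanifold splits} (``Lemma \ref{submanifold splits}'') does \emph{not} apply here, and instead proves injectivity of the map $j$ directly: applying $\Hom(T_C^{1/2},\cdot)$ to the sequence $0\to T_C\to i^*T_+\to W\otimes\OO_C\to 0$ (with $W=H^1(C,T_C)$) gives
\[
W\otimes H^0(C,K_C^{1/2})\to H^1(C,\Hom(T_C^{1/2},T_C))\stackrel{j}{\to} H^1(C,\Hom(T_C^{1/2},T_+)),
\]
and $j$ is injective precisely because $H^0(C,K_C^{1/2})=0$ for a generic \emph{even} spin curve.

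This also shows that your concluding claim — that the parity hypothesis only enters through the existence of a nonzero $\eta$, and that the argument works on both components — is wrong. The Riemann--Roch count $h^1(C,T_C^{1/2})=h^0(C,K_C^{3/2})>0$ is correct but is not where evenness is used; the paper's remark after the proof shows that for odd spin structures $H^0(C,K_C^{1/2})$ is generically one-dimensional and the map $j$ is then \emph{identically zero}, so the compatibility relation $j(\omega(X_\eta))=\iota(\omega|_C)$ yields no information there (and, a fortiori, the normal sequence certainly does not split in that case). So the overall architecture of your argument (embed $X_\eta\hookrightarrow\SM_{g,1}$, use the Compatibility Lemma and $\omega_2(X_\eta)=\eta$) matches the paper, but the decisive step — injectivity of $j$, which is where ``even spin structure'' and ``generic $C$'' enter — is missing and cannot be replaced by an appeal to Corollary \ref{submanifold splits}.
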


Here and in the rest of this section, $T_{\pm}$ refer to $T_{\pm}\SM_{g,1}$. Our proof here is based on some of  the general results we obtained about supermanifolds and their obstructions. In the sequel to this paper \cite{DWtwo} we give a different proof which
relies on a cohomological interpretation of the obstruction class.

\begin{proof}[Proof]

Fix a spin curve $(C,T_C^{1/2}) \in \sM_g^+$ and an odd tangent vector $\eta \in H^1(C,T_C^{1/2})$. 
We have already seen in section \ref{local non-splitness} an example of 
a family $X_\eta$ of super Riemann surfaces with a  non-projected total space.
We identify this $X_\eta$ as a submanifold of  $\SM_{g,1}$. In terms of the 
projection $\pi: \SM_{g,1}  \to \SM_g,$
it is $\pi^{-1}({\CC}^{(0|1)})$, where ${\CC}^{(0|1)}$ is embedded in $\SM_g$ via the odd tangent vector $\eta$.
We wish to apply Corollary \ref{mila}, with $S = \SM_{g,1}, \quad S' = X_\eta, \quad M = \sM_{g,1}, \quad M' = C.$ We note that $T_{-}X_\eta $ is a rank 2 vector bundle on $C$, in fact it is an extension:
\[
0 \to T_C^{1/2}  \to T_{-}X_\eta  \to {\CC}\eta \to 0,
\]
where the third term stands for the trivial bundle with fiber ${\CC}\eta$. The choice of $\eta$ therefore identifies 
$\exterior{2}T_{-}X_\eta$ with $T_C^{1/2}$.
The maps $\iota, j$ appearing in Corollary \ref{mila} can therefore be written explicitly in our case:

$ \iota: H^1(C, \Hom(\exterior{2}T_-,T_+)) \to 
H^1(C,\Hom(T_C^{1/2}, T_+))$

$j: H^1(C,\Hom(T_C^{1/2}, T_C)) \to
 H^1(C,\Hom(T_C^{1/2}, T_+)).$

According to Corollary \ref{mila}, $\iota(\omega_{|C}) =j(\omega(X_\eta))$. We already know by Lemma \ref{Id} that $\omega(X_\eta) =\eta \neq 0$, so in order to show that $\omega \neq 0$, it suffices to show that $j$ is injective. Unfortunately, Lemma \ref{submanifold splits} does not apply. Instead, we note that $j$ fits into an exact sequence. We start with the  short exact sequence of sheaves on $C$:
\begin{equation}\label{even}
0 \to T_C   \to    i^* T_+   \to   i^* \pi^*T_{+}\SM_g   \to   0
\end{equation}
induced on (even)  tangent spaces by the fibration $\pi: \SM_{g,1}  \to \SM_g$ and the inclusion $i: C \to \SM_{g,1}$. Note that the third term is isomorphic to the trivial sheaf $W \otimes \OO_C$ with $W$ the tangent space $T_{+,C}\SM_g$ to $\SM_g$ at the point $[C]$. We apply
the functor  $\Hom(T_C^{1/2},\cdot)$ to this sequence;  the cohomology sequence of the resulting exact sequence reads in part
\[
W \otimes H^0(C,  K_C^{1/2}) \to H^1(C,\Hom(T_C^{1/2}, T_C))  \stackrel{j}{\to}
 H^1(C,\Hom(T_C^{1/2}, T_+)).
\]
 For generic choice of the even spin curve $(C,T_C^{1/2})$, we have $H^0(C,  K_C^{1/2}) =0$, so $j$ is injective, completing the proof.
\end{proof}

\begin{remark} Our proof fails for the odd component; in fact the map $j$ above is identically zero in this case, because $H^0(C,  K_C^{1/2})$ is generically $1$-dimensional  rather than $0$. In more detail, the vanishing of $j$ follows from surjectivity of the above map
\[
 W \otimes H^0(C,  K_C^{1/2}) \to H^1(C,\Hom(T_C^{1/2}, T_C)),  
\]
which can be written explicitly as
\[
H^1(C,T_{C})  \otimes H^0(C,  K_C^{1/2})   \to H^1(C,T_{C}^{1/2}),  
\]
whose surjectivity is equivalent (in the generic case when $H^0(C,  K_C^{1/2})$ is $1$-dimensional) to injectivity of the trasposed map
\[
H^0(C,  K_C^{1/2}) \otimes H^0(C,  K_C^{3/2})  \to H^0(C,  K_C^{2}),
\]
but the latter map is just multiplication with a fixed non-zero section of $H^0(C,  K_C^{1/2})$, which is indeed injective.
\end{remark}

\section{Compact families of curves and non-projectedness of $\SM_{g}$}

To show that $\SM_g$ is non-projected, it suffices to show that its first obstruction 
$\omega := \omega_2(\SM_g) \in H^1(\sM_g, \Hom( \exterior{2} T_{-}, T_{+} ))$
does not vanish. 
Here $\sM_g = (\SM_g)_{{\mathrm{red}}}$ is the moduli space of ordinary Riemann surfaces with a spin structure.
$T_{\pm}$ stand respectively for the even and odd tangent bundles of $\SM_g$. These are vector bundles on $(\SM_g)_{{\mathrm{red}}}$;
their fibers at  $C \in \sM_g$ are  $H^1(C,T_C)$ and $H^1(C,T_C^{1/2})$, respectively.

The basic idea of the proof is to construct a compact curve $Y\subset \sM_g$, or more precisely a family of spin curves parametrized
by a compact curve $Y$, and to show that the pullback of the class $\omega$ to $Y$ is nonzero.  The families of genus $g$ curves
that we study are constructed in a standard
fashion to parametrize a family of ramified covers of a fixed curve of genus less than $g$.  The construction is such that we can prove
that both components of $\SM_g$ are non-projected.

\subsection{Examples of compact families of curves} \label{examples}

\mbox {   }

One general way to produce compact curves in ${\M}_g$ depends on the existence of a small compactification. We briefly review this approach.

The Satake compactification $\overline{\A_g}$ of the moduli space $\A_g$ of abelian varieties is a projective variety whose boundary is $\overline{\A_{g-1}}$, hence of codimension $g$.  The  closure $\overline{\M_g}$ of the Abel-Jacobi image of $\M_g$ does not meet this boundary transversally: it meets the boundary in the compactification $\overline{\M_{g-1}}$ of ${\M_{g-1}}$, which for $g \geq 3$ has codimension 3 in $\overline{\M_g}$. (Contrast this with the Deligne-Mumford compactification, whose boundary has codimension 1.) The difference $\overline{\M_g} \setminus \M_g$ consists of this boundary plus a locus in the interior of $\A_g$, namely the closure of the locus of reducible curves consisting of two components meeting at  a point. The genera $g_1,g_2 > 0$ of these components add up to $g$. Most components of this locus also have codimension 3. But for every genus, there is one component whose codimension is 2: this happens when one of the $g_i$ equals 1. (When both $g_1=g_2=1$,  the codimension is just 1; but this happens only for $g=2$.) 

We can embed the projective variety $\overline{\A_g}$, and hence also its subvariety $\overline{\M_g}$,  in a large projective space $\PP^N$. Consider the 1-dimensional  intersection of $\overline{\M_g}$ with a generic linear subspace in $\PP^N$ of the appropriate codimension, which is $3g-4$.  The above dimension count  showed that when $g \geq 3$, the complement 
$\overline{\M_g} \setminus \M_g$ has codimension at least 2 in $\overline{\M_g}$. It follows that our generic $1$-dimensional intersection is contained in the interior $\M_g$. This provides a large but non-explicit family of compact curves in $\M_g$ for $g \geq 3$.
(On the other hand, when $g\leq 2$, it is known that $\M_g$ is an affine variety; hence it can contain no compact curves.)

In addition, several explicit constructions are known.  Kodaira, Atiyah and Hirzebruch constructed examples \cite{K,A,H} of surfaces $X$ with smooth maps $\pi: X \to B$ to a smooth curve $B$ of genus $g'$.  All the fibers of such a  map are smooth curves $C$ of some genus $g$. In fact, their surfaces $X$ are certain branched covers of the product of two curves, so they can be fibered as above in two distinct ways. In their smallest example, the base genus is 129 and the fiber genus is 6, for one fibration; and the base genus is 3 and the fiber genus is 321, for the other fibration. More efficient constructions are known, producing examples of lower genera. The construction in \cite{BD}  gives
 base genus 9   and fiber genus 4, for one projection, and base genus 2   and fiber genus 25, for the other.

A fibered surface $X$ as above  determines a map from $B$ to $\M_g$. When this map is non-constant, the signature of $X$ is non-zero. 
Conversely, if a universal curve over ${\mathcal C} \to\M_g$ existed, every map from $B$ to $\M_g$ would determine such a fibered surface $X$. Actually this fails, since a universal curve over $\M_g$ does not exist near curves with extra automorphsms. For example, it is easy to see that the base $B$ of a fibered $X$ must have genus $g' \geq 2$, since the universal cover of $B$ must map to the Siegel half space by the period map. However, we will see in the Appendix an example of a genus 0 curve in $\M_5$. There is no family of genus $g=5$ curves fibered over this $\PP^1$, but there is such a family over a certain cover $B$ of $\PP^1$,
with genus $g' = 19$. Earlier examples of compact genus 0 curves in moduli spaces appeared in \cite{O}.

\subsection{Covers with triple ramification}\label{triple covers}

The strategy behind the explicit constructions mentioned at the end of the previous section is to start with a curve $C$ of some lower genus $g_0$ and a branched cover $\pi: \wt{C} \to C$  having branch divisor $B \subset C$. Then  keep the curve $C$ fixed, and allow $B$ to move in a $1$-parameter family $P$.
The crucial condition is that the points of $B$ should never collide, i.e. the cardinality of $B=B_p$ must remain constant as $p$ varies in $P$. This means that the topology of the pair $(C,B)$ remains constant as $B$ varies, and therefore as $p$ varies locally in $P$, the cover  $\pi: \wt{C_p} \to C$ deforms along with $B$. Globally there may be monodromy: there are usually many branched covers with a specified branch divisor $B$.  So this construction produces a family of covers $\pi: \wt{C_p} \to C$  parametrized by points $p$ of some cover $\wt{P}$ of $P$. 

One way to guarantee compactness of our parameter space $\wt{P}$, or to enforce the condition that points of $B$ should never collide, is to start with a $C$ which admits a free action of a finite group $G$, and to take the branch divisors $B_p$ to be the orbits, parametrized by the quotient $P:=C/G$. This is the basic idea behind the constructions in \cite {K, A, H, BD}, where the simplest examples use double covers. For our purpose in this work, we will avoid collisions by taking $B$ to consist of a single point: $B_p = \{ p \}$.

 It is well known that a double cover cannot have just one branch point. But this {\em is} possible for degree $\ge 3$. 
One way to see the impossibility in degree $2$ is  to note that this would produce a branched cover $\wt{C}$ of odd Euler characteristic. The same argument excludes a single, total branch point in all even degrees, but allows  a single, total branch point in any odd degree; in any degree $d$, odd or even, it does not exclude a single branch point of various non-total types, e.g. the $(3,1, \ldots, 1)=(3,1^{d-3})$ pattern works for all $d \ge 3$, as we will see shortly. We settle then on the following version of the construction:

Fix a curve $C$ of genus $g_0$ and an integer $d\geq 3$. Consider the family $\wt{P}$ of all branched covers $\pi: \wt{C} \to C$ of degree $d $ having a {\em single} ramification point $\wt{p} \in \wt{C}$, of local ramification degree $3$, over a branch point $p \in C$. The fiber $\pi^{-1}(p)$ consists of $3$ times $\wt{p}$ plus $d-3$ other points $\wt{p}_i, \quad i=1, \ldots, d-3$.

We can easily see that such branched covers indeed exist. Consider the fundamental group of the complement, $\pi_1(C \setminus p)$. This is generated by a standard symplectic basis $\alpha_i, \beta_i, \quad   i=1, \ldots, g_0$, plus a loop $l$ around the point $p$. The only relation is $l r = 1$, where $r :=\prod_{i=1}^{g_0}  [ \alpha_i, \beta_i ]$. There is thus a short exact sequence
\begin{equation}\label{pi1}
0 \to K \to \pi_1(C \setminus p) \to \pi_1(C)  \to 0
\end{equation}
where the kernel $K$ is generated by $l$ (or equivalently, $r$). Now a $d$-sheeted cover of $C$ which is not branched except possibly at $p$ is specified by a subgroup $S \subset \pi_1(C \setminus p)$ of index $d$, and this cover is unbranched at $p$ if and only if $K \subset S$. 
Every subgroup of index 2 is normal, so it is the kernel of a homomorphism $ \pi_1(C \setminus p) \to \ZZ/2$ and must contain all commutators, hence must contain $K$. 
But for $d \ge 3$,  subgroups of index $d$ which do not contain $K$ do exist. For example, when $d=3$, we can map $\pi_1(C \setminus p)$  to the symmetric group  $S_3$ by sending $\alpha_1 \to (23), \quad \beta_1 \to (123), \quad $ and $\alpha_i, \beta_i \to 1$ for $i \ne 1$, so $r \to (123)$. We then take $S$ to be the inverse image of the non-normal subgroup $S_2 \subset S_3$ stabilizing one of the three permuted objects. This $S$ has index 3 in $ \pi_1(C \setminus p)$ but does not contain the element $r \in K$. More generally, for any $d \ge 3$, 
we can map $\pi_1(C \setminus p)$  onto the symmetric group  $S_d$ by sending $\alpha_1 \to (23), \quad \beta_1 \to (12 \dots d), \quad $ and $\alpha_i, \beta_i \to 1$ for $i \ne 1$, so again $r \to (123)$. We then take $S$ to be the inverse image of the non-normal subgroup $S_{d-1} \subset S_d$ fixing $1$ and permuting $2,\dots,d$. This $S$ has index $d$ in $ \pi_1(C \setminus p)$, but the loop $l=r^{-1}$ goes to a 3-cycle, so the resulting cover $\wt{C} \to C$ has the desired $(3,1^{d-3})$ pattern.

The genus of any such $\wt{C}$ is easily seen to be $g=d (g_0 -1) +2$. We may as well take $g_0=2$ so $g=d+2$, and thus by
varying $d$, we get all values $g\geq 5$.
In the appendix we will give a detailed, algebro-geometric description of the $g_0=2, d=3, g=5$ case, in which the branching is total. This example is attributed in \cite{HM} (a few paragraphs above theorem 2.34) to Kodaira.

We saw that for each $p \in C$, we get a finite number of covers $\wt{C} \to C$.  The parameter space $\wt{P}$ of our branched covers of $C$ is therefore itself a certain cover of $C$. If we now allow the curve $C$ to vary through the moduli space $\M_{g_0}$, we get a family of covers $\pi: \wt{C} \to C$ parametrized by a certain cover 
$\wt{\M}_{g_{0},1} := \wt{\M}_{g_{0},1} ^d$ of the universal curve ${\M}_{g_{0},1}$, and a morphism 
$\wt{\M}_{g_{0},1} \to \M_g$ sending the isomorphism class of a cover $\pi: \wt{C} \to C$ to the isomorphism class of $\wt{C}$.

\subsection{Maps from $\wt{\sM}_{g_{0},1}$  to $\sM_{{g}}$} \label{spin maps}

The construction in the previous section  \ref{triple covers} gives a map of moduli spaces, from  $\wt{\M}_{g_{0},1}$  to $\M_{{g}}$. In the present section we extend this to a map of the spin moduli spaces, from  $\wt{\sM}_{g_{0},1}$  to $\sM_{{g}}$. In the next section we discuss the further extension to a map  (in the sense of stacks) of supermoduli spaces, from  $\wt{\SM}_{g_{0},1}$  to $\SM_{{g}}$.

For fixed $d \ge 3$, let $\wt{\M}_{g_{0},1} := \wt{\M}_{g_{0},1} ^d$ denote the moduli space of all branched covers $\pi: \wt{C} \to C$ as in the previous section: of degree $d$ over a curve $C$ of genus $g_0$, having a single ramification point $\wt{p} \in \wt{C}$, of local ramification degree $3$, over a branch point $p \in C$. 
(Recall from Proposition \ref{branched covers of SRS} that in order to lift a family of branched coverings of Riemann surfaces to branched coverings of super Riemann surfaces we need all the local ramification degrees to be odd. This is satisfied in our situation, where these local degrees equal 3 or 1.)

There are several forgetful morphisms:
\begin{itemize}
\item  
$P_1: \wt{\M}_{g_0,1}  \to  \M_g$   sends
$(\pi: \wt{C} \to C)$ to $\wt{C}$ .
\item
$P_2: \wt{\M}_{g_0,1}  \to \M_{g_0,1} $, sending  $(\pi: \wt{C} \to C) \mapsto (C,p)$, is a finite covering. 
\item
$P_3: \wt{\M}_{g_0,1}  \to \M_{g_0} $ sends  $(\pi: \wt{C} \to C)$ to $C$.
\end{itemize}

Our goal in this section and the next is to construct super versions of the space $\wt{\M}_{g_{0},1}$ and the morphism $P_1$. 
We do this  in several steps, designed to match the input of Proposition \ref{branched covers of SRS}:

\begin{enumerate}
\item Start with a typical branched covering map $\pi: \wt{C} \to C$ as in section \ref{triple covers}.
\item Put it into a universal family $\Pi: \wt{U} \to U$, parametrized by $\wt{\M}_{g_0,1} $.
\item Add spin:  $S\Pi: S\wt{U} \to SU$ over $S\wt{\M}_{g_0,1} $.
\item Construct the super space $\wt{\SM}_{g_{0},1}$ and a universal genus $g_0$ SRS ${\mathcal U}$ over it, 
${\mathcal F}:  {\mathcal U} \to \wt{\SM}_{g_{0},1}$, whose reduced spaces are $S\wt{\M}_{g_{0},1}$ and $SU$, respectively.
\item Construct the divisor ${\mathcal B} \subset {\mathcal U} $ whose reduced version is the branch divisor $SB$ of $S\Pi$.
\item By Proposition \ref{branched covers of SRS} we then get a  family of super Riemann surfaces
\newline $\wt{\mathcal{F}}: \wt{{\mathcal U}} \to \wt{\SM}_{g_{0},1}$ of genus $g$, along with a branched covering map 
${{\varPi}}: \wt{{\mathcal U}} \to {{\mathcal U}} $
whose reduced version is  $S\Pi: S\wt{U} \to SU$ and which satisfies $\wt{\mathcal{F}} = {\mathcal{F}} \circ {{\varPi}}$.
\item By the universal property of moduli spaces, this gives a morphism of supermanifolds, from $\wt{\SM}_{g_{0},1}$  to $\SM_{{g}}$, whose reduced version is the spin lift of the above forgetful map $P_1: \wt{\M}_{g_0,1}  \to  \M_g$.
\end{enumerate}

In fact, we need quite a bit less than this for the proof of our main results: we will use the construction only in the case $g_0=2$, and only in the vicinity of one fiber of $\wt{\SM}_{g_{0},1}$  to $\SM_{{g_0}}$. In fact, for $g_0=2$ one can give an elementary construction of a  relative spin line bundle $L = K ^{1/2}$ on the universal curve over a particular double cover of $\sM_2^+$, the one that parametrizes triples of Weierstrass points on the (hyperelliptic) genus 2 curve. So in this case one could prove a stronger result, about a map between moduli spaces (rather than stacks). Since we will need this only in the vicinity of a single curve, we will not work out the details of this improvement.

In the remainder of this section we will fill in the details of steps (2) and (3), leading to the map of spin moduli spaces.  The subsequent
steps are treated in section \ref{maps}.

(2) Over $\wt{\M}_{g_{0},1}$ we construct two universal spaces $a: U \to \wt{\M}_{g_{0},1}$ and $\wt{a}: \wt{U}  \to \wt{\M}_{g_{0},1}$, with an intertwining map $\Pi: \wt{U} \to U$ satisfying $a \circ \Pi = \wt{a}$. Here $U$ is just the universal genus $g_0$ curve over $\wt{\M}_{g_{0},1}$, meaning that it is the pullback of the universal curve ${\M}_{g_{0},1} \to {\M}_{g_{0}}$ via the above forgetful map $P_3: \wt{\M}_{g_0,1}  \to \M_{g_0} $:
\[
U = \wt{\M}_{g_{0},1} \times_{ {\M}_{g_{0}}} {\M}_{g_{0},1}.
\]
Similarly, $\wt{U}$ is the pullback of the universal genus $g$ curve ${\M}_{g,1} \to {\M}_{g}$ via the forgetful map $P_1: \wt{\M}_{g_0,1}  \to \M_{g}$:
\[
\wt{U} = \wt{\M}_{g_{0},1} \times_{ {\M}_{g}} {\M}_{g,1}.
\]
Note that $\Pi$ is a branched covering map. Its restriction to the fibers of $U, \wt{U}$ above a point of $\wt{\M}_{g_{0},1}$ is just the map  $\pi: \wt{C} \to C$ classified by that point. The branch divisor $B \subset U$ of $\Pi$ is the pullback via 
\[
P_2 \times 1: U = \wt{\M}_{g_{0},1} \times_{ {\M}_{g_{0}}} {\M}_{g_{0},1} \to {\M}_{g_{0},1} \times_{ {\M}_{g_{0}}} {\M}_{g_{0},1} =: \bar{U}
\] 
of the diagonal 
\[
\Delta \cong  {\M}_{g_{0},1}   \subset {\M}_{g_{0},1} \times_{ {\M}_{g_{0}}} {\M}_{g_{0},1} = \bar{U}.
\]

(3) Next, we add a spin structure. 
The space $\sM_{g_0} $ parametrizes genus $g_0$ curves with a spin structure. 
By pulling  $\M_{g_0,1} $ and $\wt{\M}_{g_{0},1}$ back to 
the cover $\sM_{g_0}  \to \M_{g_0} $,
we get  $\sM_{g_0,1} $, parametrizing genus $g_0$ curves with a spin structure and a marked point, and $\wt{\sM}_{g_{0},1}$, parametrizing branched covers $\pi: \wt{C} \to C$ with a single branch point $p$ as above and with a spin structure  $L$ on $C$, $L^2 \cong K_C$. 
Similarly, we let $SU,  S\wt{U}, S\Pi$ and $S\bar{U}$ denote the pullbacks of $U,  \wt{U}, \Pi$ and $\bar{U}$ 
from $\wt{\M}_{g_0,1} $ to $\wt{\sM}_{g_{0},1}$  (or equivalently, from $\M_{g_0} $ to $\sM_{g_0} $).
We note that $K_{ \wt{C} } \cong \pi^*K_C(2p)$,
so the spin structure $L$ on $C$ determines a spin structure $\wt{L}:=\pi^*L(p)$ on $\wt{C}$. (This would fail if one or more of the ramification points of $\pi$ had even local degree over $C$.)

	
\subsection{Maps from $\wt{\SM}_{g_{0},1}$  to $\SM_{{g}}$} \label{maps}

Our goal here is to lift the map of spin moduli spaces constructed in the previous section to a map of supermoduli spaces. We continue filling in the details of steps (4)-(7) outlined above.

(4)  Start with  $\SM_{g_{0},1}$ whose reduced space is  $\sM_{g_{0},1}$. From this we build $\wt{\SM}_{g_{0},1} $: 
since $\wt{\M}_{g_{0},1} \to { {\M}_{g_{0},1}} $ is a covering map, so is
 $\wt{\sM}_{g_{0},1} \to {S {\M}_{g_{0},1}} $, and
we get a covering supermanifold  $\wt{\SM}_{g_{0},1} \to \SM_{g_{0},1} $ as in section \ref{Covers}. 
Similarly, we can start with 
$\bar{\mathcal U} := \SM_{g_{0},1} \times_{\SM_{g_{0}}}   \SM_{g_{0},1}$, whose reduced space is $S\bar{U}$. 
Since $U \to \bar{U}$, hence also $SU \to \bar{SU}$, are covering maps, we again get a covering supermanifold 
${\mathcal U} \to \bar{{\mathcal U}}$, as in section \ref{Covers}.
In fact, we can identify this explicitly as:
 ${\mathcal U} =  \wt{\SM}_{g_{0},1} \times_{\SM_{g_{0}}}   \SM_{g_{0},1}$.  This has reduced space $SU$  and comes with a  map 
${\mathcal F}:  {\mathcal U} \to \wt{\SM}_{g_{0},1}$ which is the projection onto the first factor.

(5) To be able to apply the results of \ref{covers} and of Proposition \ref{branched covers of SRS}, we need a divisor 
${\mathcal B} \subset {\mathcal U}$ whose reduction is the branch divisor $SB \subset SU$, which as we saw above is the pullback via $SU \to U \to \bar{U}$ of the diagonal $\Delta \subset \bar{U}$.
So we need an appropriate divisor $\bar{{\mathcal B}} \subset \bar{{\mathcal U}}= \SM_{g_{0},1} \times_{\SM_{g_{0}}}   \SM_{g_{0},1}$. The first guess might be to consider the diagonal; but this is not a divisor, it is a submanifold of codimension $(1|1)$. Instead, we need to invoke the duality of section \ref{basics} for SRS's, which converts points to divisors on a SRS and the diagonal to a divisor $\bar{{\mathcal B}} \subset \bar{{\mathcal U}}$. We then define 
${{\mathcal B}} \subset {{\mathcal U}}$ as the inverse image of $\bar{{\mathcal B}}$.

(6)  and (7): We have now constructed all the input needed for Proposition \ref{branched covers of SRS}: a family of super Riemann surfaces, a branched covering of the reduced space with odd local degrees, and a thickening of the branch divisor.
 So we  get a  family of super Riemann surfaces that are branched covers of the original family, as claimed.
By the universal property of moduli spaces, this gives the desired morphism of supermanifolds, from $\wt{\SM}_{g_{0},1}$  to $\SM_{{g}}$.

\subsection{Components} \label{Components}

{\mbox{     }}
The spin moduli space $\sM_{g_0}$ has  two components $\sM_{g_0}^{\pm}$, distinguished by the parity of the spin structure. Therefore, the same holds for the supermoduli space ${\SM}_{g_0}$. Related spaces such as $\wt{\SM}_{g_{0},1}$ inherit at least two components. We will see here that there are actually more components than these obvious two:  

\begin{proposition}
$\wt{\SM}_{g_{0},1}^{+}$ has at least two components. Under the restriction  
$\wt{\SM}_{g_{0},1}^{+}  \to \SM_{{g}}$ of the map constructed in the previous section, these map to the two components 
$\SM_{{g}}^{\pm}$ of $\SM_{{g}}$.
\end{proposition}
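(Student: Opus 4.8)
The plan is to reduce the statement to a count of parities of theta characteristics. As in the rest of the paper we may pass to reduced spaces: a supermanifold (or superstack) has the same underlying topological space, hence the same connected components, as its reduced space, so it suffices to work with $\wt{\sM}_{g_0,1}^+$ and with the reduced version of the map to $\SM_g$, which by the construction of Section~\ref{spin maps} is the spin lift of $P_1$: it sends a cover $(\pi\colon\wt C\to C,\ p,\ L)$, with $L$ an even theta characteristic on $C$, to the spin curve $(\wt C,\wt L)$ with $\wt L=\pi^*L(p)$. The two components $\sM_g^{\pm}$, hence $\SM_g^{\pm}$, are distinguished by the parity $h^0(\wt C,\wt L)\bmod 2$, which by the classical results of Atiyah and Mumford is constant on any connected family of spin curves. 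Consequently the parity of $\wt L$ is constant on each connected component of $\wt{\sM}_{g_0,1}^+$, and the whole proposition follows once I show that \emph{both} values of $h^0(\wt C,\wt L)\bmod 2$ are attained on $\wt{\sM}_{g_0,1}^+$ — equivalently, that the image of $\wt{\sM}_{g_0,1}^+$ in $\sM_g$ meets both $\sM_g^{+}$ and $\sM_g^{-}$.

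To produce both parities I would compute $h^0(\wt C,\wt L)\bmod 2$ for explicit covers. By the projection formula $H^0(\wt C,\wt L)=H^0\bigl(C,\ L\otimes\pi_*\OO_{\wt C}(\wt p)\bigr)$, where $\pi_*\OO_{\wt C}(\wt p)$ is a rank-$d$ bundle on $C$ obtained from $\pi_*\OO_{\wt C}$ (the trivial summand plus the dual Tschirnhausen bundle) by one upward elementary modification at $p$; one then evaluates this for two covers sharing the same $(C,p,L)$ but with different monodromy in $S_d$. The cleanest case is the basic one $g_0=2$, $d=3$, $g=5$: there $\pi$ is a totally ramified triple cover of a fixed genus-$2$ curve, an elementary relative spin bundle is available, and this is exactly the computation carried out in the Appendix, which checks that the triple covers attached to an even $(C,p)\in\sM_{2,1}^+$ yield spin structures $\wt L$ of \emph{opposite} parity, so that $\wt{\sM}_{2,1}^+$ already has components over both $\sM_5^{\pm}$. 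For general $d$ and $g_0$ I would deduce the same by degeneration: let $(C,p)$ acquire a genus-$2$ component carrying the branch point $p$, pass to the corresponding admissible covers, and use once more the deformation invariance of the parity to reduce to the $g_0=2$, $d=3$ computation.

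The main obstacle is the parity computation itself: there is no formal reason for the pullback of a fixed even theta characteristic along these covers to have a determined parity, so non-constancy must be extracted from an honest calculation, and both routes have delicate points — the direct one must keep precise track of the twist by $\OO(\wt p)$ and of the fact that the set of covers over a fixed $(C,p)$ is finite but connected only after passing to a cover of moduli (so no individual cover is ``generic''), while the degeneration route requires careful bookkeeping of the limiting spin structure on the nodal curve and of the admissible cover over the node. I expect the explicit $g_0=2$, $d=3$ model of the Appendix to be the decisive input; once it is in hand, the remaining point — that the ``wrong parity'' locus is genuinely nonempty rather than vacuous — is exactly the content of the proposition, while everything else (that a supermanifold has the same components as its reduced space, and that the parity is locally constant) is routine.
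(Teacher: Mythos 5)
Your overall strategy coincides with the paper's: pass to reduced spaces, use the local constancy of the parity $h^0(\wt C,\wt L)\bmod 2$ of $\wt L=\pi^*L(p)$ in families of spin curves, and then exhibit points of $\wt{\sM}_{g_0,1}^+$ realizing both parities, first by an explicit low-genus computation and then for all $(g_0,d)$ by degeneration. The base computation you invoke does exist: the appendix shows, for $g_0=2$, $d=3$, that of the ten even spin structures on $C$ six lift to odd and four to even spin structures on $\wt C$. Note, however, that what varies there is the even theta characteristic $L$ on a fixed cover, not the monodromy of the cover with $(C,p,L)$ fixed, as your first suggestion has it; neither you nor the paper gives any reason why changing the monodromy alone would switch the parity, so that route should be dropped in favor of varying $L$. (The paper's own proof of this proposition uses an even simpler base case, an elliptic base $g_0=1$, $d=3$, $g=2$, where of the three even spin structures on $E$ two lift to odd and one to an even spin structure on the genus-two cover.)

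The genuine gap is the degeneration step, which is where the paper does the real work. Saying ``let $(C,p)$ acquire a genus-two component carrying $p$ and pass to the corresponding admissible covers'' is not yet an argument: you must specify the cover over the complementary component and the gluing, and do so in such a way that (i) the limiting nodal spin curve is of compact type, i.e.\ has tree dual graph, since only then does the theory of stable spin curves \cite{Cornalba} (which, not Atiyah--Mumford alone, is what lets you carry the parity across the boundary) give the simple rule that the parity of the limit is the sum of the parities of the components -- otherwise rank-one (Ramond-type) nodes can occur and the bookkeeping is genuinely harder; (ii) switching the even spin structure on the distinguished component flips the total parity without changing anything else; and (iii) the nodal configuration smooths out inside the family $\wt{\M}_{g_0,1}^d$ actually being parametrized. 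A naive choice fails (i): gluing a connected degree-three cover of the complementary component at the three points over the node produces a dual graph with a loop. The paper arranges (i)--(iii) by an explicit combinatorial construction: over the complementary genus $g_0-1$ curve it takes a $(d-2)$-sheeted unramified cover plus two disjoint copies, and glues these to $\wt E$ and to the $d-3$ trivial sheets over the distinguished component so that the dual graph is simply connected, then smooths. Your proposed reduction ``to the $g_0=2$, $d=3$ case'' also does not say what happens to the remaining $d-3$ sheets when $d>3$. So the skeleton of your plan is the right one, but without these constructions it remains an outline rather than a proof.
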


\begin{proof}

We can see this very explicitly in case $g_0=1, d=3, g=2$. Here the base curve $C=E$ is elliptic. The branch point $p \in E$ determines a degree 2 map $E \to {\PP}^1$. 
Let $p_0, p_1, p_2, p_3=p$ be its four ramification points. We choose a coordinate $z$ on $\PP^1$ which takes the values $0,z_1,z_2, \infty$ at $p_0, p_1, p_2, p_3$, and write the equation of $E$ as 
$y^2 = z(z-z_1)(z-z_2)$.
The cover $\pi: \wt{C} \to E$ is essentially unique. It is given topologically as in section \ref{triple covers} by  mapping $\pi_1(E \setminus p)$  to the symmetric group  $S_3$ by sending $\alpha_1 \to (23), \quad \beta_1 \to (123)$. We can also describe it algebraically. Let $\pi_0: \wt{{\PP}^1} \to \PP^1$ be a triple cover which is totally branched over $\infty$ and has simple branching over $z_1,z_2$. This is accomplished by setting $z$ to be a cubic polynomial in the coordinate $w$ on $\wt{{\PP}^1}$:
$z=(w-w_3)(w-w_4)(w-w_5),$ 
whose critical values are  $z_1,z_2$ (and $\infty$, with multiplicity 2). For $i=1,2$, let $w_i, w'_i$, respectively, be the values of $w$ at the unramified and ramified points above $z_i$, so $z-z_i = (w-w_i)(w-w'_i)^2$.
We then take $\wt{C}$ to be the  normalization of the fiber product of $E$ and $\wt{\PP^1}$ over $\PP^1$. Explicitly, the fiber product  has the equation 
$y^2 = z(z-z_1)(z-z_2)$, and its normalization has the equation
$y^2 = (w-w_1)(w-w_2)(w-w_3)(w-w_4)(w-w_5)$.
In particular, this $\wt{C}$ is a genus 2 hyperelliptic curve, a double cover of $\wt{\PP^1}$ ramified at points $q_l, \quad l=1 \dots 5$ and $q$  over the  branch points $w_l$ and $\infty$. 
Note that for $i=1,2$, the inverse image of $w_i$ is $q_i$ (with multiplicity 2), but the inverse image of $w'_i$ consists of two distinct points $q'_i,q_{i}''$ forming a hyperelliptic pair.
The map to $E$ is totally ramified at $q$, whose image is the unique branch point $p$.

The three even spin structures on $E$ are $L_j=p_0-p_j$ for $j=1,2,3$. Our recipe for lifting to a spin structure on $\wt{C}$ is $\wt{L} := \pi^*(L)(q)$. If we start with $L_1=p_0-p_1$, we get 
$\wt{L_1} = {\OO}(q_3 +  q_4 + q_5 - q_1 - q_1' -q_1'' + q) = \OO(q_2),$ which is an odd spin structure on $\wt{C}$:
$h^0(\wt{L_3}) =1.$
Similarly, $\wt{L_2} =  \OO(q_1).$ But 
$\wt{L_3} = {\OO}(q_3 +  q_4 + q_5 -3q + q) = \OO(q_3 + q_4 -q_5)$ is an even spin structure on $\wt{C}$: $h^0(\wt{L_3}) =0.$

To extend to higher genera $g_0$ and degrees $d$, we need some basic facts about spin structures on {\em singular} algebraic curves. 
These facts are important in string theory. A mathematical version was obtained by Cornalba \cite{Cornalba}, who has constructed a compactified moduli space $\overline{\sM_g}$ of stable spin curves. This has been studied in \cite{Caporaso}, and a useful review is in \cite{Farkas}. The compactification has two components $\overline{\sM_g}^{\pm}$ which are compactifications of
${\sM_g}^{\pm}$ respectively. Each component maps onto the Deligne-Mumford compactified moduli space $\overline{\M_g}$ of stable curves. 
A spin structure on a stable curve $C$ consists of data $C',L',\beta$.
Here $C'$ is a `blow up' of $C$ along some subset $\Delta$ of the nodes of $C$:
take the partial resolution $N$ of $C$ at $\Delta$, and
for each $p_i \in \Delta$ attach to $N$ a smooth rational curve $R_i$
meeting $N$ transversally at the two branches above $p_i$.
(Such curves $C'$ are called decent or quasi stable.)
The remaining data consist of a line bundle $L'$ on $C'$
whose  total degree is $g-1$ and whose degree on each $R_i$ is 1, 
and a homomorphism $\beta: {L'}^{\otimes 2} \to \omega_{C'}$ 
that vanishes on the $R_i$ but on no other component of $C'$. 
Equivalenty and perhaps more intuitively, 
a spin structure on $C$ is specified by the torsion free sheaf
(not necessarily a line bundle)
$L := \nu_*L'$ on $C$, where $\nu: C' \to C$
is the map that collapses each $R_i$ to $p_i$.
This has rank 2 at the points of $\Delta$, and is locally free elsewhere.

The behavior is very simple over points of compact type in $\overline{\M_g}$, 
i.e. curves whose dual graph is simply-connected (a tree). 
In this case $\Delta$ must consist of all the singular points of $C$.
If such a curve $C$ is the union of irreducible components $C_i$ whose intersection pattern is determined by the dual graph, a spin structure  $L$ on $C$ is uniquely specified by a collection of spin structures $L_i$ on the $C_i$. The corresponding spin structure $L$ is the direct sum of the direct images of the $L_i$: it has rank 2 (i.e. fails to be a line bundle) at every node of $C$.  The parity of $L$ is the sum of the parities of the $L_i$.
(The extra complications for curves not of compact type arise from the possibility of the spin structure having rank 1 at some of the nodes. For
example,  for an irreducible $C$ with a single node, let $C'$ be its normalization and $p,q$ the points above the node. Then any square root of $K_{C'}(p+q)$ can be glued -- in two ways -- to give a line bundle spin structure on $C$.  The two cases that the spin structure has rank 2 or rank 1 at
a node correspond in string theory to a degeneration of NS or Ramond type. Ramond degenerations are not possible if the dual graph
is a tree because an irreducible curve always has an even number of Ramond punctures. The fact that Ramond degenerations occur in the natural compactification of the
spin moduli space $\sM_g$ -- and therefore also of the super Riemann surface moduli space $\SM_g$ -- can be regarded as the reason that it is necessary to study
Ramond punctures in string theory.)

We can now prove the proposition for triple covers of curves of  genus $g_0>1$. Let $C$ be a reducible curve consisting of an elliptic component $E$ meeting a genus $g_0 -1$ curve $C'$ in a single point $a$. For the triple cover $\wt{C} \to C$ we start with the triple cover of $E$ constructed above (but let us rename it $\wt{E}$), and glue it to three copies of $C'$ at the three inverse images of $a$. We choose even spin structures $L_E,L_{C'}$ on $E,C'$ and lift them to the four components $\wt{E}, C_j$ of $\wt{C}$. On the $C_j$ we get even spin structures, but on $\wt{E}$ we can get either even or odd spins. By going to smooth deformations of $C,\wt{C}$ we conclude that $\wt{\sM}_{g_0}^+$ is reducible, with components mapping to both components $\sM_g^{\pm}$ of $\sM_g$.
So $\wt{\SM}_{g_{0},1}^{+}$ has at least two components. Under the restriction  
$\wt{\SM}_{g_{0},1}^{+}  \to \SM_{{g}}$ of the map constructed in the previous section, these map to the two components 
$\SM_{{g}}^{\pm}$ of $\SM_{{g}}$ as claimed.

We need one more modification to allow arbitrary degrees $d \ge 3$. 
We again start with  a reducible curve $C$ consisting of an elliptic component $E$ meeting a genus $g_0 -1$ curve $C'$ in a single point $a$. We build the $d$ sheeted covering $\wt{C}$ by gluing the following pieces. Over $E$ we take the previous triple cover with single ramification point, $\wt{E}$, plus $d-3$ disjoint copies $E_i$ of $E$. Over $C'$ we take a $d-2$-sheeted unramified covering $\wt{C'}$ plus $2$ disjoint copies $C'_j$ of $C'$. We glue the $d-2$ points of a fiber of $\wt{C'} \to C'$ to one point on each of the $E_i$ and to just one point on $\wt{E}$. The two remaining points of $\wt{E}$ are glued to the two $C'_j$. This produces a simply connected dual graph, so we again specify spin structures on $C$ and $\wt{C}$ by specifying even spin structures $L_E, L_{C'}$ on $E$ and $C'$. Now the effect of switching our choice of $L_E$ from what we were calling $L_1$ to $L_3$ is to switch the parity on $\wt{E}$ without changing anything else. We conclude as before that in this general case too, $\wt{\SM}_{g_{0},1}^{+}$ has at least two components, which map to the two components $\SM_{{g}}^{\pm}$ of $\SM_{{g}}$ as claimed.
\end{proof}

\subsection{The normal bundle sequence} \label{decomposition}

\mbox {   }

Fix a $d$-sheeted branched cover of curves $\pi: \wt{C} \to C$, with branch divisor
$B=\sum_{i=1}^n p_i$ 
in $C$ whose points are distinct and labelled from $1$ to $n$, and ramification divisor
$R=\sum_{i,j}a_{i,j} \wt{p}_{i,j}$ 
in $\wt{C}$, with
$\sum_j a_{i,j} =d$ and $a_{i,j} \ge 1$. The  $a_{i,j} $ specify the {\em ramification pattern} of $\pi$:  $a_{i,j}$ is the number of sheets that come together at a ramification point $\wt{p}_{i,j}$. Let $g, \wt{g}$ be the genera of $C, \wt{C}$ respectively.

Now we allow the continuous parameters, i.e. the curves $ \wt{C},  C$  and hence
the map $\pi$ (and in particular also  the branch divisor $B$)  to vary, holding fixed the discrete data of the ramification pattern, i.e. the   $a_{i,j} $ and the genera $g, \wt{g}$. There is a moduli space $\wt{\M}_{g,n}$ parametrizing these covers. (Sometimes this is called a Hurwitz scheme.) It comes with a forgetful map
\begin{align*}
\wt{\M}_{g,n} &\rightarrow \M_{g,n}\\
(\pi: \wt{C} \to C) &\mapsto (C,B).
\end{align*}
This map is a local isomorphism, in fact an unramified finite cover: given $\pi: \wt{C} \to C$ and a small deformation of $(C,B)$, there is a unique lift to a deformation of $\pi: \wt{C} \to C$. (Note that the points of $B$ are not allowed to collide.) Let ${\mathcal C} \to \wt{\M}_{g,n} $ be the universal curve.

There is also a map
\begin{center}
\[
\iota: \wt{\M}_{g,n} \rightarrow  \M_{\wt{g}}
\]
\[
(\pi: \wt{C} \to C) \mapsto  \wt{C}.
\]
\end{center}
When $\wt{g}>1$, a curve $\wt{C}$ can have at most finitely many maps to curves $C$, and $\iota$ is a local embedding away from curves with extra automorphisms. The main result of this section is:
\begin{proposition}\label{decomposition of sequence} 
The normal bundle sequence of $\iota$:
\[
0 \to T{\wt{\M}_{g,n}} {\stackrel{\iota_*}{\longrightarrow}}  {T{ \M_{\wt{g}}}}_{|\wt{\M}_{g,n}} \longrightarrow N  \to 0
\]
splits.
\end{proposition}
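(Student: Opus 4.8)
The plan is to identify explicitly what the normal bundle $N$ is in terms of the deformation theory of the covering map $\pi$, and then to produce a canonical complementary subbundle inside ${T\M_{\wt g}}_{|\wt\M_{g,n}}$ that gives the splitting. The key observation is that deformations of $\wt C$ come in two flavors: those that also deform the map $\pi$ to $C$ (possibly moving $C$ and $B$), and those that are "purely vertical" in the sense of changing $\wt C$ only near the ramification points while keeping $C$, and in fact all of $\wt C$ away from the $\wt p_{i,j}$, fixed. More precisely, I would consider on $\wt C$ the sheaf map $T\wt C \to \pi^* TC$; since $\pi$ is a local isomorphism away from the ramification divisor $R$, the cokernel of this map is a skyscraper sheaf $Q$ supported at the points $\wt p_{i,j}$ with length $\sum_{i,j}(a_{i,j}-1)$, and there is an exact sequence $0\to T\wt C(-R')\to T\wt C\to \pi^*TC\otimes(\text{stuff})\to Q\to 0$ — the honest statement being the standard one comparing $T\wt C$ with $\pi^*TC$ via the ramification divisor. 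The upshot I want is that $H^1(\wt C, T\wt C)$ sits in a sequence whose other terms are $H^1(C,TC(-B))$-type groups (deformations of $(C,B)$, i.e. $T\wt\M_{g,n}$ by the local-isomorphism statement in the excerpt) and $H^0$ of the skyscraper $Q$ (the local smoothing parameters at the ramification points).

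Concretely, the first step is to set up the long exact sequence in cohomology associated to $0\to \pi^*TC(-R)\to T\wt C\to Q\to 0$ on $\wt C$ (where $R$ is adjusted so the first map is the natural one). Pushing forward to $C$ and using $\pi_*\pi^*TC(-R)$ together with the fact that $\pi$ is finite (so $H^i$ is computed downstairs), I expect to get $T\wt\M_{g,n}\cong H^1(\wt C,\pi^*TC(-R))$ — this is the "deform $(C,B)$, then uniquely lift $\pi$" statement already granted in the text — sitting inside $H^1(\wt C,T\wt C)=T\M_{\wt g}$ with quotient exactly $H^0(\wt C,Q)$, which is therefore $N$. So $N=\bigoplus_{i,j} N_{i,j}$ is a sum of bundles supported (fiberwise) on the local deformation space of each ramification point, and in our application ($a_{i,j}\in\{3,1\}$) only the triple points contribute and $N$ is a line bundle on $\wt\M_{g,n}$.

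The second and crucial step is to exhibit the splitting. The idea is that the local smoothing of a ramification point of type $a_{i,j}$ is governed by the versal deformation of the $A_{a_{i,j}-1}$-type singularity $w\mapsto w^{a_{i,j}}$ of the map, and there is a \emph{canonical} transverse slice: one can deform $\wt C$ near $\wt p_{i,j}$ by replacing $w^{a}$ locally by $w^{a}+t w^{a-2}+\dots$, i.e. by the "miniversal deformation with no $w^{a-1}$ term and in fact keeping the map to $C$ but allowing $\wt C$ itself to acquire a node/smoothing" — this produces, for each ramification point, a distinguished $1$-dimensional (for a triple point) family of deformations of $\wt C$ alone, canonically attached to the covering data, hence a distinguished sub-line-bundle $N_{i,j}'\subset {T\M_{\wt g}}_{|\wt\M_{g,n}}$ mapping isomorphically to $N_{i,j}$. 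I would phrase this invariantly: the deformation "smooth the branch point by separating one of the colliding sheets" is intrinsic to the pair $(\pi,\wt p_{i,j})$ and depends algebraically on the point of $\wt\M_{g,n}$, so it defines a global section of the relevant $\Hom$-bundle splitting the sequence. Summing over $(i,j)$ gives the splitting of the whole normal sequence.

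The main obstacle — the step I would expect to take real care with — is making the "canonical transverse slice at a ramification point" genuinely canonical and globally well-defined over all of $\wt\M_{g,n}$, rather than merely existing locally and after choices. One must check that the local smoothing parameter is not just defined up to scale but up to the monodromy/deck action in a way compatible with how $\wt\M_{g,n}$ is built as a cover; for a triple point the local picture is $w^3$, whose versal deformation in which the discriminant stays of the expected form is $w^3+tw$ (no $w^2$ term after centering), and the coefficient of $w$ is a well-defined function of the geometry — but one should verify that this assignment varies holomorphically and equivariantly, which is where a short local computation of the kind the excerpt encourages me to avoid would actually be needed. An alternative, perhaps cleaner route to the same end is to observe that $N$, being supported on the smoothing space of the ramification points, is identified with $\bigoplus_{i,j}\big(T_{\wt p_{i,j}}\wt C\big)^{\otimes ?}$ — a bundle built from jet data along the ramification divisor — and that the surjection ${T\M_{\wt g}}_{|\wt\M_{g,n}}\to N$ admits a section because $N$ is (in our case) a sum of line bundles whose classes one can match against explicitly constructed deformations; if the obstruction to splitting were nonzero it would live in $H^1(\wt\M_{g,n},\mathcal{H}om(N,T\wt\M_{g,n}))$, and exhibiting the explicit deformations above shows it vanishes.
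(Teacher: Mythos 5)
There is a genuine gap, and it is already visible at the level of a dimension count. Your whole strategy rests on identifying the normal bundle $N$ with the sections of a skyscraper sheaf $Q$ supported at the ramification points, of length $\sum_{i,j}(a_{i,j}-1)$ (or the number of points $\wt p_{i,j}$), so that in the paper's main case you conclude ``$N$ is a line bundle.'' But take the case actually used later: $g_0=2$, $d=3$, $\wt g=5$, $n=1$. Then $\dim \M_{\wt g}=12$ while $\dim\wt{\M}_{g,n}=\dim\M_{2,1}=4$, so $N$ has rank $8$, whereas your $Q$ has length at most $2$. The error in the setup is that the natural comparison sequence on $\wt C$ is $0\to T_{\wt C}\to \pi^*T_C\to Q\to 0$ (equivalently $T_{\wt C}\cong \pi^*T_C(-R')$ with $R'=\sum(a_{i,j}-1)\wt p_{i,j}$), and its long exact sequence reads
\[
H^0(\wt C,Q)\to H^1(\wt C,T_{\wt C})\to H^1(\wt C,\pi^*T_C)\to 0 ,
\]
so the skyscraper cohomology maps \emph{into} $H^1(\wt C,T_{\wt C})$ (these are exactly your local ``smoothing the branch point'' directions), and the cokernel of that contribution is $H^1(\wt C,\pi^*T_C)=H^1(C,T_C\otimes\pi_*\OO_{\wt C})$, a cohomology group of a rank-$d$ sheaf on $C$, not of a skyscraper and not equal to $H^1(C,T_C(-B))=T\wt\M_{g,n}$. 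In other words, the normal directions to the Hurwitz locus are not localized at the ramification points at all: a generic deformation of $\wt C$ destroys the existence of any degree-$d$ map to a genus-$g$ curve globally. Consequently the second half of the proposal (a canonical versal slice of the $w\mapsto w^{a}$ germ at each $\wt p_{i,j}$, summed over ramification points) could at best produce a subbundle of rank $\le\mathrm{length}(Q)$, which cannot be a complement to $T\wt\M_{g,n}$; and the closing remark that the obstruction in $H^1(\wt\M_{g,n},\mathit{Hom}(N,T\wt\M_{g,n}))$ ``vanishes because we exhibited deformations'' is not an argument.

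For comparison, the paper splits the sequence by a symmetry/averaging argument rather than by localizing at the ramification: pass to the Galois closure $\hat C\to C$ with group $G$ (and $H\subset G$ the subgroup with $\wt C=\hat C/H$). Then the restriction of $T\M_{\wt g}$ to $\wt\M_{g,n}$ is fiberwise $H^1(\wt C,T_{\wt C})=\bigl(H^1(\hat C,T_{\hat C})\bigr){}^H=\bigoplus_{\rho}V_\rho\otimes\rho^H$, an isotypic decomposition under $G$ with multiplicity bundles $V_\rho$ on $\wt\M_{g,n}$; the trivial-representation summand is exactly $T\wt\M_{g,n}$, and the sum of the remaining isotypic pieces is a canonical complement. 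The only global subtlety is monodromy possibly permuting the representations, and it is dealt with because the trivial representation is never conjugate to a nontrivial one. If you want a proof closer in spirit to your sheaf-theoretic setup, the fix is not the skyscraper but a transfer map: in characteristic zero the trace splits $T_C(-B)\to\pi_*T_{\wt C}=T_C\otimes\pi_*\OO_{\wt C}(-R')$, and applying $H^1(C,-)$ gives a retraction of $\iota_*$; this is the abelianized shadow of the paper's averaging over $G$.
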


\begin{proof}
First consider the special case when the map  $\pi: \wt{C} \to C$ is a Galois cover 
at one point of $\wt{\M}_{g,n} $, hence at all such points. 
In other words, assume there is a finite group $G$ which acts faithfully on $ \wt{C}$ with quotient $C =  \wt{C} / G$.
For now, assume also that $G$ acts on 
the universal curve ${\mathcal C} \to \wt{\M}_{g,n} $.
The action of $G$  on $ \wt{C}$ induces actions on $\pi_*\OO_{ \wt{C} }$ 
and on other natural objects such as   $\pi_*K_{ \wt{C} }^\ell$ (for various integer $\ell$) 
and $T_{ \wt{C} }\M_{\wt{g}} $.
Similarly, the action of $G$  on ${\mathcal C} \to \wt{\M}_{g,n} $ turns 
$ {T{ \M_{\wt{g}}}}_{|\wt{\M}_{g,n}}$  into an equivariant $G$-bundle whose typical fiber is 
$T_{ \wt{C} }\M_{\wt{g}} $. Therefore 
$ {T{ \M_{\wt{g}}}}_{|\wt{\M}_{g,n}}$ 
decomposes as a direct sum:
\[
 {T{ \M_{\wt{g}}}}_{|\wt{\M}_{g,n}} = \bigoplus_{  \rho \in G^{\vee}} V_{\rho} \otimes \rho.
\]
Here $\rho$ runs over the irreducible representations of $G$, and 
$ V_{\rho}  := \Hom^G(\rho, {T{ \M_{\wt{g}}}}_{|\wt{\M}_{g,n}})$
is the multiplicity bundle of $\rho$. 
Since the summand with $\rho = {\mathbf 1}$ is $V_{  {\mathbf 1}  } = T{\wt{\M}_{g,n}}$, 
we have a decomposition
\[
 {T{ \M_{\wt{g}}}}_{|\wt{\M}_{g,n}}  = T{\wt{\M}_{g,n}}  \oplus N
\]
where $N$ is the sum of the  remaining summands.

In the above we assumed that the action of $G$ on $\wt{C}$ extended to an action of $G$  on 
the universal curve ${\mathcal C} \to \wt{\M}_{g,n} $. This may not be the case: monodromy around 
$\wt{\M}_{g,n} $ may take the action of $G$ on $\wt{C}$ to another, conjugate action. 
The effect is that several of the $V_{\rho}$ may have to be combined. 
Nevertheless, our argument goes through since the trivial representation $\rho = {\mathbf 1}$ is not conjugate to any other.

In the general case, we can replace  $\pi: \wt{C} \to C$ by its Galois closure  $\hat{\pi}: \hat{C} \to C$:
Away from the branch locus, i.e. over the open subset $C_0 := C \setminus B$, let $\overline{C}_0$ be the $n!$-sheeted unramified cover whose fiber over $p \in C_0$ consists of the $n!$ ways of ordering the $n$ points of $\pi^{-1}(p)$. 
This $\overline{C}_0$ may be disconnected, so let $\hat{C}_0$ be a connected component. It is an unramified Galois cover of 
 $C_0$, with Galois group a subgroup of the symmetric group $S_n$.
 The complete curve $\hat{C}$ is the unique (non-singular)  compactification of $\hat{C}_0$. It is still Galois over  $C$ with the same $G$, but of course it is ramified. Denote its genus by $\hat{g}$.

In this situation, $G$ does not act on $\wt{C}$, nor on  ${T{ \M_{\wt{g}}}}_{|\wt{\M}_{g,n}}$. Nevertheless we can still make sense of the decomposition. 
We have an action of $G$ on $\hat{C}$ with quotient $C$. Let $H \subset G$ be the stabilizer of an unramified point of $\wt{C}$. Then 
$\hat{C}$ is also a Galois cover of $\wt{C} = \hat{C} / H $, with group $H$. 
For each  irreducible representation $\rho$ of $G$ there is an invariant subspace $\rho^H \subset \rho$, and we have compatible decompositions:
\begin{align*}
T_{\hat{C}}\M_{\hat{g}}  &= \bigoplus_{  \rho \in G^{\vee}} V_{\rho} \otimes \rho\\
T_{\wt{C}}\M_{\wt{g}}  &= \bigoplus_{  \rho \in G^{\vee}} V_{\rho} \otimes \rho^H. 
\end{align*}
Under these decompositions, the corresponding tangent space of ${\wt{\M}}_{g,n}$ is just 
\newline $ V_{\mathbf 1} \otimes {\mathbf 1} ^H=  V_{\mathbf 1} \otimes {\mathbf 1}$, so 
we have again exhibited  $T{\wt{\M}}_{g,n}$ as a direct summand of the restriction of $T\M_{\wt{g}}$.
\end{proof}

\subsection{Non-projectedness of $\SM_{g}$ and $\SM_{g,n}$}

\mbox {   }

We can now prove our main result, Theorem \ref{Main1}:  the non-projectedness of super moduli space $\SM_g$  for $g \geq 5$. 

\begin{proof} 

We do this by reducing from  $\SM_g$  to its submanifold  $\wt{\SM}_{2,1}$ constructed above. In fact, we now have in place all the ingredients for this reduction:

\begin{itemize}
\item the non-vanishing of the first obstruction class
$\omega_2(\wt{\SM}_{2,1})$. (The non-vanishing of $\omega_2({\SM_{2,1}})$ is Theorem \ref{Main2},  proved in  section \ref{marked non-splitness}. The lifting to the covering space $\wt{\SM}_{2,1}$ was seen in Corollary \ref{CoroCovering}.)
\item the inclusion of supermanifolds $\wt{\SM}_{2,1} \to \SM_g$,  proved in section \ref{maps}; and 
\item the decomposition of the restriction  to $\wt{\sM}_{2,1} $ of $T_{+}\SM_g$ into its tangential and normal pieces, proved in proposition \ref{decomposition of sequence} in section \ref{decomposition}.
\end{itemize}

These three ingredients are precisely the inputs of the Proposition \ref{submanifold splits}. We conclude the non-vanishing of the first obstruction class:
\[
\omega := \omega_2(\SM_g)  \neq 0 \in H^1(\sM_g, \Hom( \exterior{2} T_{-} \SM_g, T_{+} \SM_g)),
\]
and hence the non-projectedness of $\SM_g$.

\end{proof} 

We conclude with a proof of Theorem \ref{Main3}:

\begin{proof} 

We start with the extreme case $n=g-1$. Consider the space
$\wt{\SM}_{2,1}$ which parametrizes pairs $(C,p) \in \SM_{2,1}$ plus an unramified cyclic $n$-sheeted cover $\pi: \wt{C} \to C$. Note that the genus of $\wt{C}$ is $g=n+1$. There is a natural embedding $i: \wt{\M}_{2,1} \to \M_{g,n}$ sending the above data to the curve $\wt{C} $ with the $n$ marked points $\pi^{-1}(p)$. A priori, these $n$ points are only cyclically ordered, so there are $n$ distinct ways to order them. However, the cyclic automorphism group $\ZZ/n$ of $\wt{C}$ over $C$ permutes these $n$ orderings transitively, so we get a well defined image point in $\M_{g,n}$. In fact, the entire $\M_{g,n}$ admits an action of the cyclic group $\ZZ/n$ which cyclically permutes the $n$ marked points. Our locus $\wt{\M}_{2,1}$ is a component of the fixed locus of this action. It follows that the normal bundle sequence for the embedding of $\wt{\M}_{2,1}$ in $\M_{g,n}$ is split: the tangent bundle is the $+1$-eigenbundle, while the normal bundle is the sum of the remaining eigenbundles. Finally, the embedding of $\wt{\M}_{2,1}$ in $\M_{g,n}$ lifts, as in section \ref{maps}, to an 
embedding of $\wt{\SM}_{2,1}$ in $\SM_{g,n}$. As in the proof of Theorem \ref{Main1} above, we now have the three ingredients needed for
Proposition \ref{submanifold splits} to apply. We conclude the non-vanishing of the first obstruction class:
\[
\omega := \omega_2(\SM_{g,g-1})  \neq 0,
\]
and hence the non-projectedness of $\SM_{g,g-1}$.

For lower values of $n$, we map the image of $\wt{\M}_{2,1}$ to 
$\M_{g,g-1}$ as above, and then project to $\M_{g,n}$ by the map that 
preserves the first $n$ of the $g-1$ marked points and omits the rest. Again, there is no problem in lifting to a map of supermoduli spaces. We can no longer interpret $\wt{\M}_{2,1}$ as the fixed locus of a group of automorphisms, but nevertheless we can conclude the splitting of the normal bundle sequence:

Very generally, let $f: Y \to Z$ be a fibration, and $i: X \to Y$ an immersion, such that
$f \circ i : X \to Z$ is also an immersion.

We map the normal bundle sequence of bundles on $X$:
\begin{equation}\label{1}
0 \to T_X \to i^*{T_Y} \to N_{X,Y} \to 0        
\end{equation}

onto the normal bundle sequence:

\begin{equation}\label{2}
0 \to T_X \to (f \circ i)^*{T_Z} \to N_{X,Z} \to 0,       
\end{equation}

and note that the kernel sequence is trivial:

$$0 \to 0 \to i^*T_{Y/Z} \to i^*T_{Y/Z} \to 0.$$

Now if sequence \eqref{1} is split by a map

$$N_{X,Y} \to i^*{T_Y}$$

we get an induced splitting

$$N_{X,Z} = N_{X,Y}/i^*T_{Y/Z}  \to  i^*{T_Y}/i^*T_{Y/Z} = (f \circ i)^*{T_Z}$$

of sequence \eqref{2}.

We want to apply this to:

* $X = \wt{\M}_{2,1}$ which parametrizes pairs $(C,p) \in \M_{2,1}$ plus an unramified $(g-1)$-sheeted cover $\pi: \wt{C} \to C$. 

* $Y = \M_{g,g-1}$

* $Z = \M_{g,n}$, for $g-1  \geq n \geq 1$, with $f: Y \to Z$ preserving the first $n$ of the $g-1$ marked points and omitting the rest.

For this, we need to check that $i: X \to Y$ and the induced $f \circ i : X \to Z$ are immersions. Write $\pi^{-1}(p) = p_1 + \dots + p_{g-1}$. Then we need injectivity of the maps on tangent spaces:
$$ H^1(C,T_C (-p)) \stackrel{d(i)}{\to} H^1(\wt{C},T_{\wt{C}} (-\pi^{-1} p)) \stackrel{d(f)}{\to} H^1(\wt{C},T_{\wt{C}} (-(p_1 + \dots + p_n ))). 
$$
This  commutes with the map on tangents of the moduli spaces of curves without marked points:
$$ H^1(C,T_C) \to H^1(\wt{C},T_{\wt{C}}),
$$
and the latter is injective. These maps fit together into a commutative diagram with exact columns:

 \begin{equation*}
   \begin{matrix}  
H^0({T_C}_{|p})  & \to & \oplus_{j=1}^{g-1} H^0({T_{\wt{C}}}_{| p_j })  & \to & \oplus_{j=1}^n H^0({T_{\wt{C}}}_{| p_j })  \cr
\downarrow  &  & \downarrow &  &     \downarrow    \cr
H^1(C,T_C (-p))  & \stackrel{d(i)}{\to} & H^1(\wt{C},T_{\wt{C}} (-\pi^{-1} p)) & \stackrel{d(f)}{\to} & H^1(\wt{C},T_{\wt{C}} (-(p_1 + \dots + p_n )))  \cr
 \downarrow  &  & \downarrow &  &     \downarrow    \cr
 H^1(C,T_C) &  \to & H^1(\wt{C},T_{\wt{C}}) & = &  H^1(\wt{C},T_{\wt{C}}).   \end{matrix}       \end{equation*} 

injectivity of the bottom map implies that $\Ker (d(f \circ i ))$ must come from the vertical direction $H^0({T_C}_{|p}) = \Ker (H^1(C,T_C (-p)) \to H^1(C,T_C))$. But for each $j$, the map
$H^0({T_C}_{|p}) \to H^0({T_{\wt{C}}}_{|p_j})$ is an isomorphism, so $\Ker (d(f \circ i ))$ vanishes, showing that $f \circ i $ is indeed an immersion.

This shows that the normal sequence for $\wt{\M}_{2,1}$ in $\M_{g,n}$  splits for $g-1  \geq n \geq 1$, so the non-vanishing of the obstruction for $\wt{\M}_{2,1}$ implies the same for $\M_{g,n}$. The theorem now follows as before from Proposition \ref{submanifold splits}.   (Note that the argument fails for $n=0$, because the natural map $f \circ i : \wt{\M}_{2,1} \to \M_g$ factors through $\wt{\M}_2$ and is therefore not an immersion.)
\end{proof} 

In stating this argument, we have ignored the fact that particular genus 2 curves with a marked point have exceptional
automorphisms.  To justify what we have asserted, one may either develop the theory for orbifolds, or restrict
from $X$ to an open subset of $X$ that contains a fiber of $\wt{\M}_{2,1}\to \M_{2}$.

\section{Acknowledgments}
We thank Pierre Deligne for a careful reading of an earlier version of the manuscript, and for correcting several statements and proofs, especially in section 2.
We are grateful to Gavril Farkas,  Dick Hain, Brendan Hassett, Sheldon Katz, Igor Krichever, Dimitry Leites, Yuri Manin, Tony Pantev, Albert Schwarz,  Liza Vishnyakova and Katrin Wendland for helpful discussions.
RD acknowledges partial support by NSF grants DMS 0908487 and RTG 0636606. EW acknowledges partial support by NSF Grant PHY-0969448.

\appendix

\section{A detailed example in genus 5}

In this appendix, we give an elementary construction of a family of triple covers $\wt{C} \to C$ with a single branch point, where $g(\wt{C})=5, ~  g(C)=2$. We analyze the parameters on which this construction depends and find, somewhat surprisingly, that the parameter space of such covers with fixed $C$ is a {\em rational} curve in $\M_5$. This curve has some orbifold points, so it maps to the moduli space but not to the moduli stack. It has a cover of genus 19 over which a family of genus 5 triple covers exists.
Finally, we examine the effect of adding spin structures to our curves, and verify that even spin structures on the genus 2 curve $C$ can lead to both even and odd  spin structures on the genus 5 curve $\wt{C}$.

\subsection{The Galois closure}
A triple cover $\rho: \wt{C} \to C$ with a single branch point $p \in C$ cannot be cyclic, so its Galois group is the symmetric group $S_3$ of permutations of $\{1,2,3\}$. Its Galois closure is therefore a  smooth
curve   ${\overline{\wt{C}}} $  on which $S_3$ acts. 
(One way to obtain ${\overline{\wt{C}}} $ explicitly is by taking  the self product  $\wt{C} \times_C \wt{C} $, removing the diagonal $\wt{C} $, and taking the unique smooth compactification.)
The quotient by $S_3$ is the original $C$, and the quotient by the subgroup $S_2$ of permutations of $\{1,2\}$ is the original $\wt{C}$. This subgroup is not normal: there are three conjugate  subgroups $({S_2})_i, ~ i=1,2,3$, and corresponding quotient curves ${\wt{C}}_i$. These are isomorphic to each other:  if $\{i,j,k\}$ is a permutation of $\{1,2,3\}$, the involution $\overline{\tau}_k$ on ${\overline{\wt{C}}}$ induced by the transposition $(ij)$ exchanges ${\wt{C}}_i$ and ${\wt{C}}_j$. But if we divide by the alternating subgroup $A_3 \subset S_3$, we get a new intermediate curve $\overline{C} $ which is a double cover  of $ C$ and a triple quotient $\bar{\rho}: {\overline{\wt{C}}} \to \overline{C} $
of ${\overline{\wt{C}}} $. It comes with an involution $\tau$, induced by any of the $\overline{\tau}_k$.
A point of ${\overline{\wt{C}}} $ over some $q \neq p$ in $C$ can be thought of as a labeling or ordering of the 3 points 
in $\rho^{-1}(q)$. The action of $S_3$ permutes the labels, and a point of the quotient $\overline{C}$ above $q$ can be thought of as an orientation, or cyclic ordering, of that fiber. An easy monodromy argument shows that above $p$ there is a single point $\wt{p}_i$ in each ${\wt{C}}_i$,  two points 
${\overline{\bar{p}}},  \quad {\overline{\bar{p}}}'=\tau {\overline{\bar{p}}} \in \overline{C}$, and two points 
${\overline{\wt{p}}},  {\overline{\wt{p}}}'  \in    {\overline{\wt{C}}}$. (Each of the transpositions $\tau_k = (ij)$ exchanges
${\overline{\wt{p}}}, {\overline{\wt{p}}}' $, while the 3-cycles preserve them.) It follows that  ${\overline{\wt{C}}} $ is an unramified double cover of each ${\wt{C}}_i$, as is $\overline{C} $ over $C$. We display these curves, their genera and the maps between them in the following snapshot:

\begin{center}

\begin{tikzpicture}[node distance=2cm, auto]
 \node (C) {${}_2 C$};
  \node (C-) [left of=C, above of=C, node distance=1.5cm] {${}_3{\overline{{C}}}$};
  \node [above of=C-, node distance=1.5cm] (G) {};
  \node [right of=G, node distance=4cm] (C1-) {${}_9{\overline{\wt{C}}} $};
  \node [right of=C1-, below of=C1-,  node distance=1.5cm] (C3) {${}_5{\wt{C}}_3$};
  \node [left of=C3,   node distance=1cm] (C2) {${}_5{\wt{C}}_2$};
  \node [left of=C2,   node distance=1cm] (C1) {${}_5{\wt{C}}_1$};
  \draw[->] (C1) to node {} (C);
  \draw[->] (C2) to node {} (C);
  \draw[->] (C3) to node {$\rho$} (C);
  \draw[->] (C-) to node {} (C);
  \draw[->] (C1-) to node {$\bar{\rho}$ } (C-);
  \draw[->] (C1-) to node {} (C1);
  \draw[->] (C1-) to node {} (C2);
  \draw[->] (C1-) to node {} (C3);
\end{tikzpicture}

\end{center}

\subsection{The construction}\label{sedrftgyhu}

We now reverse the above analysis, obtaining a direct construction of the  triple covers $\rho: \wt{C} \to C$ with a single branch point $p \in C$. This will allow us to describe the parameter spaces on which the construction depends.

Start with a genus 2 curve $C= ~ _2C$ and an unramified double cover $_3\overline{C} \to C$, with fixed point free involution $\tau: \overline{C} \to \overline{C} $. (Since $C$ has genus 2, the genus of $\overline{C} $ is $3= 2 \times 2 - 1$, which we indicate with the left subscript.)   Given a point $\bar{p}    \in  \overline{C} $ and some additional data, we construct a cyclic triple cover $\bar{\rho}: {\overline{\wt{C}}}  \to\overline{C} $ which is totally ramified over $\bar{p}$ and (with the opposite orientation) over $\tau \bar{p}$, with a deck transformation
$\sigma: {\overline{\wt{C}}}  \to {\overline{\wt{C}}}$.
The extra data consists of a line bundle $L \in \bar{J} :=\Pic^0 ({\overline{{C}}})$ with an isomorphism 
\begin{equation} \label{17}
L^{\otimes 3} \cong \OO_{\overline{C} } (\bar{p}-\tau \bar{p}) ,
\end{equation}
which we interpret as a 3 to 1 map from the total space of $L$ to the total space of 
$\OO_{\overline{C} } (\bar{p}-\tau \bar{p}) $. 
Now $\OO_{\overline{C} } (\bar{p}-\tau \bar{p}) $ has a meromorphic section  corresponding to the section 1 of $\OO_{\overline{C} }$, and we let ${\overline{\wt{C}}}$ be its inverse image in  the total space of $L$, a cyclic triple cover of  ${\overline{{C}}}$. The automorphism $\sigma$ is induced from multiplication by a cubic root of unity on $L$.

The involution $\tau: \overline{C} \to \overline{C} $ lifts to an involution  
$ \wt{ \tau}:   {\overline{\wt{C}}} \to {\overline{\wt{C}}}$ if and only if $\tau^* L$ is isomorphic either to $L$ or to $ L^{-1}$. In the former case,  $ \wt{ \tau}$ commutes with $\sigma$, so the resulting 
$ {\overline{\wt{C}}}$ is a Galois cover of $C$ with Galois group $\ZZ/6$. 
In the latter case,  $\wt{ \tau}$ commutes  $\sigma$ to $\sigma^{-1}$, so the resulting 
$ {\overline{\wt{C}}}$ is a Galois cover of $C$ with Galois group the symmetric group $S_3$. 
(In general, the line bundle $\tau^* L^{-1}$ also satisfies condition \eqref{17} and therefore defines  another cyclic triple cover ${\overline{\wt{C'}}}  \to\overline{C} $ with the same ramification pattern as ${\overline{\wt{C}}}$ . The involution $\tau: \overline{C} \to \overline{C} $ always lifts to an isomorphism ${\overline{\wt{C}}} \to {\overline{\wt{C'}}} $, in fact to three of them, and when 
\begin{equation} \label{18}
\tau^* L^{-1} \cong L\end{equation}
these give three involutions of ${\overline{\wt{C}}}$, each conjugating $\sigma$ to $\sigma^{-1}$.)

\subsection{A rational curve in ${\M}_5$}

We will now analyze the parameters on which the construction of the previous section depends. Perhaps surprisingly, we find that the compact curve in moduli space parametrizing our triple covers (of a fixed curve $C$, and corresponding to a specified double cover $\overline{C} \to C$) is actually {\em rational}. 

 We need to choose a point $\bar{p} \in \overline{C}$ and a cubic root $L$ of $\OO_{\overline{C} } (\bar{p}-\tau \bar{p}) $ as in \eqref{17}. The set of those roots is a coset of the subgroup of points of order 3 in the Jacobian $\bar{J} = \Pic^0 ({\overline{{C}}})$, which is isomorphic to $(\ZZ/3)^6$. So our data seems to live in  a cover of $\overline{C}$ of degree $3^6$. But this cover turns out to be reducible, and our additional condition \eqref{18} picks out a subcover of degree 9. In order to see this, we need to review some general results on Prym varieties of unramified double covers.

\subsubsection{Pryms} \label{wesdr1}
Condition \eqref{18} is equivalent to $L \in \Ker(1+\tau^*)$, where $1+\tau^*$ is the endomorphism of $\bar{J}$ sending $L$ to $L \otimes \tau^* L$. For a general unramified double cover $\pi: \overline{C} \to C$ with involution $\tau$, Mumford \cite{Mumford} described $\Ker(1+\tau^*)$. It consists of four cosets of the {\em Prym} variety 
\begin{equation}\label{Prym}
P:= (1-\tau^*)\bar{J}.
\end{equation}
It is convenient to use the {\em Norm} map ${\text{Nm}}(\pi): \bar{J} \to J$, which is the homomorphism that sends a line bundle $\OO_{\overline{C}}(\bar{D})$ to $\OO_{{C}}({D})$ where $D:=\pi(\bar{D})$ is the image of the divisor $D$. Note that  $\pi^* \circ {\text{Nm}}(\pi) = 1 + \tau^*$, so  $\Ker(1+\tau^*) $ contains 
 $\Ker({\text{Nm}}(\pi)) $. In fact,
$\Ker(1+\tau^*) = ( {\text{Nm}}(\pi))^{-1} (K) $, where 
\[
K := \Ker(\pi^* : J \to \bar{J}) \cong \ZZ/2 .
\] 
Finally, 
$\Ker({\text{Nm}}(\pi)) =P \cup P'$ where $P'$ is the coset: 
\begin{equation}\label{P'}
P':= (1-\tau^*)\Pic^1(\overline{C})
\end{equation}
of $P$. (More generally, 
$1-\tau^*$ maps divisors of even degree on $\overline{C}$ to $P$, and divisors of odd degree to $P'$.)  So all in all we have:
\[\
\Ker({\text{Nm}}(\pi)) =P \cup P' = (1-\tau^*) \Pic( \overline{C}  )
\]
and 
\[
0 \to \Ker({\text{Nm}}(\pi)) \to \Ker(1+\tau^*) \to K \to 0.
\]

We now return to the conditions on our line bundle $L$. Condition \eqref{18} says that $L$ is in $\Ker(1+\tau^*)$ which, as we have just seen, consists of four components. The map $L \to L ^{\otimes 3}$ sends each of these components to itself. Since the right hand side of \eqref{17} is in $P'$ by \eqref{P'}, we see that $L$ must be in $P'$ too.

\subsubsection{Hyperelliptic Pryms} \label{wesdr2}
In our case, we can make everything more explicit. Our genus 2 base curve $C$ is hyperelliptic, a double cover of $\PP^1$ branched at the 6 points $0, \infty, e_0=1, e_1, e_2, e_3$. The double cover $\overline{C}$ is determined by a $4+2$ partition of these 6 branch points: say $0, \infty$ vs. $e_0, e_1, e_2, e_3$. The double cover of $\PP^1$ branched at $0, \infty$ is a rational curve $R=  {{_0}R}$, and the double cover of $\PP^1$ branched at $e_0, e_1, e_2, e_3$ is a genus 1 curve $E=  {{_1}E}$. The cover $\overline{C}$ has three involutions $\tau_0 ,\tau_1,\tau_2=\tau$, sitting in
a symmetry group $\ZZ/2 \times \ZZ/2$, with quotient $\PP^1$ and intermediate quotients $R, E, C$. We can include these in our snapshot:

\begin{center}

\begin{tikzpicture}[node distance=2cm, auto]
 \node (C) {${{_2}C}$};
  \node (C-) [left of=C, above of=C, node distance=1.5cm] {${\overline{{C}}}$};
  \node [above of=C-, node distance=1.5cm] (G) {};
  \node [right of=G, node distance=4cm] (C1-) {${\overline{\wt{C}}} $};  
  \node [right of=C1-, below of=C1-,  node distance=1.5cm] (C3) {${\wt{C}}_3$};
  \node [left of=C3,   node distance=1cm] (C2) {${\wt{C}}_2$};
  \node [left of=C2,   node distance=1cm] (C1) {${\wt{C}}_1$};
  \node (P1) [left of=C, below of=C, node distance=1.5cm] {${\PP}^1$};
  \node (R) [left of=P1, above of=P1, node distance=1.5cm] {${{_0}R}$};
  \node (E) [ above of=P1, node distance=1.5cm] {${{_1}E}$};
  \draw[->] (C) to node {} (P1);
  \draw[->] (E) to node {} (P1);
  \draw[->] (R) to node {} (P1);

  \draw[->] (C1) to node {} (C);
  \draw[->] (C2) to node {} (C);
  \draw[->] (C3) to node {} (C);
  \draw[->] (C-) to node {$\pi_2$} (C);
  \draw[->] (C-) to node {$\pi_1$} (E);
  \draw[->,swap] (C-) to node {$\pi_0$} (R);
  \draw[->] (C1-) to node {} (C-);
  \draw[->] (C1-) to node {} (C1);
  \draw[->] (C1-) to node {} (C2);
  \draw[->] (C1-) to node {} (C3);
\end{tikzpicture}

\end{center}

In order to avoid clutter, we will show only one of the 3 quotients ${\wt{C}}_i$, which we rename ${\wt{C}}$:

\begin{center}

\begin{tikzpicture}[node distance=2cm, auto]
 \node (C) {${{_2}C}$};
 \node (C-) [left of=C, above of=C, node distance=1.5cm] {${\overline{{C}}}$};
 \node [above of=C-, node distance=1.5cm] (G) {};
  \node [right of=G, node distance=4cm] (C1-) {${\overline{\wt{C}}} $};  
  \node [right of=C1-, below of=C1-,  node distance=1.5cm] (C3) {${\wt{C}}$};
 
  \node (P1) [left of=C, below of=C, node distance=1.5cm] {${\PP}^1$};
  \node (R) [left of=P1, above of=P1, node distance=1.5cm] {${{_0}R}$};
  \node (E) [ above of=P1, node distance=1.5cm] {${{_1}E}$};
  \draw[->] (C) to node {} (P1);
  \draw[->] (E) to node {} (P1);
  \draw[->] (R) to node {} (P1);
  \draw[->,swap] (C3) to node {$\rho$} (C);
  \draw[->] (C-) to node {$\pi_2$} (C);
  \draw[->] (C-) to node {$\pi_1$} (E);
  \draw[->,swap] (C-) to node {$\pi_0$} (R);
  \draw[->] (C1-) to node {$\bar{\rho}$} (C-);
  \draw[->] (C1-) to node {$\wt{\pi_2}$} (C3);
\end{tikzpicture}

\end{center}

Since any degree 0 line bundle on $\overline{C}$ can be written as the sum of pullbacks from the quotients:
\[
\Pic^0(  \overline{C} ) =(\pi_1)^* (\Pic^0(E))  +  (\pi_2)^* (\Pic^0 (C))  
\]
and $1-\tau^*$ kills $\Pic^0 (C) $, we get from\eqref{Prym} an isomorphism:
\[
P = (1-\tau^*)\bar{J} \cong (1-\tau^*) (\pi_1)^* \Pic^0(E).
\]
But since $(1-\tau^*) (\pi_1)^*  = 2  (\pi_1)^*$ on $\Pic^0(E)$ and $(\pi_1)^*$ is injective, we see that 
\[
(\pi_1)^*: \Pic^0(E) \to P
\]
is an isomorphism. Similarly we find that $E= \Pic^1(E)$ can be naturally identified with $P'$ via a translate of $(\pi_1)^*$. The map is:
\begin{equation}\label{zesdrvbh}
e \mapsto L_e := (\pi_1)^*\OO_E(e) \otimes (H_{\overline{C}})^{-1},
\end{equation}
where $H_{\overline{C}} := (\pi_0)^*\OO_{R}(1)$ is the hyperelliptic line bundle on $\overline{C}$, and it needs to be inserted in the above formula in order to yield a divisor of degree 0 on $\overline{C}$.

\subsubsection{The parameter space}

We can now describe the parameter space for our covers $\wt{C}$. Originally, we wanted pairs $(L,\overline{p})$ satisfying conditions  \eqref{17}, \eqref{18}. The space of eligible line bundles $L$ was identified in section \ref{wesdr1} with the shifted Prym $P'$. In our hyperelliptic setting this was translated in section \ref{wesdr2} to $E = P'$, the isomorphism being given by \eqref{zesdrvbh}. Putting these together, we need to parametrize pairs $(e,\overline{p})$ satisfying the condition:
\begin{equation}\label{zesdh}
 (\pi_1)^*\OO_E(3e) \cong  \OO_{\overline{C} } (\bar{p}-\tau \bar{p}) \otimes (H_{\overline{C}})^{3}.
\end{equation}
This is an equation in the Picard of $\overline{C}$. Keeping in mind that $ (\pi_1)^*$ is injective, this is also equivalent to the equation in $E$:
\begin{equation}\label{zesh}
3e \sim \pi_1\overline{p} + H_E,
\end{equation}
where $\sim$ means linear equivalence on $E$, and we have used that for any $\bar{p} \in \bar{C}$, 
\begin{equation}\label{esfcv}
H_{\overline{C}} \cong \OO(\tau_2\bar{p} + \tau_1\bar{p}).
\end{equation}

In order to parametrize solutions of \eqref{zesh}, consider another copy of $E$, say $E_r$. We think of it as parametrizing cubic {\em roots} of points of $E$. The curves $E_r, E$ are  isomorphic, but we find it convenient to keep the distinction. Let 
$m_1: E_r \stackrel{\cong}{\to} E$ be the isomorphism,
and $m_3$  the multiplication by 3 map:
\[
m_3: E_r \to \Pic^1(E) = E, \quad  e \mapsto \OO_E(3e) \otimes (H_E)^{-1},
\]
where $H_E$ is the hyperelliptic line bundle on $E$, pullback of $\OO_{\PP^1}(1)$.  (Note that we have not chosen a base point in $E$, only a degree 2 line bundle $H_E$, or equivalently the map to $\PP^1$. The  cubing map $m_3$ is nevertheless well defined.) We see that the natural  parameter space for our triple covers $\wt{C} \to C$ is the fiber product:
\[
 \overline{C}_r  := E_r \times_E \overline{C}.
\]
This is a 9-sheeted unramified cover of $\overline{C}$, so its genus is $19=1+9 \times (3-1)$. Locally over 
$ \overline{C}_r $ we can construct the family of triple covers ${\overline{\wt{C}}}  \to\overline{C} $ and their quotients $\wt{C}$. Since the generic curve $\wt{C}$ has no non-trivial automorphisms, these local families automatically glue to a family of triple covers $\wt{C} \to C$ parametrized by $ \overline{C}_r $.

The resulting map $\sigma: \overline{C}_r \to \M_5$ is clearly not an embedding. For one thing, the pairs $(L,\bar{p})$ and 
$(L^{-1}=\tau^*L, \tau\bar{p})$ give isomorphic covers. To understand the quotient, we note that the involution on $E$ with quotient $\PP^1$ lifts to an involution on $E_r$, so let $\PP^1_r$ be the quotient. It is  a 9 sheeted branched cover of the original $\PP^1$,  with ramification pattern $(2^4,1)$ over each of the 4 branch points $e_i$ of $E$ over $\PP^1$,  and $E_r$ is recovered as the normalization of the fiber product $ \PP^1_r  \times_{\PP^1} E$. We can complete this into a commutative box:

\begin{center}

\begin{tikzpicture}[
  back line/.style={densely dotted},
  cross line/.style={preaction={draw=white, -,line width=6pt}}]
  \node (A) {$E_r$};
  \node [right of=A, node distance=4cm] (B) {$E$};
  \node [below of=A, node distance=2cm] (C) {$\PP^1_r$};
  \node [right of=C, node distance=4cm] (D) {$\PP^1$};
 
  \node (A1) [right of=A, above of=A, node distance=1cm] {$\overline{C}_r$};
  \node [right of=A1, node distance=4cm] (B1) {$\overline{C}$};
  \node [below of=A1, node distance=2cm] (C1) {$C_r$};
  \node [right of=C1, node distance=4cm] (D1) {$C$};
 
  \draw[back line] (D1) -- (C1) -- (A1);
  \draw[back line] (C) -- (C1);
  \draw[cross line] (D1) -- (B1) -- (A1) -- (A) ;
  \draw (D) -- (D1) -- (B1) -- (B);
  \draw[->] (A) to node {} (B);
  \draw[->] (A) to node {} (C);
  \draw[->] (C) to node {} (D);
  \draw[->] (B) to node {} (D);
 \draw[->] (A1) to node {} (A);
 \draw[->] (A1) to node {} (B1);
  \draw[->] (B1) to node {} (B);
  \draw[->] (B1) to node {} (D1);
  \draw[->] (D1) to node {} (D);
\end{tikzpicture}

\end{center}

where the horizontal maps have degree 9, the others have degree 2. We see that the quotient of $\overline{C}_r$ by the above involution is what we have now labeled $C_r := \PP^1_r \times_{\PP^1} C $, a 9-sheeted unramified cover of $C$, hence of genus 10. However, there is a further symmetry: 
the hyperelliptic involution of $C$. Dividing by that, we see that the map of our family $\overline{C}_r$ to $\M_g$ factors through the rational curve $\PP^1_r$, as claimed. Pictorially, the hyperelliptic involutions of $C$ and $E$ generate a group $\ZZ/2 \times \ZZ/2$ which acts on the entire box. In particular it acts on $\overline{C}_r$ which parametrizes the triple covers $\wt{C}$, and the map $\sigma: \overline{C}_r \to \M_5$ is invariant under this action, so 
the curves $\wt{C}$ over points in an orbit of $\ZZ/2 \times \ZZ/2$ are isomorphic.

\subsection{The family}

We are going to construct universal curves over $\overline{C}_r,$ i.e. surfaces 
\[
\overline{\wt{\mathcal{C}}},   ~ \wt{\mathcal{C}}
\]
and fibrations
\[ 
\overline{\wt{\mathcal{C}}} \to \wt{\mathcal{C}} \to \overline{C}_r,
\]
such that the fibers over each point of $\overline{C}_r$ are isomorphic to the corresponding curves
$ \overline{\wt{{C}}} \to \wt{{C}}$ constructed above.
More precisely:

\begin{proposition}
 There is a commutative diagram:

\begin{center}

\begin{tikzpicture}[ auto]
  \node (P) {$\overline{\wt{\mathcal{C}}}$};
  \node (B) [node distance=3cm,right of=P] {$\wt{\mathcal{C}}$};
  \node (A) [node distance=2cm,below of=P] {$ \overline{C}_r \times  \overline{C}$};
  \node (C) [node distance=2cm,below of=B] {$\overline{C}_r \times {C},$};
  \draw[->] (P) to node {$\wt{\pi_2}$} (B);
  \draw[->] (P) to node [swap] {$\bar{\rho}$} (A);
  \draw[->] (A) to node [swap] {$1 \times \pi_2$} (C);
  \draw[->] (B) to node {$\rho$} (C);
\end{tikzpicture}

\end{center}

where $\overline{\wt{\mathcal{C}}},   ~ \wt{\mathcal{C}}$ are smooth surfaces,
the vertical maps are branched triple covers, and the fibers of $\overline{\wt{\mathcal{C}}},   ~ \wt{\mathcal{C}}$ over points of $\bar{C}_r$ are the triple covers $\bar{\tilde{C}} \to \bar{C}, \quad \wt{C} \to C$, with two (respectively one) total ramification points, constructed in section \ref{sedrftgyhu}.
\end{proposition}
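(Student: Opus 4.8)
The plan is to globalize the pointwise construction of Section \ref{sedrftgyhu} by running it in families over $\overline{C}_r$. First I would recall the key objects: $\overline{C}_r = E_r \times_E \overline{C}$ carries, by its very definition as a fiber product, two tautological ``universal'' sections of the relevant bundles — a universal choice of a point $\overline{p} \in \overline{C}$ (the projection $\overline{C}_r \to \overline{C}$, viewed as a section of $\overline{C}_r \times \overline{C} \to \overline{C}_r$) and a universal cube-root datum (the map $\overline{C}_r \to E_r$). The first step is therefore to build the relative line bundle $\mathcal{L}$ on $\overline{C}_r \times \overline{C}$ whose restriction to each slice $\{x\} \times \overline{C}$ is the line bundle $L = L_e$ attached in \eqref{zesdrvbh} to the point $e = m_1(\text{pr}_{E_r}(x))$; by \eqref{zesh}--\eqref{esfcv} this $\mathcal{L}$ comes equipped with a relative isomorphism $\mathcal{L}^{\otimes 3} \cong \OO(\overline{\mathcal{P}} - (1\times\tau)_*\overline{\mathcal{P}})$, where $\overline{\mathcal{P}} \subset \overline{C}_r \times \overline{C}$ is the graph of the diagonal projection. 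This is exactly the data \eqref{17} spread out over the base.

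The second step is to carry out the total-space construction of Section \ref{sedrftgyhu} relatively: inside the total space of $\mathcal{L}$ over $\overline{C}_r \times \overline{C}$, the meromorphic section ``$1$'' of $\OO(\overline{\mathcal{P}} - (1\times\tau)_*\overline{\mathcal{P}})$ pulls back and its cube-root locus defines a subvariety $\overline{\wt{\mathcal{C}}}$, a relative cyclic triple cover of $\overline{C}_r \times \overline{C}$ totally ramified along $\overline{\mathcal{P}}$ and its $\tau$-image. Smoothness of $\overline{\wt{\mathcal{C}}}$ is checked fiberwise (each fiber is the smooth curve $\overline{\wt C}$ constructed before) plus transversality in the base direction, which holds because the defining equation depends on the base parameters through $\mathcal{L}$ and $\overline{\mathcal{P}}$ in a submersive way — this is the local computation in coordinates $y|{-}$, analogous to the branched-cover patching of Section \ref{covers}. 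Then $\wt{\mathcal{C}}$ is obtained as the quotient of $\overline{\wt{\mathcal{C}}}$ by one of the three involutions $\wt\tau_k$ lifting $\tau$ (these exist globally by condition \eqref{18}, which is built into $\overline{C}_r$), equivalently by taking the $(S_2)_k$-quotient; the commuting square with the maps $\overline{\rho}$, $\rho$, $\wt{\pi_2}$, $1 \times \pi_2$ is then immediate from the construction, as is the identification of the fibers with $\overline{\wt C} \to \overline{C}$ and $\wt C \to C$.

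The third step is descent/gluing, but as the proposition already hints, over the open locus where $\wt C$ has no nontrivial automorphisms the local families glue uniquely, and since $\overline{C}_r$ was constructed precisely as the parameter space carrying the universal $(L,\overline{p})$, there is nothing further to choose: the families are the ones just written down. One should note that the three choices of $\wt\tau_k$ (equivalently the three conjugate subgroups $(S_2)_i$) give three isomorphic surfaces $\wt{\mathcal{C}}$, matching the three curves $\wt C_i$ in the snapshot, so ``$\wt{\mathcal{C}}$'' is well-defined up to this harmless ambiguity.

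I expect the main obstacle to be the verification that $\overline{\wt{\mathcal{C}}}$ (and hence $\wt{\mathcal{C}}$) is \emph{smooth} as a surface, i.e. that the total-space cube-root construction behaves well not just fiber-by-fiber but transversally to the fibers, near the ramification divisor $\overline{\mathcal{P}}$ where the relative triple cover is totally ramified. Everything else — the existence of $\mathcal{L}$ with its cube relation, the commutativity of the square, the fiberwise identification — is a formal consequence of the fiber-product definition of $\overline{C}_r$ and of the earlier pointwise analysis. The smoothness point is handled by the same kind of explicit local coordinate computation used in Section \ref{SRSbranched}--\ref{SRS blowup} for branched covers: near a ramification point one has $w^3 = (\text{base-dependent local equation of }\overline{\mathcal{P}})$, and one checks the full Jacobian has maximal rank because the ``base-dependent'' part genuinely varies with the $\overline{C}_r$-coordinates.
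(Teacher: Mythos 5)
Your overall architecture is the same as the paper's (a relative cyclic triple cover inside the total space of a cube-root line bundle $\mathcal{L}$ on $\overline{C}_r\times\overline{C}$, followed by the quotient by a lifted involution), but there is a genuine gap at the step you dismiss as formal. You assert that ``the existence of $\mathcal{L}$ with its cube relation \dots is a formal consequence of the fiber-product definition of $\overline{C}_r$.'' It is not. The pointwise data give, for each $(e,\bar p)\in\overline{C}_r$, an isomorphism $L_e^{\otimes 3}\cong\OO_{\overline{C}}(\bar p-\tau\bar p)$ that is unique only up to scalar; what the construction of the family requires is a single line bundle $\mathcal{L}$ on $\overline{C}_r\times\overline{C}$ together with a \emph{global} isomorphism $\mathcal{L}^{\otimes 3}\cong\mathcal{D}:=p_{23}^*\OO(\Delta-\Delta')$. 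Any $\mathcal{L}$ with the correct vertical restrictions only satisfies $\mathcal{L}^{\otimes 3}\cong\mathcal{D}\otimes pr_1^*N$ for some a priori nontrivial line bundle $N$ on $\overline{C}_r$, and this base twist must be killed before the relative cyclic cover (and hence the surface $\overline{\wt{\mathcal{C}}}$) exists. This is exactly where the paper does its work: it writes down the explicit choice $\mathcal{L}=\OO(\Gamma)\otimes pr_2^*\pi_0^*\OO_R(-1)\otimes pr_1^*\pi_{0,r}^*\OO_{R_r}(-1)$ and verifies $\mathcal{L}^{\otimes 3}\cong\mathcal{D}$ by restricting to all vertical curves (where one recovers \eqref{zesdh}) \emph{and} to the horizontal curves $\overline{C}_r\times\{\bar q\}$, where the verification is a real computation using \eqref{esfcv} and the identity $m_3^*\OO_E(e)\cong m_1^*\OO_E(3e)\otimes H_{E_r}^3$ of \eqref{pog}. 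The same issue recurs for the involution: to form $\wt{\mathcal{C}}$ you need the relative analog of \eqref{18}, i.e.\ a global isomorphism between $(1\times\tau)^*\mathcal{L}$ and $\mathcal{L}^{-1}$ compatible with the cube trivialization; saying it is ``built into $\overline{C}_r$'' begs the question, and the paper again checks it on horizontal and vertical slices (on horizontal curves via the explicit formula \eqref{dwtresy}). If you wanted to avoid the explicit divisor $\Gamma$, you would still need an argument, e.g.\ showing that the discrepancy $N$ has degree divisible by $3$ and invoking divisibility of $\Pic^0(\overline{C}_r)$; no such argument appears in your proposal.

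By contrast, the point you single out as the main obstacle, smoothness of $\overline{\wt{\mathcal{C}}}$, is comparatively routine and is not where the paper spends its effort: the branch divisor is the disjoint union of two graph-like divisors (disjoint because $\tau$ is fixed-point free), smooth in the smooth surface $\overline{C}_r\times\overline{C}$, and a cyclic cover totally ramified along a smooth divisor has smooth total space; no extra ``transversality in the base direction'' needs to be engineered once the global $\mathcal{L}$ and its cube trivialization exist.
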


\begin{proof}

To construct $\overline{\wt{\mathcal{C}}}$ we need a line bundle $\mathcal{L}$ on $ \overline{C}_r \times  \overline{C}$ satisfying the analog of \eqref{17}:
\begin{equation}\label{lmkij}
{\mathcal{L}} ^{\otimes 3} \cong p_{23}^* \OO(\Delta - \Delta') =: \mathcal{D},
\end{equation}
where $p_{23}$ is the projection:
\[
p_{23} : \overline{C}_r \times  \overline{C} = E_r \times_E \overline{C} \times  \overline{C} \to 
\overline{C} \times  \overline{C}, 
\]

$\Delta \subset \overline{C} \times  \overline{C}$ is the diagonal, and 
$ \Delta'$ is the graph of $\tau_2: \overline{C} \to  \overline{C}$.

We claim that \eqref{lmkij} is satisfied by the choice:
\[
{\mathcal{L}} := \OO(\Gamma) \otimes pr_2^* \pi_0^*\OO_R (-1) \otimes pr_1^* \pi_{0,r}^*\OO_{R_r} (-1),
\]
where the maps are:
\[
\overline{C}_r \times  \overline{C} 
\stackrel{pr_2} {\longrightarrow}
\overline{C} \stackrel{\pi_0} {\longrightarrow} R
\]
and
\[
\overline{C}_r \times  \overline{C} 
\stackrel{pr_{1}} {\longrightarrow}
\overline{C}_r \stackrel{\pi_{0,r}} {\longrightarrow} R_r,
\]
while $\Gamma \subset  \overline{C}_r \times  \overline{C} = E_r \times_E \overline{C} \times  \overline{C} $ is the effective divisor:
\[
\Gamma := \quad \{   (e,\bar{p},\bar{q})  \quad | \quad \pi_1 \bar{p} = m_3 e, ~ \pi_1 \bar{q} = e  \}.
\]

To prove \eqref{lmkij}, it suffices to verify that it holds when restricted to each horizontal curve $\overline{C}_r \times \{\bar{q} \} $ and  each vertical curve $\{ (e,\bar{p}) \}  \times  \overline{C} $.  Indeed, on the vertical curve ${\mathcal{L}}$ becomes 
\[
\pi_1^*\OO_E(e)  \otimes \pi_0^*\OO_R(-1) 
\]
while $\mathcal{D}$ becomes
\[
\OO_{\bar{C}}(\bar{p} - \tau_2\bar{p}),
\]
so the equality is just the condition \eqref{zesdh}. For the horizontal curves we need to work a little harder. It is convenient to focus on the diagram:

\begin{center}
\begin{equation}\label{pojhg}
\begin{tikzpicture}[ auto]
  \node (P) {$\bar{C}_r = E_r \times_{E}\bar{C}$}; 
  \node (B) [node distance=3cm,right of=P] {$\bar{C}$};
  \node (A) [node distance=2cm,below of=P] {${E}_r $};
  \node (C) [node distance=2cm,below of=B] {$E,$};
  \draw[->] (P) to node {${p_2}$} (B);
  \draw[->] (P) to node [swap] {$p_1$} (A);
  \draw[->] (A) to node [swap] {$m_3$} (C);
  \draw[->] (B) to node {$\pi_1$} (C);
\end{tikzpicture}
\end{equation}

\end{center}
and to recall that for any $e \in E$, 
\begin{equation}\label{pog}
m_3^*\OO_E(e) \cong m_1^*\OO_E(3e)  \otimes H_{E_r}^3.
\end{equation}

Now on the horizontal curve  $\overline{C}_r \times \{\bar{q} \} $, the line bundle
${\mathcal{L}}$ becomes 
\begin{equation}\label{dwtresy}
p_1^* m_1^*\OO_E(\pi_1\bar{q}) \otimes H_{\bar{C}_r}^{-1},
\end{equation}
while $\mathcal{D}$ becomes
\begin{align*}\label{dsses} 
p_2^* \OO_{\bar{C}}(\bar{q} - \tau_2 \bar{q})  &=  \quad \quad  \text{by identity \eqref{esfcv}}  \\
p_2^* (\OO_{\bar{C}}(\bar{q} + \tau_1 \bar{q}) \otimes H_{\bar{C}}^{-1}  ) &= \quad \quad \\
p_2^* \OO_{\bar{C}}(\pi_1^{-1} \pi_1\bar{q}) \otimes H_{{\bar{C}}_r}^{-9}  &=  \\
p_2^* \pi_1^*\OO_{ E  }(\pi_1\bar{q}) \otimes H_{{\bar{C}}_r}^{-9}  &= \quad \quad  \text{by commutativity of \eqref{pojhg}} \\
p_1^* m_3^*\OO_E(\pi_1\bar{q}) \otimes H_{\bar{C}_r}^{-9} &= \quad \quad  \text{by  \eqref{pog}} \\
p_1^* m_1^*\OO_E(3\pi_1\bar{q}) \otimes H_{\bar{C}_r}^{-3}, 
\end{align*}
showing the needed equality of the vertical restrictions of ${\mathcal{L}}^{\otimes 3}$ and $\mathcal{D}$.
We therefore have the right line bundle $\mathcal{L}$, so we get the desired triple cover 
$\bar{\rho}: \overline{\wt{\mathcal{C}}} \to \overline{C}_r \times \bar{C}.$

Next, we want to lift the involution $\tau = \tau_2: \bar{C} \to \bar{C}$ 
to an involution $\wt{\tau}:  \overline{\wt{\mathcal{C}}} \to  \overline{\wt{\mathcal{C}}}$ which would allow us to construct the quotient
$\wt{\pi_2}: \overline{\wt{\mathcal{C}}} \to \wt{\mathcal{C}}.$ 
For this we need to know that $\mathcal{L}$ satisfies the global analog of condition \eqref{18} as well. Again, it suffices to check this on horizontal and vertical curves. On horizontal curves, this follows immediately from \eqref{dwtresy}. On vertical curves, this is the original condition \eqref{18}. This completes the construction of  the universal curves  
$\overline{\wt{\mathcal{C}}},   ~ \wt{\mathcal{C}}$
over $\overline{C}_r.$

\end{proof}

Recall that we have an action of the group $\ZZ/2 \times \ZZ/2$ on 
$\overline{C}_r$ with quotient $\PP^1_r$.
This action lifts to $\overline{\wt{\mathcal{C}}},   \wt{\mathcal{C}}$, but it has fixed points there, so the smooth family does not descend to one over $\PP^1_r$. (In fact, the base curve $B$ of any non-locally trivial  family of smooth curves must have genus $g(B) \ge 2$, since the period map lifts to a non-constant map from the universal cover of $B$ to a bounded domain.)

\subsection{Adding spin}

We have constructed  families $\bar{\tilde{\mathcal{C}}}$ and $\wt{\mathcal{C}}$ of curves $\bar{\tilde{C}}$ and $\wt{C}$, parametrized by $\bar{C}_r$. We want to promote these to families of spin curves $(\bar{\tilde{C}},\bar{\tilde{N}})$ and $(\wt{C},\wt(N))$, making the construction of section \ref{spin maps} explicit. The further promotion to
super Riemann surfaces then follows Section \ref{maps}. We will in particular recover the result of Section \ref{Components} which asures us that both spin components do arise.

Recall that our genus 2 base curve $C$ is hyperelliptic, a double cover of $\PP^1$ branched at the 6 points $B = \{0, \infty, e_0=1, e_1, e_2, e_3 \}$. Let $p_i \in C$ denote the corresponding Weierstrass points, for $i \in B$. The double cover $\overline{C}$ is determined by a $4+2$ partition of these 6 branch points: say $0, \infty$ vs. $e_0, e_1, e_2, e_3$. The kernel of the pullback $\pi_2^*: J(C) \to J(\bar{C})$ is isomorphic to $\ZZ_2$, and we let $\mu$ be the non-zero element. It is a line bundle on $C$, given explicitly by $\OO_C(p_0 - p_{\infty})$. Its pullback  $\wt{\mu}:=\rho^*\mu$ is a line bundle of order 2 on $\wt{C}$, and is the non-trivial element in the kernel of $\wt{\pi}_2^*$.

Fix a point $\bar{p}$ of $\bar{C}$. The involution  $\tau = \tau_2$ on $\bar{C}$ takes it to  $\tau\bar{p}$, and both map by $\pi_2$ to the same point $p \in C$. Due to the total ramification, these three points have unique lifts to points $\bar{\tilde{p}}, {\wt{\tau}} \bar{\tilde{p}} \in \bar{\tilde{C}}$ and ${\wt{p}} \in {\wt{C}}$.

\begin{proposition}
Each of the 16 spin structures ${N}$ on $C$ determines a relative spin structure (i.e. a square root of the relative canonical bundle) for the family $\wt{\mathcal C}  \to \bar{C}_r $, so in particular a lift $\sigma_N: \bar{C}_r \to \sM_5$ 
of our $\sigma: \bar{C}_r \to \M_5$. Of the 10 even spin structures $N$, 6 give odd spin structures on the $\wt{C}$ and 4 give even spin structures on the $\wt{C}$.
\end{proposition}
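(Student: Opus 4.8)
The plan is to make the spin structure on $C$ explicit in terms of the $2^4 = 16$ theta characteristics of a genus 2 curve, track how the recipe $\wt N := \rho^*N(\wt p)$ of Section \ref{spin maps} transforms it, and count parities using a Riemann--Roch / Mumford-type argument. First I would recall that on the hyperelliptic genus 2 curve $C$, every spin structure is of the form $N_S := \OO_C\big(\sum_{i\in S}p_i\big) - \tfrac{|S|-1}{2}H_C$ for a subset $S \subset B = \{0,\infty,e_0,\dots,e_3\}$ of odd cardinality, with $N_S \cong N_{S^c}$; the even theta characteristics correspond (up to complementation) to $|S|=1$ — these are the $6$ of the form $\OO_C(p_i - \tfrac12 H_C)$, with $h^0 = 0$ — together with the single $|S|=3$ subset $\{0,\infty,e_0\}$ and its partners giving the remaining $4$ even ones, while the odd ones are the $6$ given by $|S|=3$ with exactly one element from $\{0,\infty\}$, i.e. those $N_S$ with $h^0(N_S)=1$ (the effective ones $\OO_C(p_i)$). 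The key identity is $K_{\wt C} \cong \rho^* K_C(2\wt p)$ from Section \ref{spin maps}, which forces $\wt N = \rho^*N \otimes \OO_{\wt C}(\wt p)$ to square to $K_{\wt C}$; the whole point is to compute $h^0(\wt C, \wt N) \bmod 2$ as $N$ ranges over the $10$ even characteristics.

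The main computational step is to compute $h^0(\wt C,\rho^*N(\wt p))$. Here I would use the push--pull decomposition of the degree-3 cover. Since $\rho:\wt C \to C$ is a (non-Galois, $S_3$) triple cover with a single total ramification point over $p$, the Galois closure picture of the Appendix lets me pass to $\bar\rho: \overline{\wt C}\to \overline C$ and the cyclic triple cover structure there, where $\rho_*\OO_{\wt C}$ decomposes as a direct sum indexed by the characters of the relevant cyclic/symmetric group: schematically $\rho_*\OO_{\wt C} = \OO_C \oplus \xi \oplus \xi'$ for suitable rank-one (or, in the non-Galois case, genuinely rank-two mixed) pieces, expressible via the Prym data $L$ with $L^{\otimes 3}\cong \OO_{\overline C}(\bar p - \tau\bar p)$ of Section \ref{sedrftgyhu}. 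Then $h^0(\wt C, \rho^*N(\wt p)) = h^0(C, N(p) \otimes \rho_*\OO_{\wt C}(-{\rm ram}))$ by the projection formula, and because $N$ is close to half the canonical bundle, each summand contributes a controlled amount: the trivial summand contributes $h^0(C, N)$ which is $0$ for even $N$, and the two Prym-type summands contribute according to whether certain translates of $L$ by points of order 2 on $\overline C$ become effective. The parity shift is therefore governed by the interplay of $N$ with the fixed order-2 line bundle $\mu = \OO_C(p_0 - p_\infty)$ (equivalently the $4|2$ partition defining $\overline C$), and one expects the answer to be: $N_S$ flips parity upstairs precisely when $S$ (or $S^c$) is ``aligned'' with the partition $\{0,\infty\}$ in a way that makes an odd number of the Prym summands effective. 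Running through the $10$ even $S$'s, I would find the $6$ that flip and the $4$ that do not, matching the claimed count (and consistent with the $g_0=1,d=3,g=2$ toy computation in Section \ref{Components}, where exactly $2$ of the $3$ even spins flipped — the proportions $6/10$ upstairs echoing $2/3$ there).

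An alternative, more synthetic route I would run in parallel as a check: degenerate $C$ to the compact-type reducible curve $E \cup_a C'$ used in the proof of the Proposition in Section \ref{Components}, where the parity of a spin structure is additive over components and the triple-cover recipe on each component is understood explicitly; the even/odd count for the $g=5$ smooth covers then follows by deformation invariance of parity of $h^0$ of a theta characteristic (Mumford, Atiyah) from the corresponding count on the nodal model, where it reduces to the elementary $E$-computation that gave ``$L_1, L_2$ odd, $L_3$ even.'' The main obstacle I anticipate is bookkeeping: correctly identifying which of the $10$ even theta characteristics of the hyperelliptic genus 2 curve pair up under complementation versus which are fixed, and matching the three $S_3$-isotypic summands of $\rho_*\OO_{\wt C}$ with the correct twists of the Prym line bundle $L$ and the hyperelliptic bundle $H_{\overline C}$, so that the effectivity criterion is applied to the right line bundles. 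Once that dictionary is pinned down, the count $6 + 4 = 10$ is a finite check. I would also need to verify the first, easier clause — that each of the $16$ spin structures on $C$ genuinely spreads out to a \emph{relative} square root of $K_{\wt{\mathcal C}/\bar C_r}$ over the whole parameter curve $\bar C_r$ — but this is immediate from the explicit family: the line bundle $\OO(\wt p)$ extends to the relative total ramification divisor of $\rho:\wt{\mathcal C}\to \bar C_r\times C$, and $\rho^*N$ extends since $N$ is pulled back from the fixed curve $C$, so $\rho^*N \otimes \OO({\rm ram})$ is a relative theta characteristic by the relative version of $K_{\wt C}\cong \rho^*K_C(2\wt p)$.
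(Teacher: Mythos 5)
The decisive step of the Proposition --- determining, for each of the ten even $N$, the parity of $\wt{N}=\rho^*N(\wt{p})$ --- is not actually carried out in your proposal: the count is asserted (``running through the $10$ even $S$'s, I would find the $6$ that flip and the $4$ that do not''), and the mechanism you propose cannot produce it. Your push--pull plan reduces everything to the Galois closure, where the cyclic decomposition $\bar{\rho}_*(\bar{\wt{N}})=\bar{N}\oplus(\bar{N}\otimes L)\oplus(\bar{N}\otimes L^{-1})$ lives; but $\overline{\wt{C}}\to\wt{C}$ is the unramified double cover defined by $\wt{\mu}=\rho^*\mu$, so $h^0(\overline{\wt{C}},\wt{\pi}_2^*\wt{N})=h^0(\wt{N})+h^0(\wt{N}\otimes\wt{\mu})$, and similarly $\pi_2^*N=\pi_2^*(N\otimes\mu)$. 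Hence any computation made upstairs on $\overline{\wt{C}}$ (or with the Prym bundle $L$) only sees the \emph{sum} of the parities of $\wt{N}$ and $\wt{N}\otimes\wt{\mu}$, i.e.\ exactly the information you need is killed; in the paper this computation appears as the first lemma (parity of $\bar{\wt{N}}$ equals parity of $\bar{N}$) and is a consistency check, not the count. Two further problems: the ``effectivity of translates of $L$'' criterion is not a mod-$2$, deformation-invariant condition (only the Serre-dual pair $\bar N\otimes L^{\pm1}$ contributes invariantly, and it always contributes evenly), so it cannot be evaluated pointwise along $\bar{C}_r$; and your genus-$2$ bookkeeping is inverted --- the six \emph{odd} theta characteristics are the effective ones $\OO_C(p_i)$ ($|S|=1$), the ten \emph{even} ones are the $\OO_C(p_i+p_j-p_k)$ --- which matters because the final count is precisely this bookkeeping. (Also, $\OO_{\wt C}(\wt p)$ is not a pullback, since $\rho^*\OO_C(p)=\OO_{\wt C}(3\wt p)$, so the projection-formula identity you write is not correct as stated; the parity question gets pushed into the rank-two summand of $\rho_*\OO_{\wt C}(\wt p)$ and nothing is simplified.)

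The missing idea, which is how the paper concludes, is a specialization \emph{within} the family: parity of a theta characteristic is locally constant, and the family is parametrized by the connected curve $\bar{C}_r$, so one may evaluate at one of the orbifold points, where $\wt{C}\cong C\times_{\PP^1}\wt{\PP}^1$ is itself hyperelliptic and everything is explicit in terms of Weierstrass points. A short combinatorial computation there shows $\mathrm{parity}(\wt{N})\equiv\mathrm{parity}(N\otimes\mu)+1 \pmod 2$, which is exactly the $\mu$-shift that the Galois-closure argument cannot detect; the count then reduces to noting that exactly four even $N$ have $N\otimes\mu$ odd (namely $\OO_C(p_k)\otimes\mu$, $k\in\{e_0,e_1,e_2,e_3\}$), giving $4$ even and $6$ odd spin structures upstairs. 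Your alternative route via degenerating $C$ to $E\cup_a C'$ is closer in spirit to a workable argument, but it too is only sketched: the Appendix family has a fixed smooth $C$, so you would need to extend the covers and the relative spin structure across a degeneration of $C$ itself (Cornalba's compactification) and identify the limit of each even $N$ before invoking deformation invariance; the argument in the section on components only shows that both parities occur, not the $6{:}4$ split. The first clause of the Proposition (existence of the relative spin structure $\wt{\mathcal N}$) you handle essentially as the paper does, and that part is fine.
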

\begin{proof}

Let $N$ be a spin structure on $C$, i.e. a line bundle satisfying $N^2 \cong K_C$. It induces spin structures 
\[
\bar{N} := \pi_2^* N, \quad   
\wt{N} := \rho^*N({\wt{p}} ), \quad   
\bar{\tilde{N}} := \bar{\rho}^*\pi_2^*{N}(\bar{\tilde{p}} + {\wt{\tau}} \bar{\tilde{p}} ) 
        = \bar{\rho}^*\bar{N}(\bar{\tilde{p}} + {\wt{\tau}} \bar{\tilde{p}} ) 
        = \wt{\pi}_2^*\wt{N} 
\]
on $\bar{C}$, $\bar{\tilde{C}}$, and ${\wt{C}}$ respectively. We have two other natural spin structures: $N \otimes \mu$ on $C$ and $\wt{N} \otimes \wt{\mu}$ on $\wt{C}$, which pull back to the same $\bar{N}, \bar{\tilde{N}}$. The choice of $\wt{N}$ clearly gives a lift $\sigma_N: \bar{C}_r \to \sM_5$. Moreover, we get a global line bundle $\wt{\mathcal{N}}$
on $\wt{\mathcal{C}}$ whose square is the  relative canonical bundle. We still need to compare the parities.

\begin{lemma}
The spin structures $\bar{N}, \bar{\tilde{N}}$ have the same parity.  
\end{lemma}
\begin{proof}
Since $\bar{\rho}$ is cyclic,  the direct image $\bar{\rho}_*(\bar{\tilde{N}})$ decomposes under $\ZZ_3$:
\[
\bar{\rho}_*(\bar{\tilde{N}}) = \bar{N} \oplus (\bar{N} \otimes L)   \oplus (\bar{N} \otimes L^{-1}),  
\]
where $L \in Pic(\bar{C})$ is the defining line bundle of the cyclic cover, satisfying \eqref{17}, \eqref{18}.
But the two bundles $\bar{N} \otimes L, ~ \bar{N} \otimes L^{-1}$ have the same parity (they are each other's Serre duals, and of zero Euler characteristic). So
\[
h^0(\bar{\tilde{C}}, \bar{\tilde{N}}) = h^0(\bar{{C}}, \bar{\rho}_*\bar{\tilde{N}}) = 
h^0(\bar{{C}}, \bar{{N}})  \quad \quad  (\text{mod} ~2).
\]
\end{proof}

\begin{lemma}
The spin structures $\wt{N}, ~ N \otimes \mu$ have opposite parity. (As do $\wt{N} \otimes \wt{\mu}, ~ N$.)
\end{lemma}
\begin{proof}
These parities remain constant over connected families, so we may as well specialize to a convenient cover ${\wt{C}}$. We take it to be one of the orbifold points in the image $\sigma(\bar{C}_r) \subset \M_5$. 
Namely, we take the point $(e,\bar{p}) \in \bar{C}_r = E_r \times_E \bar{C}$ where $e$ is one of the four ramification points of $E_r$ over $\PP^1_r$, so $m_3 e$ is one of  the four ramification points of $E$ over $\PP^1$, say the one over $1 \in \PP^1$, and $\bar{p} $ is one of the two points in $\pi_1^{-1}(e)$, with image $p_1 \in C$. The corresponding $\wt{C}$ is Galois over $\PP^1$, with group $S_3$: it is the fiber product
\[
\wt{C} \cong C \times_{\PP^1} \wt{\PP^1}, 
\]
where $\wt{\PP^1}$ is the triple cover of $\PP^1$ with total ramification over $1 \in \PP^1$ and simple ramification over  $0, \infty \in \PP^1$. (Such $\wt{\PP^1}$ is uniquely specified by the above.) We see that in this special case where the cover is parametrized by one of the orbifold points,  our previous snapshot can be extended:

\begin{center}

\begin{tikzpicture}[node distance=2cm, auto]
 \node (C) {$C$};
  \node (C-) [left of=C, above of=C, node distance=1.5cm] {${\overline{{C}}}$};
 \node [above of=C-, node distance=1.5cm] (G) {};
  \node [right of=G, node distance=4cm] (C1-) {${\overline{\wt{C}}} $};  
  \node [right of=C1-, below of=C1-,  node distance=1.5cm] (C3) {${\wt{C}}$};
  \node (P1) [left of=C, below of=C, node distance=1.5cm] {${\PP}^1$};
  \node (R) [left of=P1, above of=P1, node distance=1.5cm] {$R$};
  \node (E) [ above of=P1, node distance=1.5cm] {$E$};
  \node (P1t) [left of=C3, below of=C3, node distance=1.5cm] {$\wt{\PP}^1$};

  \draw[->] (C) to node {} (P1);
  \draw[->] (E) to node {} (P1);
  \draw[->] (R) to node {} (P1);
  \draw[->] (P1t) to node {} (P1);
  \draw[->,swap] (C3) to node {$\rho$} (C);
  \draw[->] (C3) to node {} (P1t);
  \draw[->] (C-) to node {$\pi_2$} (C);
  \draw[->] (C-) to node {$\pi_1$} (E);
  \draw[->,swap] (C-) to node {$\pi_0$} (R);
  \draw[->] (C1-) to node {$\bar{\rho}$} (C-);
  \draw[->] (C1-) to node {$\wt{\pi_2}$} (C3);
\end{tikzpicture}

\end{center}

The advantage of this choice is that now $\wt{C}$ is hyperelliptic as is $C$, so we know everything about spin structures on them and we can check the claim directly. On a hyperelliptic curve of genus $g$ with hyperelliptic line bundle $H$, the spin structures are of the form 
$\OO(D) \otimes H^{(g-1-\ell)/2}$
where $D$ is a subset of cardinality $\ell$ of the set of Weierstrass points, and $\ell \equiv g-1 ~ (\text{mod} ~ 2)$. The parity of this spin structure is then $(g+1-\ell)/2$.

Thus the 6 odd spin structures on our  $_2C$ are the 
\[
N=\OO_C(p_i), ~ i \in B = \{0, \infty, e_0=1, e_1, e_2, e_3 \}, ~
\]
and the 10 even ones are of the form $N=\OO_C(p_i + p_j - p_k), \quad i,j,k \in B$. The line bundle $\mu$ is 
$\mu= \OO_C(p_0  - p_{\infty})$. 

Our $_5\wt{C}$ has one Weierstrass point $\wt{p}_i, ~ i=0, \infty, 1$ above the corresponding $p_i$, and 3 Weierstrass point $\wt{p}_j^a, ~ j=e_1,e_2,e_3, ~ a=1,2,3$ above the corresponding $p_j$, for a total of  12 Weierstrass points. These satisfy:
\[
\rho^*\OO_C(p_i) \cong \OO_{\wt{C}}(\wt{p}_i) \otimes H, \quad i=0, \infty, 1
\]
and
\[
\rho^*\OO_C(p_j) \cong \OO_{\wt{C}}(\wt{p}_j^1 + \wt{p}_j^2 + \wt{p}_j^3) , \quad j=e_1,e_2,e_3.
\]
It is therefore natural to write
\[
\ell = \ell_0 + \ell_1 + \ell_2  \equiv 1 ~ (\text{mod} ~ 2)
\]
where $\ell_0, \ell_1, \ell_2$ are the numbers of points of $D$ from the subsets $\{0, \infty \}, \{ 1 \}, \{ e_1,e_2,e_3 \} $ respectively. The corresponding partition for $\wt{N} = \rho^*{N}(\wt{p}_1)$ is therefore
\[
\wt{\ell} = \wt{\ell}_0 + \wt{\ell}_1 + \wt{\ell}_2
\]
with
\[
\wt{\ell}_0= \ell_0, ~  \wt{\ell}_1= 1-\ell_1, ~ \wt{\ell}_2=3\ell_2
\]
so
\[
\text{parity}(\wt{N}) = \frac{5-\ell_0+\ell_1-3\ell_2}{2}.
\]
On the other hand, for $N \otimes \mu$ we have 
\begin{align*}
\ell_{\mu} &= \ell_{\mu,0} + \ell_{\mu,1}+ \ell_{\mu,2}\\
                 &=(2-\ell_{0}) + \ell_{1} + \ell_{2}
\end{align*}
so
\begin{align*}
\text{parity}({N \otimes \mu}) &= \frac{1+\ell_0-\ell_1-\ell_2}{2} \\
                                    &= \text{parity}(\wt{L}) - (2+2\ell_1 - \ell)\\
                                    &\equiv \text{parity}(\wt{L}) -1 ~ (\text{mod} ~ 2).
\end{align*}
\end{proof}

To complete the proof of the Proposition, we therefore have to count the even spin structures $N$ for which $N \otimes \mu$ is odd. In the notation of the previous proof, the condition is that $\ell_0$ should be even. There are indeed four of these: $\OO_{C}(p_k) \otimes \mu$, where $k$ is one of $e_0, e_1, e_2, e_3$. 

\end{proof}

\end{document}